\renewcommand{\phi}{\varphi}
\newcommand{\eps}{\epsilon}
\newcommand{\hide}[1]{ }
\renewcommand{\mathbf}{\bm}
\theoremstyle{plain}
\renewcommand{\include}{\input}
\newcommand{\opt}[2]{\left.OPT(#1) \right|_{#2}}
\newcommand{\optsin}[1]{OPT(#1)}
\newcommand{\solsin}[1]{SOL(#1)}
\newcommand{\alg}[2]{\left.ALG(#1) \right|_{#2}}
\newcommand{\sol}[2]{\left.SOL(#1) \right|_{#2}}
\newcommand{\conslaalphai}{1 - \frac{5\sqrt{\rho_i}}{\eps^2}}
\newcolumntype{Y}{>{\centering\arraybackslash}X}
\theoremstyle{plain}
\newtheorem{thm1}{Theorem}[section]
\theoremstyle{remark}
\newtheorem{pthm1}[thm1]{Theorem}
\theoremstyle{plain}
\newtheorem{lem1}[thm1]{Lemma}
\theoremstyle{plain}
\newtheorem{obs1}[thm1]{Observation}
\theoremstyle{plain}
\newtheorem{inv1}[thm1]{Invariant}
\theoremstyle{plain}
\newtheorem{cor1}[thm1]{Corollary}
\theoremstyle{definition}
\newtheorem{defn1}[thm1]{Definition}
\theoremstyle{plain}
\newtheorem{fact1}[thm1]{Fact}
\theoremstyle{remark}
\newtheorem{rem1}[thm1]{Remark}
\theoremstyle{plain}
\newtheorem{prop1}[thm1]{Proposition}
\theoremstyle{plain}
\newtheorem{asmp1}[thm1]{Assumption}
\newenvironment{proof}[1][\protect\proofname]{\par
\normalfont\topsep6\p@\@plus6\p@\relax
\trivlist
\itemindent\parindent
\item[\hskip\labelsep\scshape #1]\ignorespaces
}{%
\endtrivlist\@endpefalse
}
\providecommand{\proofname}{Proof}
\def\special{1}
\def\specialproof{1}
\def\specialdefinition{1}
\def\specialremark{1}
\def\highlight{0}
\def\sidebar{0}
\def\colorequations{0}
\newenvironment{theorem}[1][]{%
\begin{thm1}[#1]%
}{\end{thm1}%
}
\newenvironment{lemma}[1][]{%
\begin{lem1}[#1]%
}{\end{lem1}%
}
\newenvironment{observation}[1][]{%
\begin{obs1}[#1]%
}{\end{obs1}%
}
\newenvironment{inv}[1][]{%
\begin{inv1}[#1]%
}{\end{inv1}%
}
\newenvironment{fact}[1][]{%
\begin{fact1}[#1]%
}{\end{fact1}%
}
\newenvironment{remark}[1][]{%
\begin{rem1}[#1]%
}{\end{rem1}%
}
\newenvironment{pthm}[1][]{%
\begin{pthm1}[#1]%
}{\end{pthm1}%
}
\newenvironment{corollary}[1][]{%
\begin{cor1}[#1]%
}{\end{cor1}%
}
\newenvironment{definition}[1][]{%
\begin{defn1}[#1]%
}{\end{defn1}%
}
\newenvironment{asmp}[1][]{%
\begin{asmp1}[#1]%
}{\end{asmp1}%
}
\newenvironment{proposition}[1][]{%
\begin{prop1}[#1]%
}{\end{prop1}
}%
        \renewenvironment{theorem}[1][]{%
        \begin{mdframed}[nobreak=false,backgroundcolor=Aquamarine!60]\begin{thm1}[#1]%
        }{\end{thm1}\end{mdframed}%
        }
        \renewenvironment{lemma}[1][]{%
        \begin{mdframed}[nobreak=false,backgroundcolor=YellowGreen!60]\begin{lem1}[#1]%
        }{\end{lem1}\end{mdframed}%
        }
        \renewenvironment{observation}[1][]{%
        \begin{mdframed}[nobreak=false,backgroundcolor=Salmon!60]\begin{obs1}[#1]%
        }{\end{obs1}\end{mdframed}%
        }
        \renewenvironment{fact}[1][]{%
        \begin{mdframed}[nobreak=false,backgroundcolor=Salmon!60]\begin{fact1}[#1]%
        }{\end{fact1}\end{mdframed}%
        }
        \renewenvironment{proposition}[1][]{%
        \begin{mdframed}[backgroundcolor=Goldenrod!60]\begin{prop1}[#1]%
        }{\end{prop1}\end{mdframed}%
        }
        \renewenvironment{corollary}[1][]{%
        \begin{mdframed}[backgroundcolor=Mulberry!60]\begin{cor1}[#1]%
        }{\end{cor1}\end{mdframed}%
        }
\let\expandafter\oldproof\csname\string\proof\endcsname
        \let\oldendproof\endproof
        \renewenvironment{proof}[1][\proofname]{%
        \begin{mdframed}[nobreak=false,backgroundcolor=lightgray!60]\oldproof[#1]%
        }{\oldendproof\end{mdframed}}
        \renewenvironment{definition}[1][]{%
        \begin{mdframed}[innerbottommargin=0.1cm,innertopmargin=0.1cm,backgroundcolor=Apricot!60]\begin{defn1}[#1]%
        }{\end{defn1}\end{mdframed}%
        }
        \renewenvironment{remark}[1][]{%
        \begin{mdframed}[backgroundcolor=Salmon!60]\begin{rem1}[#1]%
        }{\end{rem1}\end{mdframed}%
        }
\title{Multi Layer Peeling for Linear Arrangement and Hierarchical Clustering}
\author{Yossi Azar%
\thanks{School of Computer Science, Tel-Aviv University. Email: azar@tau.ac.il. Supported in part by the Israel Science Foundation (grant No. 2304/20).}
\and
Danny Vainstein%
\thanks{School of Computer Science, Tel-Aviv University. Email: dannyvainstein@gmail.com.}
}
\begin{document}

\maketitle

\begin{abstract}
We present a new multi-layer peeling technique to cluster points in a metric space. A well-known non-parametric objective is to embed the metric space into a simpler structured metric space such as a line (i.e., Linear Arrangement) or a binary tree (i.e., Hierarchical Clustering). Points which are close in the metric space should be mapped to close points/leaves in the line/tree; similarly, points which are far in the metric space should be far in the line or on the tree. In particular we consider the Maximum Linear Arrangement problem \cite{Approximation_algorithms_for_maximum_linear_arrangement} and the Maximum Hierarchical Clustering problem \cite{Hierarchical_Clustering:_Objective_Functions_and_Algorithms} applied to metrics.

We design approximation schemes ($1 - \epsilon$ approximation for any constant $\epsilon > 0$) for these objectives. In particular this shows that by considering metrics one may significantly improve former approximations ($0.5$ for Max Linear Arrangement and $0.74$ for Max Hierarchical Clustering). Our main technique, which is called multi-layer peeling, consists of recursively peeling off points which are far from the "core" of the metric space. The recursion ends once the core becomes a sufficiently densely weighted metric space (i.e. the average distance is at least a constant times the diameter) or once it becomes negligible with respect to its inner contribution to the objective. Interestingly, the algorithm in the Linear Arrangement case is much more involved than that in the Hierarchical Clustering case, and uses a significantly more delicate peeling.
\end{abstract}

\section{Introduction}

Unsupervised learning plays a major role in the field of machine learning. Arguably the most prominent type of unsupervised learning is done through clustering. Abstractly, in this setting we are given a set of data points with some notion of pairwise relations which is captured via a metric space (such that closer points are more similar). In order to better understand the data, the goal is to embed this space into a simpler structured space while preserving the original pairwise relationships. A prevalent solution in this domain is to build a flat clustering (or partition) of the data (e.g., by using the k-means algorithm). However, these types of solutions ultimately fail to capture all pairwise relations (e.g., intra-cluster relations). To overcome this difficulty, often the metric space is mapped to structures that may capture all pairwise relations - in our case into a Linear Arrangement (LA) or a Hierarchical Clustering (HC).

The idea of embedding spaces by using a Linear Arrangement or Hierarchical Clustering structure is not new. These types of solutions have been extensively used in practice (e.g., see \cite{Batch_Active_Learning_at_Scale, Scaling_Hierarchical_Agglomerative_Clustering_to_Billion_sized_Datasets, Distributed_Balanced_Partitioning_via_Linear_Embedding, Affinity_Clustering_Hierarchical_Clustering_at_Scale, Hierarchical_Clustering_of_Data_Streams_Scalable_Algorithms_and_Approximation_Guarantees}) and have also been extensively researched from a theoretical point of view (e.g., see \cite{a_cost_function_for_similarity-based_hierarchical_clustering, Hierarchical_Clustering:_Objective_Functions_and_Algorithms, Approximation_Bounds_for_Hierarchical_Clustering:_Average_Linkage, spreading_metrics_for_vertex_ordering_problems, An_improved_approximation_ratio_for_the_minimum_linear_arrangement_problem, Approximation_algorithms_for_maximum_linear_arrangement}). Notably, the Linear Arrangement type objectives were first considered by Hansen \cite{Hansen89} who considered the embedding of graphs into 2-dimensional and higher planes. On the other hand, the study of Hierarchical Clustering type objectives was initiated by Dasgupta \cite{a_cost_function_for_similarity-based_hierarchical_clustering} - spurring a fruitful line of work resulting in many novel algorithms. In practice, more often than not, the data considered adheres to the triangle inequality (in particular guaranteeing that if point $a$ is similar, equivalently close, to points $b$ and $c$ then so are $b$ and $c$) and thus may be captured by a metric (e.g., see \cite{Hierarchical_Clustering_for_Euclidean_Data,Objective_Based_Hierarchical_Clustering_of_Deep_Embedding_Vectors,Hierarchical_Clustering_of_Data_Streams_Scalable_Algorithms_and_Approximation_Guarantees})

\noindent The first objective we consider is the Max Linear Arrangement objective. 

\begin{definition}
Let $G = (V,w)$ denote a metric (specifically, $w$ satisfies the triangle inequality) with $|V| = n$. In the \textbf{Max Linear Arrangement problem} our goal is to return a 1-1 mapping $y : V \rightarrow [n]$ so as to maximize $\sum_{i,j} w_{i,j} y_{i,j}$, where $y_{i,j} = |y_i - y_j|$. 
\end{definition}

\noindent The second objective we consider is the Max Hierarchical Clustering objective.

\begin{definition}
Let $G = (V,w)$ denote a metric (specifically, $w$ satisfies the triangle inequality). In the \textbf{Max Hierarchical Clustering problem} our goal is to return a binary HC tree $T$ such that its leaves are in a 1-1 correspondence with $V$. Furthermore, we would like to return $T$ so as to maximize $\sum_{i,j} w_{i,j} |T_{i,j}|$, where $T_{ij}$ is the subtree rooted at the lowest-common-ancestor of the leaves $i$ and $j$ in the Hierarchical Clustering tree $T$ and $|T_{i,j}|$ is the number of leaves in $T_{i,j}$.
\end{definition}


These objectives were first considered by Hassin and Rubinstein \cite{Approximation_algorithms_for_maximum_linear_arrangement} and  Cohen-Addad et al. \cite{Hierarchical_Clustering:_Objective_Functions_and_Algorithms} (respectively) with respect to the non-metric case. For these (non-metric) objectives the best known approximation ratios are $0.5$ for the Linear Arrangement objective \cite{Approximation_algorithms_for_maximum_linear_arrangement} and $0.74$ for the Hierarchical Clustering objective \cite{Objective_Based_Hierarchical_Clustering_of_Deep_Embedding_Vectors}). The former was achieved by was achieved by bisecting the data points randomly and thereafter greedily arranging each set and the latter was achieved by approximating the Balanced Max-2-SAT problem.



As stated earlier, more often than not, the data considered in practical applications adheres to the triangle inequality. Therefore, our results' merits are two fold. First, we offer a generalized framework to tackle these types of embedding objectives. Second, our results show that by applying this natural assumption we may significantly improve former best known approximations (from 0.5 (LA) and 0.74 (HC) to $1 - \epsilon$ for any constant $\epsilon > 0$).

\paragraph{Our Results.}
We provide the following results.
\begin{itemize}
    \item We design a general framework in order to tackle the embedding of metric spaces into simpler structured spaces (see Algorithm \ref{alg.general_alg}). We then concretely apply our framework to both the Linear Arrangement and Hierarchical Clustering settings. For an extended discussion see Our Techniques.
    
    \item We apply our framework to the Linear Arrangement case. In this case we prove that our applied algorithm (\ref{alg.la_alg}) is an EPRAS (see Definition \ref{def.pras}) - i.e., for any constant $\epsilon > 0$ it yields a $1 - \epsilon$ approximation. 

    \item We apply our framework to the Hierarchical Clustering case. In this case we prove that our applied algorithm (\ref{alg.hc_alg}) is an EPRAS (see Definition \ref{def.pras}) - i.e., for any constant $\epsilon > 0$ it yields a $1 - \epsilon$ approximation.
    
\end{itemize}

\paragraph{Our Techniques.}

Our generic multi-layer peeling approach appears in Algorithm \ref{alg.general_alg}. We begin by checking whether the metric space is sufficiently densely weighted (i.e., whether the average distance is at least a constant times the diameter, or equivalently the metric's weighted density (see Definition \ref{def.density_diameter_size}) is constant). 
If this is the case then we apply a specific algorithm that handles such instances. In the LA case we devise our own algorithm (see Algorithm \ref{alg.la_alg_dense}). Algorithm \ref{alg.la_alg_dense} leverages the General Graph Partitioning algorithm of Goldreich et al. \cite{Property_Testing_and_its_Connection_to_Learning_and_Approximation} in order to “guess” an optimal graph partition that induces an almost optimal linear arrangement. In the HC case we leverage the work of Vainstein et al. \cite{Hierarchical_Clustering_via_Sketches_and_Hierarchical_Correlation_Clustering}.

If, however, the metric is not sufficiently densely weighted, then we observe that it must contain a \textbf{core} - a subset of nodes containing almost all data points with a diameter significantly smaller than the original metric's. Our general algorithm then peels off data points \textit{far} from the core (in the LA setting) or \textit{not in} the core (in the HC setting). We then embed these peeled off points; by placing them on one of the extreme sides of the line (in the LA setting) or by arranging them in a ladder structure (in the HC case; see Definition \ref{def.ladder}). Thus, we are left with handling the core (in the HC setting) or the extended core (in the LA setting).

Once again we consider two cases - either the total weight within the (extended) core is small enough, in which case we embed the core arbitrarily. Otherwise, we recurse on the instance induced by these data points. We claim that in every recursion step the density of the (extended) core increases significantly until eventually the recursion ends either when the (extended) core is sufficiently densely weighted or the total weight within the (extended) core is small enough. 

Our proof is based on several claims. First, we consider the metric’s (extended) core compared to the peeled off layer. 
Since our algorithm embeds the two sets separately, we need to bound the resulting loss in objective value. We show that the weights within the peeled off layer contribute negligibly towards the objective while the weights between the peeled off layer and the (extended) core, contribute significantly. Hence, it makes sense then to peel off this layer in order to maximize the gain in objective value.


While the aforementioned is enough to bound the loss in a single recursion step, it is not enough. The number of recursion steps may not be constant which, in principle, may cause a blow up of the error. Nevertheless, we show that the error in each level is bounded by a geometric sequence and hence is dominated by the error of the deepest recursion step. Consequently, we manage to upper bound the total accumulated error by a constant that we may take to be as small as we wish.


While at large this describes our proof techniques, the algorithm and analysis of LA objective is a bit more nuanced as we will be considering 3 sets: the metric’s core, the peeled off layer, and any remaining points which together with the core are labeled as the extended core. In this case, to be able to justify peeling off a layer, we must choose the layer more aggressively. Specifically, we define this layer as points that are sufficiently far from the core (rather than any point outside the core, as in the HC case). Fortunately, this defined layer (see Algorithm \ref{alg.la_alg}) fits our criteria (of our general algorithm, Algorithm \ref{alg.general_alg}). 


\paragraph{Related Work.} 
While the concept of hierarchical clustering has been around for a long time, the HC objective is relatively recent. In their seminal work, Dasgupta \cite{a_cost_function_for_similarity-based_hierarchical_clustering} considered the problem of HC from an optimization view point. Thereafter, Cohen-Addad et al. \cite{Hierarchical_Clustering:_Objective_Functions_and_Algorithms} were the first to consider the objective we use in our manuscript. In their work they showed that the well known Average-Linkage algorithm yields an approximation of $\frac23$. Subsequently, Charikar et al. \cite{Hierarchical_Clustering_better_than_Average_Linkage} improved upon this result through the use of semidefinite programming - resulting in a 0.6671 approximation. Finally, Naumov et al. \cite{Objective_Based_Hierarchical_Clustering_of_Deep_Embedding_Vectors} improved this to 0.74 by approximating the Balanced Max-2-SAT problem. 
With respect to the Max LA objective,  Hassin and Rubinstein \cite{Approximation_algorithms_for_maximum_linear_arrangement} were first to consider the problem. Through an approach of bisection and then greedily arranging the points, Hassin and Rubinstein managed to achieve a $0.5$ approximation. We note that the previous mentioned results all hold for arbitrary weights, while our main contribution is showing that by assuming the triangle inequality (i.e., metric-based dissimilarity weights) we may achieve PTAS's for both objectives. We further note that with respect to metric-based dissimilarity weights, specifically an L1 metric, Rajagopalan et al. \cite{Hierarchical_Clustering_of_Data_Streams_Scalable_Algorithms_and_Approximation_Guarantees} proved a 0.9 approximation through the use of random cut trees.

Both objectives have been originally studied with respect to their minimization variants. The minimum LA setting was first considered by Hansen \cite{Hansen89}. Hansen leveraged the work of  Leighton and Rao \cite{Multicommodity_max_flow_min_cut_theorems_and_their_use_in_designing_approximation_algorithms} on balanced separators in order to approximate the minimum linear arrangement objective to facor of $O(\log^2 n)$. Following several works improving upon this result, both Charikar et al. \cite{spreading_metrics_for_vertex_ordering_problems} and Feige and Lee \cite{An_improved_approximation_ratio_for_the_minimum_linear_arrangement_problem} leveraged the novel work of Arora et al. \cite{Expander_flows_geometric_embeddings_and_graph_partitioning} on rounding of semidefinite programs, and combined this with the rounding algorithm of Rao and Reicha \cite{New_Approximation_Techniques_for_Some_Ordering_Problems} in order to show a $O(\sqrt{\log n} \log \log n)$ approximation. For further reading on these are related types of objectives see \cite{Divide_and_Conquer_Approximation_Algorithms_via_Spreading_Metrics, New_Approximation_Techniques_for_Some_Ordering_Problems,Packing_Directed_Circuits_Fractionally, Ordering_Problems_Approximated_Single_Processor_Scheduling_and_Interval_Graph_Completion}.  On the other hand, as mentioned earlier the minimum HC setting was introduced by Dasgupta \cite{a_cost_function_for_similarity-based_hierarchical_clustering} and extensively studied as well (e.g., see \cite{a_cost_function_for_similarity-based_hierarchical_clustering, Hierarchical_Clustering:_Objective_Functions_and_Algorithms, Approximate_Hierarchical_Clustering_via_Sparsest_Cut_and_Spreading_Metrics, Hierarchical_Clustering_better_than_Average_Linkage, Bisect_and_Conquer:_Hierarchical_Clustering_via_Max-Uncut_Bisection, Hierarchical_Clustering:_a_0.585_Revenue_Approximation, Hierarchical_Clustering_via_Sketches_and_Hierarchical_Correlation_Clustering}). 

Most related to our work is that of de la Vega and Kenyon \cite{A_Randomized_Approximation_Scheme_for_Metric_MAX_CUT}. In their work they provide a PTAS for the Max Cut problem given a metric. The algorithm works by first creating a graph of clones (wherein each original vertex is cloned a number of times that is based on its outgoing weight in the original metric) with the property of being dense. It thereafter solves the problem in this new graph by applying the algorithm of de la Vega and Karpinski \cite{Polynomial_time_approximation_of_dense_weighted_instances_of_MAX_CUT}. For our objectives (HC and LA) such an approach seems to fail - specifically due to the fact that our objectives take into consideration the number of nodes in every induced cut and the cloned graph inflates the number of nodes which in turn inflates our objective values. Thus, for our considered types of objectives we need the more intricate process of iterative peeling (and subsequently terminating the process with more suited algorithms that leverage the General Graph Partitioning algorithm of Goldreich et al. \cite{Property_Testing_and_its_Connection_to_Learning_and_Approximation}).
It is worth while mentioning that there has also been an extensive study of closely related objectives with respect to dense instances (e.g. see \cite{How_to_rank_with_few_errors,Polynomial_Time_Approximation_Schemes_for_Dense_Instances_of_NP_Hard_Problems,Linear_time_approximation_schemes_for_the_Gale_Berlekamp_game_and_related_minimization}). However these types of approaches seem to fall short since our considered metrics need not be dense.

\vspace*{-\baselineskip}

\section{Multi-Layer Peeling Framework}
\label{sec.multi_layer_peeling_framework}



\begin{figure}
\centering
\begin{subfigure}{.5\textwidth}
  \centering
  \includegraphics[scale=0.25]{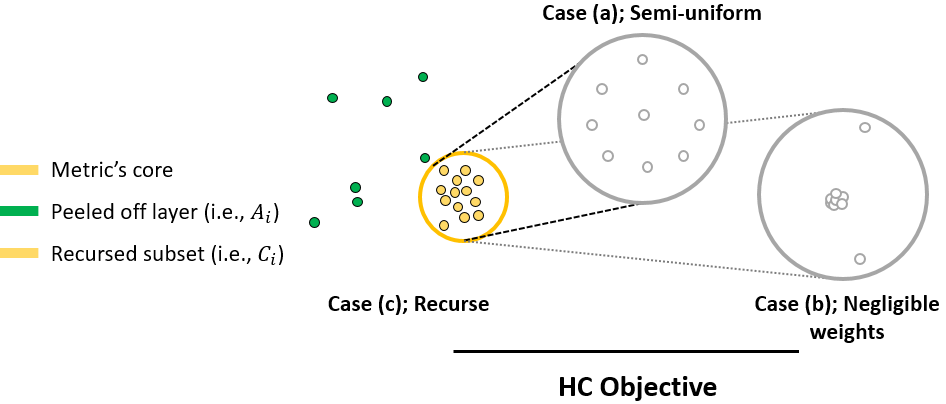}
  \label{fig:sub1}
\end{subfigure}%
\begin{subfigure}{.5\textwidth}
  \centering
  \includegraphics[scale=0.25]{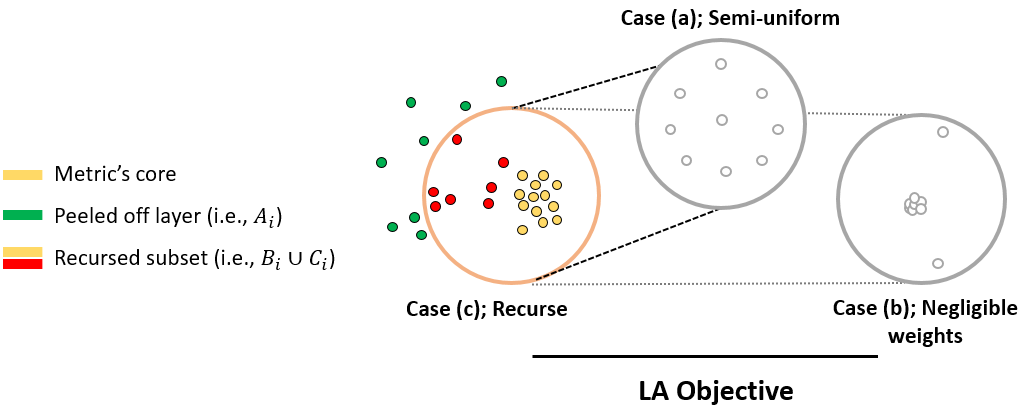}
  \label{fig:sub2}
\end{subfigure}
\caption{A recursion step (case (c)) and the two possible halting steps (cases (a) and (b)). The yellow points define the metric's core. In the HC case we peel off both red and green points in a single step, while in the LA we must be more delicate and only peel off the green points.}
\label{fig:test}
\end{figure}

\noindent Before defining our algorithms we need the following definitions.


\begin{definition}
\label{def.density_diameter_size}
Let $G = (V,w)$ denote a metric and $U \subset V$ denote a subset of its nodes. We introduce the following notations: (1) let $\mathbf{D_U} = \max_{i,j \in U} w_{i,j}$ denote $U$'s \textbf{diameter}, (2) let $\mathbf{W_U} = \sum_{i,j \in U} w_{i,j}$ denote $U$'s \textbf{sum of weights}, (3) let $\mathbf{n_U} = |U|$ denote $U$'s \textbf{size} and (4) let $\mathbf{\rho_U} = \frac{W_U}{n_U^2 D_U}$ denote $U$'s \textbf{weighted density}\footnote{Typically the density is defined with respect to ${n \choose 2}$. For ease of presentation, we chose to define it with respect to $n^2$ - the proofs remain the same using the former definition.}.
\end{definition}

All our algorithms will make use of the following simple yet useful structural lemma that states that for small-density instances there exists a large cluster of nodes with a small diameter. The proof is deferred to the Appendix.

\begin{lemma}
\label{general.lemma.1}
For any metric $G = (V,w)$ there exists a set $U \subset V$ such that $D_U \leq 4D_V \sqrt{\rho_V}$ and $n_U \geq n_V(1 - \sqrt{\rho_V})$.
\end{lemma}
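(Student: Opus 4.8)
The plan is to find a single ``center'' vertex whose total incident weight is small, and to let $U$ be a metric ball of a carefully chosen radius around that vertex. The triangle inequality will then bound $D_U$ in terms of the radius, while a simple averaging (Markov-type) argument will bound the number of vertices that fall outside the ball.

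Concretely, for $v \in V$ I would write $W_v = \sum_{u \in V} w_{u,v}$ for the total weight incident to $v$, so that $\sum_{v \in V} W_v = W_V = \rho_V n_V^2 D_V$. Choosing $c \in \argmin_{v \in V} W_v$ then gives $W_c \le W_V / n_V = \rho_V n_V D_V$. Fixing the radius $r = 2 D_V \sqrt{\rho_V}$, I would set
\[
  U = \{\, v \in V : w_{c,v} \le r \,\}.
\]
The diameter bound is immediate: for $i, j \in U$ the triangle inequality gives $w_{i,j} \le w_{c,i} + w_{c,j} \le 2r = 4 D_V \sqrt{\rho_V}$, so $D_U \le 4 D_V \sqrt{\rho_V}$. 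For the size bound, every $v \in V \setminus U$ has $w_{c,v} > r$ and hence contributes more than $r$ to $W_c$; therefore $|V \setminus U|\cdot r < W_c \le \rho_V n_V D_V$, which gives $|V \setminus U| < \rho_V n_V D_V / r = \tfrac12 \sqrt{\rho_V}\, n_V$, and so $n_U = n_V - |V \setminus U| > n_V\bigl(1 - \tfrac12 \sqrt{\rho_V}\bigr) \ge n_V(1 - \sqrt{\rho_V})$.

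The remaining work is to dispatch the degenerate cases. If $D_V = 0$ the statement holds trivially with $U = V$. If one reads $\subset$ as strict, then $U = V$ can occur only when every vertex lies within $r$ of $c$, which forces $D_V \le 2r$, i.e.\ $\rho_V \ge 1/16$; in that regime the target diameter $4 D_V \sqrt{\rho_V} \ge D_V$ is vacuous and the target size is weak, so one may safely drop a vertex from $U$ (checking the remaining few small-$n_V$ cases directly). I do not anticipate a genuine obstacle; the only delicate point is the tuning of $r$, which has to be simultaneously small enough to keep $2r$ within $4 D_V \sqrt{\rho_V}$ and large enough that only $W_c/r \le \tfrac12\sqrt{\rho_V}\,n_V$ vertices get discarded — the choice $r = 2 D_V \sqrt{\rho_V}$ does both, with room to spare in the size bound. (Alternatively, rather than taking the exact minimizer $c$, one could pick any of the $\ge n_V/2$ vertices with $W_v \le 2W_V/n_V$ guaranteed by Markov's inequality and reach the same conclusion.)
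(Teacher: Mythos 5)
Your proof is correct and is essentially the paper's argument: both take a ball of radius $2D_V\sqrt{\rho_V}$ around a suitable vertex, bound its diameter via the triangle inequality, and bound the discarded vertices by averaging the incident weights $W_v$ (the paper runs the same averaging as a contradiction over all $v$, while you pick the minimizer directly). One small nit: in the paper's convention $W_V$ counts each pair once, so $\sum_v W_v = 2W_V$ and your intermediate bounds weaken by a factor of $2$ to $W_c \le 2\rho_V n_V D_V$ and $|V\setminus U| < \sqrt{\rho_V}\,n_V$ — which still gives $n_U \ge n_V(1-\sqrt{\rho_V})$, exactly the slack you noted.
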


\begin{definition}
\label{def.metric_core}
Given a metric $G = (V,w)$ we denote $U \subset V$ as guaranteed by Lemma \ref{general.lemma.1} as a metric's \textbf{core}.
\end{definition}

Note that the core can be found algorithmicaly simply through brute force (while the core need not be unique, our algorithms will choose one arbitrarily). 

Throughout our paper we  consider different metric-based objectives. In order to solve them, we  apply the same recipe - if the instance is sufficiently densely weighted, apply an algorithm for these types of instances. Otherwise, the algorithm detects the metric's core (which is a small-diameter subset containing almost all nodes) and peel off (and subsequently embed) a layer of data points that are far from the core. The algorithm then considers the core; if it is sufficiently small (in terms of inner weights) then we embed the core arbitrarily and halt. Otherwise, we recurse on the core. Our algorithms for both objectives (LA and HC) will follow the same structure as defined in Algorithm \ref{alg.general_alg}.


\begin{algorithm}[h]
\caption{General Algorithm}
\label{alg.general_alg}
\uIf(\tcp*[f]{case (a)}){the instance is sufficiently densely weighted}{
    Solve it using $ALG_{d-w}$.
}
\Else{
    Let $C$ denote the metric's core (as defined by Definition \ref{def.metric_core}).
    
    Define the layer to peel off $A \subset V \setminus C$ appropriately. 
    
    Embed $A$.
    

    \lIf(\tcp*[f]{case (b)}){$W_{V \setminus A}$ is negligible}{
        Embed $V \setminus A$ arbitrarily and return.
    }
    \lElse(\tcp*[f]{case (c)}){
        Continue recursively on $V \setminus A$
    }
}
\end{algorithm}

We denote by cases (a) and (b) the different cases for which the algorithm may terminate and by case (c) the recursive step. We further denote by $ALG_{d-w}$ an auxiliary algorithm that will handle sufficiently densely weighted instances. (These algorithms will differ according to the different objectives).

Henceforth, given an algorithm $ALG$ and metric $G$ we denote by $ALG(G)$ the algorithm's returned embedding. We note that when clear from context we overload the notation and denote $ALG(G)$ as the embedding's value under the respective objectives. Equivalently, we will use the term $OPT(G)$ for the optimal embedding.

Our different algorithms will be similarly defined and thus so will their analyses. Thus, we introduce a general scheme for analyzing such algorithms. Let $k$ denote the number of recursive calls our algorithm performs. Furthermore, let $G_i$ denote the instance the algorithm is called upon in step $i$ for $i = 0, 1, \ldots, k$. (I.e., $G = G_0$ and $ALG(G_k)$ does not perform a recursive step, meaning that it terminates with case (a) or (b)). We first observe that by applying a simple averaging argument we get the following useful observation.

\begin{observation}
\label{general.ob.1}
If there exist $\alpha_i, \beta_i, \gamma_i > 0$ such that $ALG(G_i) \geq \alpha_i + ALG(G_{i+1})$ and $OPT(G_i) \leq \beta_i + \gamma_i OPT(G_{i+1})$ for all $i = 0, \ldots, k-1$ then 
\[
\frac{ALG(G)}{OPT(G)} \geq 
\frac{\sum_{i=0}^{k-1} \alpha_i + ALG(G_k)}{\sum_{i=0}^{k-1} \big(\beta_i \Pi_{j=0}^{i-1} \gamma_j \big) + (\Pi_{i=0}^{k-1}\gamma_i)OPT(G_k)} \geq
\min\{ \min_i \{\frac{\alpha_i}{\beta_i \Pi_{j=0}^{i-1} \gamma_j}\}, \frac{ALG(G_k)}{(\Pi_{i=0}^{k-1} \gamma_i)OPT(G_k)}\}.
\]
\end{observation}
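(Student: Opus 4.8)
The plan is to unroll the two recurrences separately, plug the resulting bounds into the ratio $ALG(G)/OPT(G)$, and then close with an elementary ``mediant'' inequality. First I would handle the $ALG$ side: iterating $ALG(G_i) \geq \alpha_i + ALG(G_{i+1})$ from $i=0$ to $i=k-1$ (a one-line telescoping induction on $k$) gives $ALG(G) = ALG(G_0) \geq \sum_{i=0}^{k-1}\alpha_i + ALG(G_k)$, which is exactly the numerator of the middle expression.

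Next I would handle the $OPT$ side, which is slightly more delicate because of the multiplicative factors $\gamma_i$. I would prove by induction on $m$ that $OPT(G_0) \leq \sum_{i=0}^{m-1}\beta_i\prod_{j=0}^{i-1}\gamma_j + \big(\prod_{i=0}^{m-1}\gamma_i\big)OPT(G_m)$ for every $0 \leq m \leq k$: the base case $m=0$ is the identity $OPT(G_0) \leq OPT(G_0)$ (empty sum, empty product $=1$), and the inductive step substitutes $OPT(G_m) \leq \beta_m + \gamma_m OPT(G_{m+1})$ into the bound for $m$ and regroups, using $\gamma_i>0$ so that multiplying through by $\prod_{i=0}^{m-1}\gamma_i$ preserves the inequality. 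Taking $m=k$ upper-bounds $OPT(G)=OPT(G_0)$ by the denominator of the middle expression. Since $ALG(G)$, $OPT(G)$, and that denominator are all positive (the denominator because $\beta_0\prod_{j=0}^{-1}\gamma_j = \beta_0>0$ when $k\geq 1$; the case $k=0$ is immediate, the middle expression then being literally $ALG(G)/OPT(G)$), replacing the numerator by a smaller quantity and the denominator by a larger one only decreases the ratio, which yields the first inequality.

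For the second inequality I would invoke the elementary fact that for reals $p_0,\dots,p_k\geq 0$ and $q_0,\dots,q_k>0$ one has $\frac{\sum_\ell p_\ell}{\sum_\ell q_\ell} \geq \min_\ell \frac{p_\ell}{q_\ell}$ (set $\mu=\min_\ell p_\ell/q_\ell$, so $p_\ell\geq \mu q_\ell$ for each $\ell$, and sum). Applying it with $p_i=\alpha_i$, $q_i=\beta_i\prod_{j=0}^{i-1}\gamma_j$ for $i<k$, and $p_k=ALG(G_k)$, $q_k=\big(\prod_{i=0}^{k-1}\gamma_i\big)OPT(G_k)$ gives exactly the claimed bound $\min\{\min_i \tfrac{\alpha_i}{\beta_i\prod_{j=0}^{i-1}\gamma_j},\ \tfrac{ALG(G_k)}{(\prod_{i=0}^{k-1}\gamma_i)OPT(G_k)}\}$. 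The lone case worth a word is $q_k=0$ (i.e.\ $OPT(G_k)=0$): then the last term of the denominator vanishes, $\frac{\sum p_\ell}{\sum q_\ell}\geq \min_{i<k}\frac{p_i}{q_i}$ still holds since $p_k\geq 0$, and the ``$ALG(G_k)/0$'' term is vacuously above everything.

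There is no real obstacle here; the statement is an averaging/bookkeeping argument. The only spot requiring care is keeping the product indices straight in the $OPT$ induction, so that the factor attached to $\beta_i$ comes out exactly as $\prod_{j=0}^{i-1}\gamma_j$, and noting the degenerate cases $k=0$ and $OPT(G_k)=0$ so that the empty sums/products and the ratios are interpreted correctly.
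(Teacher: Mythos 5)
Your proposal is correct and matches the paper's intent: the paper gives no detailed proof, describing the observation as following from "a simple averaging argument," which is precisely your combination of unrolling the two recurrences and applying the mediant (sum-of-numerators over sum-of-denominators versus minimum ratio) inequality. Your extra care with the degenerate cases $k=0$ and $OPT(G_k)=0$ is fine but not needed beyond what the paper implicitly assumes.
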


Thus, in order to analyze a given algorithm, it will be enough to set the values of $\alpha_i$, $\beta_i$ and $\gamma_i$, and further analyze the approximation ratio of $\frac{ALG(G_k)}{OPT(G_k)}$ for the different terminating cases (cases (a) and (b)).

\section{Notations and Preliminaries}
\label{sec.notations_prelims}

We introduce the following notation to ease our presentation later on.

\begin{definition}
Given a metric $G=(V,w)$, a solution $\mathbf{\solsin{G}}$ for the LA objective and disjoints sets $A,B \subset V$ we define: $\mathbf{\sol{G}{A}} = \sum_{i, j \in A} w_{i,j} y_{i,j}$ and $\mathbf{\sol{G}{A,B}} = \sum_{i \in A, j \in B} w_{i,j} y_{i,j}$. For the HC objective the notations are defined symmetrically by replacing $y_{i,j}$ with $|T_{i,j}|$.
\end{definition}

\noindent We will make use of algorithms belonging to the following class of algorithms.

\begin{definition}
\label{def.pras}
An algorithm is considered an Efficient Polytime Randomized Approximation Scheme (EPRAS) if for any $\epsilon > 0$ the algorithm has expected running time of  $f(\frac{1}{\epsilon}) n^{O(1)}$ and approximates the optimal solution's value up to a factor of $1 - \epsilon$.  
\end{definition}


\noindent We will frequently use the following (simple) observations and thus we state them here.

\begin{observation}
\label{general.ob.0}
Given values $\alpha_i \geq 0$, $\alpha \in (0,  \frac{1}{k(k+1)})$ and $k \in \mathbb{N}$ we have: \textbf{(1)} $\Pi_{i}(1 - \alpha_i) \geq 1 - \sum_i \alpha_i$, \textbf{(2)} $1 + k \alpha < \frac{1}{1 - k \alpha} < 1 + (k+1) \alpha$ and \textbf{(3)} $1 + k \alpha < e^{k \alpha} < 1 + (k+1) \alpha$.
\end{observation}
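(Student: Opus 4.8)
The plan is to treat the three inequalities as independent elementary facts. Part (1) is the Weierstrass product inequality and will follow from a one-line induction; parts (2) and (3) are short algebraic/analytic estimates that hinge on the hypothesis $\alpha < \tfrac{1}{k(k+1)}$, which in particular forces $k\alpha < \tfrac{1}{k+1} < 1$, so $1-k\alpha > 0$ and every expression below is well defined. Throughout I use that the $\alpha_i$ relevant to our applications lie in $[0,1]$, so each factor $1-\alpha_i$ is nonnegative.

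For (1) I would induct on the number of factors. With a single factor the two sides are equal. For the step, assume $\prod_{i=1}^{m}(1-\alpha_i) \ge 1 - \sum_{i=1}^{m}\alpha_i$; multiplying this inequality by the nonnegative quantity $1-\alpha_{m+1}$ (which preserves the inequality regardless of the signs of the two sides) gives $\prod_{i=1}^{m+1}(1-\alpha_i) \ge (1-\alpha_{m+1})\bigl(1-\sum_{i=1}^{m}\alpha_i\bigr) = 1 - \sum_{i=1}^{m+1}\alpha_i + \alpha_{m+1}\sum_{i=1}^{m}\alpha_i \ge 1 - \sum_{i=1}^{m+1}\alpha_i$, where the last step drops the nonnegative cross term.

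For (2), set $x = k\alpha \in (0,1)$. The left inequality is just $\frac{1}{1-x} - (1+x) = \frac{1-(1-x^2)}{1-x} = \frac{x^2}{1-x} > 0$. For the right inequality, since $1-x>0$ we may clear the denominator, so $\frac{1}{1-k\alpha} < 1+(k+1)\alpha$ is equivalent to $1 < \bigl(1+(k+1)\alpha\bigr)(1-k\alpha) = 1 + \alpha - k(k+1)\alpha^2$, i.e.\ to $\alpha\bigl(1 - k(k+1)\alpha\bigr) > 0$, i.e.\ to $\alpha < \tfrac{1}{k(k+1)}$ — exactly the hypothesis. For (3) I would combine the standard power-series bounds $1+y < e^y$ (for $y>0$) and $e^y = \sum_{n\ge 0} \frac{y^n}{n!} < \sum_{n\ge 0} y^n = \frac{1}{1-y}$ (strict for $y\in(0,1)$, since $\frac{y^n}{n!} < y^n$ once $n\ge 2$): applying both with $y = k\alpha$ and chaining the upper bound with part (2) yields $1 + k\alpha < e^{k\alpha} < \frac{1}{1-k\alpha} < 1 + (k+1)\alpha$. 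There is no genuine obstacle; the only points requiring a moment of care are that the hypothesis on $\alpha$ guarantees $1-k\alpha>0$ (so the divisions and the geometric-series bound are legitimate) and that part (1) uses $\alpha_i\le 1$ so the peeled-off factor stays nonnegative in the induction.
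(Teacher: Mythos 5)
Your proof is correct, and since the paper states Observation \ref{general.ob.0} without proof (treating it as a simple observation), there is nothing to compare it against; your elementary arguments (induction for the Weierstrass product inequality, clearing the denominator for (2), and chaining $1+y<e^y<\frac{1}{1-y}$ with (2) for (3)) are exactly the standard route one would expect. One point you raise deserves emphasis: part (1) as literally stated, with only $\alpha_i\geq 0$, is false in general (e.g.\ $\alpha_1=\alpha_2=\alpha_3=4$ gives $(1-4)^3=-27<1-12=-11$), so your added hypothesis $\alpha_i\leq 1$, which keeps every factor $1-\alpha_i$ nonnegative in the induction step, is genuinely needed; it is satisfied in all of the paper's applications, where the $\alpha_i$ are of the form $c\sqrt{\rho_j}$ with $\rho_j<\epsilon^2$ and $\epsilon$ small.
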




\noindent The following facts will prove useful in our subsequent proofs and are therefore stated here. 



\begin{fact}
\label{fact.avg_link}
Given a metric $G$, if the optimal linear arrangement under the LA objective is $OPT_{LA}(G)$ and the optimal hierarchical clustering under the HC objective is $OPT_{HC}(G)$ then we have  $OPT_{LA}(G) \geq \frac13 n \sum_{i,j} w_{i,j} y_{i,j}$ and $OPT_{HC}(G) \geq \frac23 n \sum_{i,j} w_{i,j} |T_{i,j}|$.
\end{fact}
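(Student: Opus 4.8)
The plan is to prove both inequalities by the first-moment method: a uniformly random solution already attains the claimed value in expectation, hence so does the optimum, and the triangle inequality plays no role. (On the right-hand sides I read the displayed quantities as $\tfrac13 n \sum_{i,j} w_{i,j}$ and $\tfrac23 n \sum_{i,j} w_{i,j}$, i.e. as $\tfrac13 n W_V$ and $\tfrac23 n W_V$; throughout, $\sum_{i,j}$ denotes a sum over pairs $i\neq j$ and the argument is insensitive to whether it is ordered or unordered.)

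For the Max LA bound, let $y\colon V\to[n]$ be a uniformly random bijection. For any fixed $i\neq j$ the ordered pair $(y_i,y_j)$ is uniform over ordered pairs of distinct elements of $[n]$, so
\[
\E[y_{i,j}] \;=\; \frac{1}{n(n-1)}\sum_{\substack{a,b\in[n]\\ a\neq b}}|a-b| \;=\; \frac{2}{n(n-1)}\sum_{1\le a<b\le n}(b-a) \;=\; \frac{n+1}{3}\;\ge\;\frac n3,
\]
where the last equality uses $\sum_{1\le a<b\le n}(b-a)=\sum_{d=1}^{n-1}d(n-d)=\binom{n+1}{3}$. By linearity of expectation, $\E\!\big[\sum_{i,j}w_{i,j}y_{i,j}\big]=\sum_{i,j}w_{i,j}\,\E[y_{i,j}]\ge\tfrac n3\sum_{i,j}w_{i,j}$, so some arrangement — and in particular $OPT_{LA}(G)$ — is at least this large.

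For the Max HC bound, pick a uniformly random ordering $v_{1},\dots,v_{n}$ of $V$ and let $T$ be the caterpillar tree in which $v_{1}$ is split off first, then $v_{2}$, and so on. For fixed $i\neq j$, writing $P_i,P_j\in[n]$ for the positions of $i,j$ in this ordering, the subtree $T_{i,j}$ consists of exactly the points in positions $\min(P_i,P_j),\min(P_i,P_j)+1,\dots,n$, so $|T_{i,j}|=n-\min(P_i,P_j)+1$. Since $P_i,P_j$ are two distinct uniform elements of $[n]$, the standard order-statistic identity $\E[\min(P_i,P_j)]=\tfrac{n+1}{3}$ yields $\E[|T_{i,j}|]=n+1-\tfrac{n+1}{3}=\tfrac{2(n+1)}{3}\ge\tfrac{2n}{3}$. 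Linearity of expectation again gives $\E\!\big[\sum_{i,j}w_{i,j}|T_{i,j}|\big]\ge\tfrac{2n}{3}\sum_{i,j}w_{i,j}$, so $OPT_{HC}(G)$ is at least this large.

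There is no genuine obstacle here; the only non-immediate ingredients are the two elementary closed forms ($\sum_{a<b}(b-a)$ and $\E[\min]$ of two order statistics out of $[n]$), which are routine. If one wants to avoid probabilistic language, the same bounds follow from the shape-independent identities $\sum_{i\neq j}y_{i,j}=\tfrac{n(n-1)(n+1)}{3}$ for every bijection $y$ and $\sum_{i\neq j}|T_{i,j}|=\tfrac{2n(n-1)(n+1)}{3}$ for every binary HC tree on $n$ leaves (the latter by double counting, summing over all ordered leaf-triples $(i,j,k)$ the indicator that $k$ lies in $T_{i,j}$ and noting that for each triple of distinct leaves exactly two of the three such indicators are $1$, independently of the tree); dividing by $n(n-1)$ shows that for every solution the per-pair average of $y_{i,j}$ (resp.\ $|T_{i,j}|$) equals $\tfrac{n+1}{3}$ (resp.\ $\tfrac{2(n+1)}{3}$), and combining this with the nonnegativity of the weights and the fact that the optimum beats any weighted average finishes the proof.
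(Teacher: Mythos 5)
Your proof is correct. Note that the paper itself does not prove this fact at all: it cites Hassin and Rubinstein for the LA half and Cohen-Addad et al.\ for the HC half, so there is no in-paper argument to compare against. Your reading of the (evidently typo-ridden) right-hand sides as $\tfrac13 nW_V$ and $\tfrac23 nW_V$ is the intended one, consistent with how the fact is invoked elsewhere (e.g.\ ``$OPT \geq \frac13 nW$'' in Proposition \ref{la.prop.2}). The probabilistic averaging argument you give --- a uniformly random bijection for LA, a random-order caterpillar for HC, with $\E[y_{i,j}]=\tfrac{n+1}{3}$ and $\E[|T_{i,j}|]=\tfrac{2(n+1)}{3}$ --- is exactly the standard proof in those references, and your computations check out. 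One small caveat on your closing ``deterministic'' remark: the shape-independent identities $\sum_{i\neq j} y_{i,j}=\tfrac{n(n-1)(n+1)}{3}$ and $\sum_{i\neq j}|T_{i,j}|=\tfrac{2n(n-1)(n+1)}{3}$ are per-solution statements about the \emph{unweighted} pair-average, so by themselves they do not yield $\sum w_{i,j}y_{i,j}\geq \tfrac{n+1}{3}\sum w_{i,j}$ for any particular solution (the weight could sit on pairs with small $y_{i,j}$); one still has to average over solutions, i.e.\ fall back on the probabilistic argument you already gave. Since that argument is complete, this is cosmetic.
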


We note that the HC portion of Fact \ref{fact.avg_link} has been used widely in the literature (e.g., see proof in \cite{Hierarchical_Clustering:_Objective_Functions_and_Algorithms}). The LA portion of Fact \ref{fact.avg_link} is mentioned in Hassin and Rubinstein \cite{Approximation_algorithms_for_maximum_linear_arrangement}. Finally, in the HC section we make use of "ladder" HC trees. We define them here.

\begin{definition}
\label{def.ladder}
We define a "ladder" as an HC tree that cuts a single data point from the rest at every cut (or internal node).
\end{definition}

\section{The Linear Arrangement Objective}
\label{la.sec}

We will outline the section as follows. We begin by presenting our algorithms (first the algorithm that handles case (a) and thereafter the general algorithm). We will then bound the algorithm's approximation guarantee (by following the bounding scheme of Observation \ref{general.ob.1}). Finally, we will analyze the algorithm's running time.

\subsection{Defining the Algorithms}
\label{la.sec.defining_algs}

Here we begin by applying our general algorithm to the linear arrangement problem (which we will denote simply as $ALG$). The algorithm uses, as a subroutine, an algorithm to handle case (a). We denote this subroutine as $ALG_{d-w}$ and define it following the definition of $ALG$.

\subsubsection{Defining $ALG$}

Here we apply our general algorithm (Algorithm \ref{alg.general_alg}) to the linear arrangement setting. In order to do so, roughly speaking, we define the layer to peel off $A$ as the set of all points which are "far" from the metric's core. We also introduce a subroutine to handle densely weighted instances, $ALG_{d-w}$.\\

\begin{algorithm}[H]
\caption{Linear Arrangement Algorithm ($ALG$)}
\label{alg.la_alg}
\lIf(\tcp*[f]{case (a)}){$\rho \geq \epsilon^6$}{
    solve it using $ALG_{d-w}$.
}
\Else{
    Let $C$ denote the metric's core (as defined by Lemma \ref{general.lemma.1}).
    
    Let $A$ denote all data points that are of distance $\geq \epsilon^2 D_V$ from $C$.
    
    Place $A$ to the left of $V \setminus A$. Arrange $A$ arbitrarily.
    
    \lIf(\tcp*[f]{case (b)}){$W_{V \setminus A} < \epsilon W_V$}{
        Arrange $V \setminus A$ arbitrarily and return.
    }
    \lElse(\tcp*[f]{case (c)}){
        Continue recursively on $V \setminus A$.
    }
}
\end{algorithm}

\noindent The set $V \setminus \{A \cup C\}$ will be used frequently in the upcoming proofs and thus we give it its own notation.

\begin{definition}
Denote $B = V \setminus \{A \cup C\}$ where $A$ and $C$ are defined as in Algorithm \ref{alg.la_alg}.
\end{definition}

\subsubsection{Defining $ALG_{d-w}$}
\label{la.section.dense_algorithm}

Here we will introduce an algorithm to handle case (a) type instances. Before formally defining the algorithm, we will first provide some intuition. Towards that end we first introduce the following definition.

\begin{definition}
\label{la.def.P*}
Consider $OPT(G_k)$'s embedding into the line, $[n]$. Partition $[n]$ into $\frac{1}{\epsilon}$ consecutive sets each of size $\epsilon n$ and let $P_i^*$ denote the points embedded by $OPT(G_k)$ into the $i$'th consecutive set. Furthermore, denote by $P^* = \{P_i^*\}$ the induced partition of the metric.
\end{definition}

Later on, we will show that $OPT(G_k)$'s objective value is closely approximated by the value generated solely from inter-partition-set edges (i.e., any $(u,v)$ where $u,v$ lie in different  partition sets of $P^*$). While $OPT(G_k)$ cannot be found algorithmically, assuming the above holds, it is enough for $ALG_{d-w}$ to guess the partition $P^*$. Indeed, that is exactly what we will do, by using the general graph partitioning algorithm of Goldreich et al. \cite{Property_Testing_and_its_Connection_to_Learning_and_Approximation}.

We denote the General Graph Partitioning algorithm of Goldreich et al. \cite{Property_Testing_and_its_Connection_to_Learning_and_Approximation} as $PT(G, \Phi, \epsilon_{err})$. See Definition \ref{general.def.ggr_inequalities} for a definition of $\Phi$ and $\epsilon_{err}$ (these will be defined by $ALG_{d-w}$ as well) and see Theorem \ref{general.thm.ggr} for the tester's guarantees. We are now ready to define our algorithm that handles sufficiently densely weighted instances (Algorithm \ref{alg.la_alg_dense}).

\begin{algorithm}[H]
\caption{LA Algorithm for Sufficiently Densely Weighted Instances ($ALG_{d-w}$)}
\label{alg.la_alg_dense}
Let $k = \frac{1}{\epsilon}$ denote the size of the partition.


\For{$\{\mu_{j,j'}\}_{j\leq k, j' \leq k, j\neq j'} \subset \{i \epsilon^9 n^2 D_V: i \in \mathbb{N} \land i \leq \frac{1}{\epsilon^7}\}$}{
    Let $\Phi = \{\epsilon n, \epsilon n \}_{j = 1}^k \cup \{\mu_{j,j'}, \mu_{j,j'} \}_{j,j' = 1}^k$.
    
    Run $PT(G, \Phi, \epsilon_{err} = \epsilon^9)$. Let $P$ denote the output partition (if succeeded).
    
    Let $\widehat{y}$ denote the linear arrangement obtained from embedding $P$ consecutively on the line (and arbitrarily within the partition sets).
    
    Compute the value $\sum_e w_e \widehat{y_e}$ for $P$.
}
Return the partition with maximum $\sum_e w_e \widehat{y_e}$ value.
\end{algorithm}

\subsection{Analyzing the Approximation Ratio of $ALG$}
\label{la.sec.analyzing_approx}

Now that we have defined $ALG$ we are ready to analyze its approximation ratio. Recall that by Observation \ref{general.ob.1} it is enough to analyze the approximation ratio of cases (a), (b) and the total loss incurred by the recursion steps (i.e., by setting $\alpha_i$, $\beta_i$ and $\gamma_i$).

\subsubsection{Structural Lemmas}

Recall that we defined $k$ to be the number of recursion steps used by $ALG$ and that $G_i$ is the instance that $ALG$ is applied to at recursion step $i$. Further recall that given $G_i$, $ALG(G_i)$ partitioned the instance into $A_i, B_i$ and $C_i$ and that, informally, by Lemma \ref{general.lemma.1} $n_{C_i}$ contains the majority of the data points and $D_{C_i}$ is relatively small compared to $D_{V_i}$. 

By the definition of $C_i$, $A_i$ could be considered as a set of outliers. Therefore, intuitively it makes sense to split $A_i$ from $C_i$. In order to prove our algorithm's approximation ratio we will show that in fact one does not lose too much compared to optimal solution, by splitting $A_i$ from $C_i$. In order to do so we will show that in fact, both the values of $ALG$ and $OPT$ will be roughly equal to $\frac12 n W_{A_i, C_i}$ (which makes sense intuitively since $C_i$ is of low diameter and contains many points and $A_i$ are the points that are far from this cluster). 

The following lemmas consider 2 types of algorithms - algorithms that split $A_i$ and $C_i$ and algorithms that do not. Furthermore, they show that in fact, by the structural properties of $A_i$ and $C_i$, if we consider the values generated by these 2 types of algorithms restricted to the objective value generated by the inter-weights $W_{A_i, C_i}$, are approximately equal. We begin by lower bounding the value generated by algorithms that split $A_i$ and $C_i$. Due to lack of space, we defer the following proofs to the Appendix.

\begin{lemma}
\label{la.lemma.6}
Given the two disjoint sets $C_i$ and $A_i$ and a linear arrangement $y$ that places all nodes in $A_i$ to the left of all nodes in $C_i$ we are guaranteed that
\[
\sum_{c \in C_i, a \in A_i} w_{a,c} y_{a,c} \geq 
\frac{n_{C_i}}{2}(W_{C_i,A_i} - n_{C_i} n_{A_i} D_{C_i}).
\]
\end{lemma}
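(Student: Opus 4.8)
The plan is to exploit the fact that $A_i$ sits entirely to the left of $C_i$, so that for every pair $a \in A_i$, $c \in C_i$ the quantity $y_{a,c} = y_c - y_a$ is exactly the number of points strictly between $a$ and $c$ on the line, plus one. I would first fix an arbitrary $c \in C_i$ and count, over all $a \in A_i$, how many points lie between $a$ and $c$. Since all of $A_i$ is to the left of $c$, the only points contributing a guaranteed lower bound are the other points of $C_i$ that happen to be placed to the left of $c$: each such point of $C_i$ lies strictly between $a$ and $c$ for \emph{every} $a \in A_i$. Summing over $a$, this gives $\sum_{a \in A_i} y_{a,c} \ge n_{A_i} \cdot \ell_c$, where $\ell_c$ is the number of points of $C_i$ placed to the left of $c$ (we can afford to ignore the "$+1$" and any $A_i$–$A_i$ points in between).

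The second step is to turn the bound $\sum_{a} w_{a,c} y_{a,c} \ge n_{A_i} \ell_c \cdot (\text{something})$ into something summable — but here we have a weighted sum, so I would instead keep the weights attached and argue: for fixed $c$, order the points of $A_i$ and note $\sum_{a\in A_i} w_{a,c} y_{a,c} \ge \sum_{a \in A_i} w_{a,c} \ell_c$ is too weak because $\ell_c$ could be small. The better route is to sum over $c \in C_i$ first. Fix $a \in A_i$; then $\sum_{c \in C_i} y_{a,c}$ counts, with multiplicity, pairs $(c, x)$ where $x$ lies between $a$ and $c$. Taking $x$ ranging over $C_i$ itself: for each unordered pair $\{c, c'\} \subset C_i$, exactly one of them is farther from $a$, so the pair contributes at least $1$ to $\sum_{c} y_{a,c}$. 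Hence $\sum_{c \in C_i} y_{a,c} \ge \binom{n_{C_i}}{2} \ge \tfrac{n_{C_i}(n_{C_i}-1)}{2}$, uniformly in $a$ — but again this loses the weights.

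To handle the weights properly I would use the triangle-inequality-free combinatorial identity directly on the weighted sum: write $y_{a,c} = \sum_{x} \mathbb{1}[x \text{ between } a \text{ and } c]$ and then
\[
\sum_{c \in C_i, a \in A_i} w_{a,c} y_{a,c} \;\ge\; \sum_{a \in A_i}\sum_{\{c,c'\}\subset C_i} \min(w_{a,c}, w_{a,c'}),
\]
since for the pair $\{c,c'\}$ the farther endpoint $c^*$ has $c'$ (say) between it and $a$, contributing $w_{a,c^*} \ge \min(w_{a,c},w_{a,c'})$. Now invoke the triangle inequality \emph{on $C_i$}: $|w_{a,c} - w_{a,c'}| \le w_{c,c'} \le D_{C_i}$, so $\min(w_{a,c},w_{a,c'}) \ge \tfrac12(w_{a,c}+w_{a,c'}) - \tfrac12 D_{C_i}$. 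Summing $\tfrac12(w_{a,c}+w_{a,c'})$ over all pairs $\{c,c'\}$ gives $\tfrac{n_{C_i}-1}{2}\sum_{c} w_{a,c}$, and summing the $\tfrac12 D_{C_i}$ term gives $\tfrac12\binom{n_{C_i}}{2}D_{C_i}$; then summing over $a \in A_i$ and bounding $n_{C_i}-1$ crudely by $n_{C_i}$ (being careful the error only helps) yields $\tfrac{n_{C_i}}{2}\big(W_{C_i,A_i} - n_{C_i} n_{A_i} D_{C_i}\big)$ after collecting terms, which is exactly the claimed bound.

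The main obstacle I anticipate is the bookkeeping in the last step: making sure that replacing $\binom{n_{C_i}}{2}$ and $(n_{C_i}-1)$ by the cleaner quantities $\tfrac{n_{C_i}^2}{2}$ and $n_{C_i}$ is done in the direction that preserves the inequality (the positive main term shrinks, the subtracted error term may grow, so one must verify the net effect still gives a valid lower bound — essentially checking $\tfrac{n_{C_i}-1}{2}\cdot W_{C_i,A_i} - \tfrac12 \cdot \tfrac{n_{C_i}(n_{C_i}-1)}{2} n_{A_i} D_{C_i} \ge \tfrac{n_{C_i}}{2} W_{C_i,A_i} - \tfrac{n_{C_i}^2}{2} n_{A_i} D_{C_i}$, which does hold). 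The conceptual content — that "farther endpoint has the other between it and $a$" plus "triangle inequality inside the low-diameter core" — is the heart of the argument and is not hard; everything else is routine.
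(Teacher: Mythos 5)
Your reduction to pairwise minima is sound up to the last step, but the final bookkeeping claim is false, and the bound you actually derive does not imply the lemma. After dropping the ``$+1$'' in $y_{a,c}$, your chain gives, for each $a$, $\sum_{c \in C_i} w_{a,c} y_{a,c} \geq \frac{n_{C_i}-1}{2}\sum_{c} w_{a,c} - \frac{n_{C_i}(n_{C_i}-1)}{4} D_{C_i}$, hence in total $\frac{n_{C_i}-1}{2} W_{A_i,C_i} - \frac{n_{C_i}(n_{C_i}-1)}{4} n_{A_i} D_{C_i}$. The comparison you assert at the end, namely that this dominates $\frac{n_{C_i}}{2} W_{A_i,C_i} - \frac{n_{C_i}^2}{2} n_{A_i} D_{C_i}$, is equivalent to $W_{A_i,C_i} \leq \frac{n_{C_i}(n_{C_i}+1)}{2}\, n_{A_i} D_{C_i}$, which is not true in general and fails exactly in the regime where the lemma is applied: $C_i$ is a low-diameter core ($D_{C_i} \leq 4\sqrt{\rho}\,D_V$) while every $A_i$--$C_i$ distance is at least $\epsilon^2 D_V$, so $W_{A_i,C_i}$ is of order $n_{A_i} n_{C_i} D_V \gg n_{A_i} n_{C_i}^2 D_{C_i}$ is possible; in the extreme case $D_{C_i}=0$ your bound yields only $\frac{n_{C_i}-1}{2} W_{A_i,C_i}$, strictly short of the claimed $\frac{n_{C_i}}{2} W_{A_i,C_i}$.

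The repair is small: do not discard the ``$+1$''. Since $y_{a,c} \geq 1 + \#\{c' \in C_i : c' \text{ strictly between } a \text{ and } c\}$, you gain an extra $\sum_{c} w_{a,c}$ for each $a$, so the main term becomes $\frac{n_{C_i}+1}{2} W_{A_i,C_i}$, and then $\frac{n_{C_i}+1}{2} W_{A_i,C_i} - \frac{n_{C_i}(n_{C_i}-1)}{4} n_{A_i} D_{C_i} \geq \frac{n_{C_i}}{2}\bigl(W_{A_i,C_i} - n_{C_i} n_{A_i} D_{C_i}\bigr)$ holds term by term, with no hidden condition. With that fix your argument is a valid route, mildly different from the paper's: the paper fixes $a$, pulls out $\min_{c \in C_i} w_{a,c}$, lower-bounds $\sum_{c} y_{a,c} \geq 1 + 2 + \cdots + n_{C_i} \geq n_{C_i}^2/2$ (note it keeps the full count, which is precisely the $+1$'s you dropped), and applies the triangle inequality once via $n_{C_i}\min_{c} w_{a,c} \geq \sum_{c}(w_{a,c} - D_{C_i})$; your pairwise-min accounting is essentially a disaggregated version of the same two ingredients and buys nothing extra, so the lesson here is only to keep the diagonal contribution.
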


Due to the fact that $C_i$ is a small cluster containing most of the data points the above lemma reduces to the following corollary.

\begin{corollary}
\label{la.cor.0}
Given any linear arrangement $y$ that places all nodes in $A_i$ to the left of all nodes in $C_i$ we are guaranteed that
\[
\sum_{a \in A_i, c \in C_i} w_{a,c} y_{a,c} \geq 
\frac12 n W_{A_i, C_i}(1 - \frac{5 \sqrt{\rho}}{\epsilon^2})
\]
\end{corollary}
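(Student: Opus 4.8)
The plan is to derive the corollary directly from Lemma \ref{la.lemma.6} by controlling the error term $n_{C_i} n_{A_i} D_{C_i}$ relative to the main term $W_{C_i, A_i}$, and by replacing $n_{C_i}$ with $n$ at a small multiplicative cost. First I would recall the structural facts coming from the definition of the core $C = C_i$ via Lemma \ref{general.lemma.1}: we have $D_{C_i} \le 4 D_{V_i} \sqrt{\rho_i}$ and $n_{C_i} \ge n_i (1 - \sqrt{\rho_i})$, where I write $\rho_i = \rho_{V_i}$ and $n_i = n_{V_i}$. In particular $n_{A_i} + n_{B_i} = n_i - n_{C_i} \le n_i \sqrt{\rho_i}$, so $n_{A_i} \le n_i \sqrt{\rho_i}$. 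Since we are in case (c)/not case (a), $\rho_i < \epsilon^6$, so these error quantities are genuinely small.

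The first key step is to lower bound the main term $W_{C_i, A_i}$, since the bound in Lemma \ref{la.lemma.6} is only useful when $W_{C_i,A_i}$ dominates $n_{C_i} n_{A_i} D_{C_i}$. Here I expect to use that $A_i$ consists of points at distance $\ge \epsilon^2 D_{V_i}$ from $C_i$: every pair $(a,c)$ with $a \in A_i$, $c \in C_i$ has $w_{a,c} \ge \epsilon^2 D_{V_i}$, hence $W_{C_i, A_i} \ge n_{C_i} n_{A_i} \epsilon^2 D_{V_i}$. Comparing with the error term, $n_{C_i} n_{A_i} D_{C_i} \le n_{C_i} n_{A_i} \cdot 4 D_{V_i} \sqrt{\rho_i} = \frac{4\sqrt{\rho_i}}{\epsilon^2} \cdot n_{C_i} n_{A_i} \epsilon^2 D_{V_i} \le \frac{4\sqrt{\rho_i}}{\epsilon^2} W_{C_i, A_i}$. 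Therefore $W_{C_i,A_i} - n_{C_i} n_{A_i} D_{C_i} \ge W_{C_i, A_i}\big(1 - \tfrac{4\sqrt{\rho_i}}{\epsilon^2}\big)$.

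Plugging this into Lemma \ref{la.lemma.6} gives $\sum_{a \in A_i, c \in C_i} w_{a,c} y_{a,c} \ge \frac{n_{C_i}}{2} W_{C_i, A_i}\big(1 - \tfrac{4\sqrt{\rho_i}}{\epsilon^2}\big)$. The second key step is to replace $n_{C_i}$ by $n = n_i$: using $n_{C_i} \ge n_i(1 - \sqrt{\rho_i}) \ge n_i (1 - \sqrt{\rho_i}/\epsilon^2)$ and then multiplying the two factors $(1 - \sqrt{\rho_i}/\epsilon^2)(1 - 4\sqrt{\rho_i}/\epsilon^2)$, I would invoke Observation \ref{general.ob.0}(1) (or just expand and drop the positive cross term) to bound this product below by $1 - 5\sqrt{\rho_i}/\epsilon^2$. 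This yields exactly $\sum_{a,c} w_{a,c} y_{a,c} \ge \frac12 n W_{A_i, C_i}\big(1 - \tfrac{5\sqrt{\rho}}{\epsilon^2}\big)$, matching the statement (with $\rho = \rho_i$).

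I do not anticipate a serious obstacle here; the only thing to be careful about is making sure the "distance $\ge \epsilon^2 D_V$" lower bound on $W_{C_i, A_i}$ is applied to the diameter $D_{V_i}$ of the current recursion instance (not the original $D_V$), so that it combines cleanly with the $D_{C_i} \le 4 D_{V_i}\sqrt{\rho_i}$ bound — the $D_{V_i}$ factors then cancel and leave a clean $\tfrac{4\sqrt{\rho_i}}{\epsilon^2}$. A minor subtlety is the edge case $n_{A_i} = 0$, in which both sides are $0$ and the bound holds trivially, and the case $W_{C_i, A_i} = 0$, which is subsumed. Everything else is the routine arithmetic of collecting $(1 - c\sqrt{\rho}/\epsilon^2)$ factors via Observation \ref{general.ob.0}.
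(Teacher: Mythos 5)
Your proposal is correct and follows essentially the same route as the paper's own proof: apply Lemma \ref{la.lemma.6}, lower bound $W_{A_i,C_i}$ by $\epsilon^2 D_{V_i} n_{A_i} n_{C_i}$ using the definition of $A_i$, absorb the error term via $D_{C_i} \le 4\sqrt{\rho_i}\,D_{V_i}$, and convert $n_{C_i}$ to $n$ using $n_{C_i} \ge (1-\sqrt{\rho_i})n$, collecting the factors into $1 - \frac{5\sqrt{\rho_i}}{\epsilon^2}$. The only cosmetic difference is that the paper folds the $(1-\sqrt{\rho})$ factor in at the end using $\epsilon < 1$ rather than rewriting it as $1 - \sqrt{\rho}/\epsilon^2$ first; the arithmetic is identical.
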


Now that we have lower bounded algorithms that split $A_i$ and $C_i$ we will upper bound algorithms that do not have this restriction. (Note that we begin by handling the case where one of the disjoint sets is a single data point and thereafter generalize it to two disjoint sets).

\begin{lemma}
\label{la.lemma.6_5}
Given a set $C_i$ and a point $p \not \in C_i$, we are guaranteed that 
\begin{equation*}
\begin{gathered}
\sum_{c \in C_i} w_{p, c} y_{p,c} \leq (W_{p,C_i} + n_{C_i} D_{C_i})(n - \frac{n_{C_i}}{2}).
\end{gathered}
\end{equation*}
\end{lemma}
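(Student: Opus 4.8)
The plan is to bound $\sum_{c \in C_i} w_{p,c} y_{p,c}$ by controlling the two factors $w_{p,c}$ and $y_{p,c}$ separately, using that $C_i$ has small diameter and that $y$ is an injection into $[n]$. First I would replace every weight $w_{p,c}$ by a common value: by the triangle inequality, for any two $c, c' \in C_i$ we have $|w_{p,c} - w_{p,c'}| \le w_{c,c'} \le D_{C_i}$, so each $w_{p,c}$ lies within $D_{C_i}$ of, say, $\min_{c \in C_i} w_{p,c}$; summing over $c \in C_i$ this lets me write $\sum_{c} w_{p,c} y_{p,c} \le \sum_{c} (\bar w + D_{C_i}) y_{p,c}$ for an appropriate representative $\bar w$, or more cleanly bound $\sum_c w_{p,c} \le W_{p,C_i} + n_{C_i} D_{C_i}$ after pulling the max displacement out (note $W_{p,C_i} = \sum_{c} w_{p,c}$ already, so the slack term $n_{C_i} D_{C_i}$ is what you pay to make all the weights equal to the \emph{largest} one).

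The second ingredient is that $\sum_{c \in C_i} y_{p,c} \le n - n_{C_i}/2$, or rather that each individual $y_{p,c} \le n - n_{C_i}/2$ on average — actually the cleanest route is: since $y$ places $p$ at some position $y_p \in [n]$ and the $n_{C_i}$ points of $C_i$ occupy distinct positions, the total displacement $\sum_{c} |y_p - y_c|$ is maximized (over the choice of which distinct positions $C_i$ occupies and where $p$ sits) when $p$ is at an endpoint and $C_i$ occupies the $n_{C_i}$ positions farthest from it; a direct count then gives $\sum_c y_{p,c} \le n_{C_i}(n - n_{C_i}/2) - \binom{n_{C_i}}{2}$-type expression, and in particular $\max_c y_{p,c} \le n - n_{C_i}$ is too weak — I want the per-point bound $y_{p,c} \le n - n_{C_i}/2$ to hold \emph{on average}, i.e. $\tfrac{1}{n_{C_i}}\sum_c y_{p,c} \le n - n_{C_i}/2$. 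This follows because the positions of $C_i$ are distinct, so at most one is at distance $n-1$ from $p$, at most two at distance $\ge n-2$, etc.; summing this arithmetic-progression bound and dividing by $n_{C_i}$ yields exactly $n - \tfrac{n_{C_i}+1}{2} \le n - \tfrac{n_{C_i}}{2}$.

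Combining the two: $\sum_{c \in C_i} w_{p,c} y_{p,c} \le \big(\max_{c} w_{p,c}\big) \sum_{c} y_{p,c} \le (W_{p,C_i} + n_{C_i} D_{C_i})/n_{C_i} \cdot \sum_c y_{p,c}$ — but that introduces an unwanted division; the honest way is to bound $\sum_c w_{p,c} y_{p,c}$ by noting it is a sum of $n_{C_i}$ terms each of which is at most $(\text{some }w) \cdot (\text{some }y)$, and to use that $\sum_c w_{p,c} \le W_{p,C_i} + n_{C_i}D_{C_i}$ together with the fact that reordering to pair the largest weights with the largest displacements only increases the sum, while the largest displacement any single term can carry, amortized, is $n - n_{C_i}/2$. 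Concretely: $\sum_c w_{p,c} y_{p,c} \le \max_c(y_{p,c}) \cdot \sum_c w_{p,c}$ is too lossy; instead use $\sum_c w_{p,c}y_{p,c} \le (\text{avg } y_{p,c}) \cdot n_{C_i} \cdot \max_c w_{p,c}$... I would actually just bound each $y_{p,c} \le n - n_{C_i}/2$ is \emph{false} pointwise, so the clean statement to prove is $\sum_c w_{p,c} y_{p,c} \le (\sum_c w_{p,c} + n_{C_i} D_{C_i})(n - n_{C_i}/2)$ via: every $y_{p,c} \le n-1$, and the $n_{C_i}$ displacements are distinct positive integers bounded by $n-1$...

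The main obstacle is getting the displacement bookkeeping exactly right so that the $n - n_{C_i}/2$ factor (rather than a weaker $n$) genuinely comes out. The key realization is that since $C_i$'s positions under $y$ are $n_{C_i}$ \emph{distinct} integers in $[n]$, we have $\sum_{c} y_{p,c} = \sum_c |y_p - y_c| \le \max\big\{ \sum_{t=1}^{n_{C_i}} (n-t),\ \text{other endpoint choices} \big\} \le n_{C_i}\,n - \binom{n_{C_i}+1}{2} = n_{C_i}(n - \tfrac{n_{C_i}+1}{2}) \le n_{C_i}(n - \tfrac{n_{C_i}}{2})$, and then $\sum_c w_{p,c} y_{p,c} \le (\max_c w_{p,c}) \sum_c y_{p,c}$ — no: finally, the correct combination is to write $w_{p,c} \le \tfrac{1}{n_{C_i}}(W_{p,C_i} + n_{C_i}D_{C_i})$ only when all weights are comparable, which they are not. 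So I would instead argue: sort $C_i$ so that $w_{p,c_1} \ge \cdots \ge w_{p,c_{n_{C_i}}}$; pair these with displacements $d_1 \ge \cdots \ge d_{n_{C_i}}$, distinct integers $\le n-1$; then $\sum w_{p,c_\ell} d_\ell$, and since $w_{p,c_1} \le w_{p,c_\ell} + D_{C_i}$ for all $\ell$, we get $\sum_\ell w_{p,c_\ell} d_\ell \le \sum_\ell (w_{p,c_\ell} + D_{C_i}) d_1 \le (W_{p,C_i} + n_{C_i}D_{C_i}) \cdot d_1 \le (W_{p,C_i} + n_{C_i}D_{C_i})(n - n_{C_i}/2)$, where the last step uses that $p$ occupies one position and $C_i$ occupies $n_{C_i}$ others, so $d_1 = \max_c |y_p - y_c| \le n - n_{C_i}$ when $p$ is in the "middle", or $\le n-1$ at an endpoint — and the claimed bound $n - n_{C_i}/2$ holds because at an endpoint the \emph{second-largest} etc. displacements drag the relevant average down; cleanly, $d_1 \le n - n_{C_i}/2$ may fail, so the intended argument must bound $\sum_\ell w_{p,c_\ell} d_\ell \le w_{p,c_1}\sum_\ell d_\ell + D_{C_i}\sum_\ell d_\ell$ is still not it. I will therefore present the argument in the form: $\sum_c w_{p,c}y_{p,c} \le \sum_c (w_{p,c} + D_{C_i} \cdot \mathbb{1}) \cdot (n - n_{C_i}/2)$ is the target, justified by $y_{p,c} \le n - n_{C_i}/2$ holding \emph{after averaging} and Chebyshev's sum inequality to decouple, which is the step I expect to require the most care.
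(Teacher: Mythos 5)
You have both of the paper's key ingredients in hand: (i) the triangle-inequality fact that all the weights $w_{p,c}$ lie within $D_{C_i}$ of each other, which in its useful form reads $n_{C_i}\max_{c\in C_i} w_{p,c}\le \sum_{c\in C_i}(w_{p,c}+D_{C_i})=W_{p,C_i}+n_{C_i}D_{C_i}$, and (ii) the distinct-positions bound $\sum_{c\in C_i} y_{p,c}\le \sum_{t=1}^{n_{C_i}}(n-t)=n_{C_i}\bigl(n-\tfrac{n_{C_i}+1}{2}\bigr)\le n_{C_i}\bigl(n-\tfrac{n_{C_i}}{2}\bigr)$. But the assembly you finally commit to is flawed. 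A pointwise or ``averaged'' bound $y_{p,c}\le n-\tfrac{n_{C_i}}{2}$ cannot be applied term-by-term against nonuniform weights, and Chebyshev's sum inequality cannot rescue this: for similarly ordered sequences it gives $\tfrac1m\sum_\ell a_\ell b_\ell \ge \bigl(\tfrac1m\sum_\ell a_\ell\bigr)\bigl(\tfrac1m\sum_\ell b_\ell\bigr)$, i.e.\ the inequality points the wrong way exactly in the adversarial case where the heaviest weights sit at the largest displacements, which is the case you must handle. So the step you flag as ``requiring the most care'' is not a gap in care but a genuinely invalid decoupling. Your sorted-pairing attempt fails for the reason you yourself note ($d_1$ can be as large as $n-1$), so as written the proposal does not close.

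The irony is that you wrote down the correct proof in passing and discarded it. The chain $\sum_{c} w_{p,c}y_{p,c}\le \bigl(\max_{c} w_{p,c}\bigr)\sum_{c} y_{p,c}\le \tfrac{W_{p,C_i}+n_{C_i}D_{C_i}}{n_{C_i}}\cdot\sum_c y_{p,c}$, which you rejected because of an ``unwanted division,'' is exactly the paper's argument: the factor $\tfrac{1}{n_{C_i}}$ is cancelled by the factor $n_{C_i}$ in the displacement bound (ii), giving $\sum_c w_{p,c}y_{p,c}\le \bigl(W_{p,C_i}+n_{C_i}D_{C_i}\bigr)\bigl(n-\tfrac{n_{C_i}}{2}\bigr)$ immediately. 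Note also that passing to $\max_c w_{p,c}$ is not ``too lossy'': the slack $n_{C_i}D_{C_i}$ you pay is precisely the slack the lemma's statement allows. Had you kept that line and combined it with your own displacement computation, the proof would be complete and identical to the paper's.
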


\noindent We are now ready to upper bound the inter-objective-value of two sets of disjoint points.

\begin{lemma}
\label{la.lemma.7}
Given the two disjoint sets $C_i$ and ${A_i}$ and any linear arrangement $y$ we are guaranteed that
\[
\sum_{c \in C_i,a \in {A_i}} w_{a,c} y_{a,c} \leq 
(n - \frac{n_{C_i}}{2})(W_{C_i,{A_i}} + n_{C_i} n_{A_i} D_{C_i}).
\]
\end{lemma}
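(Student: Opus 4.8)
The plan is to reduce the two-set bound to the single-point bound of Lemma \ref{la.lemma.6_5} by summing over all $a \in A_i$. For a fixed point $a \in A_i$, Lemma \ref{la.lemma.6_5} (applied with $p = a$) gives
\[
\sum_{c \in C_i} w_{a,c} y_{a,c} \leq (W_{a, C_i} + n_{C_i} D_{C_i})\left(n - \frac{n_{C_i}}{2}\right),
\]
where $W_{a,C_i} = \sum_{c \in C_i} w_{a,c}$ is the total weight from $a$ into $C_i$. First I would note that this inequality holds regardless of where $a$ sits in the arrangement $y$, so there is no conflict in applying it simultaneously to every $a \in A_i$ (the arrangement $y$ is fixed throughout, and Lemma \ref{la.lemma.6_5} makes no assumption on $y$).

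Next I would sum this inequality over all $a \in A_i$. The left-hand side becomes exactly $\sum_{c \in C_i, a \in A_i} w_{a,c} y_{a,c}$, the quantity we want to bound. On the right-hand side, the factor $(n - n_{C_i}/2)$ is independent of $a$ and pulls out, leaving $\sum_{a \in A_i} (W_{a,C_i} + n_{C_i} D_{C_i})$. Here $\sum_{a \in A_i} W_{a,C_i} = \sum_{a \in A_i, c \in C_i} w_{a,c} = W_{C_i, A_i}$ by definition, and $\sum_{a \in A_i} n_{C_i} D_{C_i} = n_{A_i} n_{C_i} D_{C_i}$ since there are $n_{A_i}$ terms. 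Combining,
\[
\sum_{c \in C_i, a \in A_i} w_{a,c} y_{a,c} \leq \left(n - \frac{n_{C_i}}{2}\right)\left(W_{C_i, A_i} + n_{C_i} n_{A_i} D_{C_i}\right),
\]
which is precisely the claimed bound.

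I do not expect any genuine obstacle here: the argument is a one-line aggregation once Lemma \ref{la.lemma.6_5} is in hand, and the only things to be careful about are (i) confirming that Lemma \ref{la.lemma.6_5}'s hypothesis "$p \notin C_i$" is met for every $a \in A_i$ (true since $A_i$ and $C_i$ are disjoint), and (ii) bookkeeping the sums $\sum_a W_{a,C_i}$ and the count of the $n_{C_i} D_{C_i}$ terms correctly. The real work — the potential function / telescoping argument over $n_{C_i}$ points — is buried inside Lemma \ref{la.lemma.6_5}, so this lemma itself is just the clean packaging of that estimate for pairs of sets.
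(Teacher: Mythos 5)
Your proof is correct and is exactly the paper's argument: the paper's proof of this lemma is a one-line statement that it "follows by applying Lemma \ref{la.lemma.6_5} to all points $a \in A_i$", and your write-up simply carries out that summation explicitly with the same bookkeeping.
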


Due to the fact that $C_i$ is a small cluster containing most of the data points the lemma reduces to the following corollary.

\begin{corollary}
\label{la.cor.1}
Given any linear arrangement $y$ we are guaranteed that
\[
\sum_{a \in A_i, c \in C_i} w_{a,c} y_{a,c} \leq 
\frac12 nW_{A_i,C_i}(1 + \frac{9 \sqrt{\rho}}{\epsilon^2}).
\]
\end{corollary}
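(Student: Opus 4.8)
The plan is to derive Corollary~\ref{la.cor.1} from Lemma~\ref{la.lemma.7} by plugging in the structural estimates guaranteed by Lemma~\ref{general.lemma.1} applied to the recursion instance $G_i$. Recall that $C_i$ is the core of $G_i$, so $D_{C_i} \le 4 D_{V_i}\sqrt{\rho_i}$ and $n_{C_i} \ge n_i(1-\sqrt{\rho_i})$, where $n_i = n_{V_i}$ and $\rho_i = \rho_{V_i}$. The idea is simply to bound the two parenthetical factors in Lemma~\ref{la.lemma.7}: the factor $(n - \tfrac{n_{C_i}}{2})$ should be at most roughly $\tfrac{n}{2}(1 + O(\sqrt{\rho}/\eps^2))$ after accounting for the peeled-off points, and the additive term $n_{C_i} n_{A_i} D_{C_i}$ should be absorbed into $W_{C_i, A_i}$ up to a $(1 + O(\sqrt{\rho}/\eps^2))$ factor.

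First I would control $(n - \tfrac{n_{C_i}}{2})$. Since this corollary is used in the recursive analysis, $n$ here is the size of $V_i$ (or of the current instance), so $n - \tfrac{n_{C_i}}{2} \le n - \tfrac{n}{2}(1-\sqrt{\rho_i}) = \tfrac{n}{2}(1 + \sqrt{\rho_i}) \le \tfrac{n}{2}(1 + \sqrt{\rho})$, using $\rho_i \le \rho$ in the non-dense regime (or however the paper tracks the density growth — I would use whichever monotonicity statement is available). Second, I would bound the additive error term: $A_i$ consists of points at distance $\ge \eps^2 D_{V_i}$ from $C_i$, so every edge between $A_i$ and $C_i$ has weight at least $\eps^2 D_{V_i}$, giving $W_{C_i, A_i} \ge n_{C_i} n_{A_i} \eps^2 D_{V_i}$. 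Combined with $D_{C_i} \le 4 D_{V_i}\sqrt{\rho_i} \le 4 D_{V_i}\sqrt{\rho}$ this yields $n_{C_i} n_{A_i} D_{C_i} \le \tfrac{4\sqrt{\rho}}{\eps^2} W_{C_i, A_i}$, so $W_{C_i, A_i} + n_{C_i}n_{A_i} D_{C_i} \le W_{C_i,A_i}(1 + \tfrac{4\sqrt{\rho}}{\eps^2})$.

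Putting these together, Lemma~\ref{la.lemma.7} gives
\[
\sum_{a \in A_i, c \in C_i} w_{a,c} y_{a,c} \le \tfrac12 n (1+\sqrt{\rho}) W_{A_i,C_i}\Bigl(1 + \tfrac{4\sqrt{\rho}}{\eps^2}\Bigr),
\]
and then expanding $(1+\sqrt{\rho})(1 + \tfrac{4\sqrt{\rho}}{\eps^2}) \le 1 + \tfrac{9\sqrt{\rho}}{\eps^2}$ for $\eps \le 1$ (using Observation~\ref{general.ob.0} to handle the cross term, since $\sqrt{\rho}\cdot\tfrac{4\sqrt{\rho}}{\eps^2} = \tfrac{4\rho}{\eps^2} \le \tfrac{4\sqrt{\rho}}{\eps^2}$ as $\rho \le 1$) finishes the proof, since $9 = 1 + 4 + 4$ comfortably covers all three contributions. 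The one subtlety I would be careful about is making sure the constant $9$ is actually valid with the bookkeeping the paper uses — in particular whether the relevant density is $\rho$ or $\rho_i$ and whether $\sqrt{\rho} \le \eps^2$ (which holds outside case (a) since the recursion only continues when $\rho < \eps^6 \le \eps^2$), so that all the error terms are genuinely small and the linearizations in Observation~\ref{general.ob.0} apply. I expect the main (minor) obstacle to be exactly this constant-chasing: verifying that the crude bounds $n - \tfrac{n_{C_i}}{2} \le \tfrac{n}{2}(1+\sqrt{\rho})$ and $D_{C_i} \le 4 D_{V_i}\sqrt{\rho}$ really do combine to give $1 + \tfrac{9\sqrt\rho}{\eps^2}$ rather than a slightly larger constant, and if not, tracing back which of the earlier crude steps to tighten.
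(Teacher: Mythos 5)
Your proposal is correct and follows essentially the same route as the paper: apply Lemma \ref{la.lemma.7}, bound $n - \tfrac{n_{C}}{2} \le \tfrac{n}{2}(1+\sqrt{\rho})$ via $n_C \ge (1-\sqrt{\rho})n$, absorb $n_A n_C D_C \le \tfrac{4\sqrt{\rho}}{\eps^2}W_{A,C}$ using $W_{A,C} \ge \eps^2 n_A n_C D_V$ and $D_C \le 4\sqrt{\rho}D_V$, and expand the product to get $1 + \tfrac{9\sqrt{\rho}}{\eps^2}$. The only bookkeeping caveat is that in the corollary $\rho$ simply denotes the density of the current instance $G_i$ (so no monotonicity like $\rho_i \le \rho$ is needed), and the paper closes the constant-chasing exactly as you do, using $\rho < 1$, $\eps < 1$.
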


We will want to show that the objective values of both $ALG$ and $OPT$  (and some other intermediate values that will be defined later on) are approximately determined by their value on the inter-weights of $W_{A_i, C_i}$. In order to do so, we first introduce the following structural lemma that will help us explain this behaviour.

\begin{lemma}
\label{la.lemma.7.5.5}
Given an instance $G$ and sets $A, B$ and $C$ as defined by $ALG(G)$ we have $W_A + W_{A,B} \leq 2\frac{\sqrt{\rho}}{\epsilon^2}W_{A, C}.$
\end{lemma}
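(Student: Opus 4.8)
The plan is to bound each of the two terms $W_A$ and $W_{A,B}$ separately in terms of $W_{A,C}$, using the defining properties of the three sets: every point of $A$ is at distance at least $\epsilon^2 D_V$ from $C$ (Algorithm~\ref{alg.la_alg}), while $C$ is a core, so by Lemma~\ref{general.lemma.1} it satisfies $D_C \le 4 D_V \sqrt{\rho}$ and $n_C \ge n(1-\sqrt{\rho})$; consequently $n_{A} + n_B = n - n_C \le \sqrt{\rho}\, n$. Since $\rho < \epsilon^6$ is tiny, this makes $A$ and $B$ together a negligible fraction of the points, which is the crux: the inter-weight $W_{A,C}$ is a sum over $n_A \cdot n_C = \Theta(n_A n)$ pairs each of weight $\ge \epsilon^2 D_V$, whereas $W_A$ and $W_{A,B}$ are sums over far fewer pairs (at most $n_A^2$ and $n_A n_B$ respectively), each of weight at most $D_V$.

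Concretely, first I would lower bound $W_{A,C} \ge n_A n_C \cdot \epsilon^2 D_V \ge n_A n (1-\sqrt{\rho}) \epsilon^2 D_V$, using that each $(a,c)$ pair with $a\in A$, $c\in C$ contributes at least $\epsilon^2 D_V$. Then I would upper bound $W_A \le n_A^2 D_V$ and $W_{A,B} \le n_A n_B D_V$, using that every pairwise distance is at most the diameter $D_V$. Adding these, $W_A + W_{A,B} \le n_A(n_A + n_B) D_V = n_A (n - n_C) D_V \le n_A \sqrt{\rho}\, n\, D_V$. Dividing by the lower bound on $W_{A,C}$ gives
\[
\frac{W_A + W_{A,B}}{W_{A,C}} \le \frac{n_A \sqrt{\rho}\, n\, D_V}{n_A n (1-\sqrt{\rho}) \epsilon^2 D_V} = \frac{\sqrt{\rho}}{(1-\sqrt{\rho})\epsilon^2},
\]
and since $\rho < \epsilon^6 < 1/4$ we have $1/(1-\sqrt{\rho}) \le 2$, so $W_A + W_{A,B} \le \frac{2\sqrt{\rho}}{\epsilon^2} W_{A,C}$, as claimed.

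There is one edge case to dispatch: if $A = \emptyset$ then $W_A = W_{A,B} = 0$ and the inequality holds trivially, so assume $n_A \ge 1$; note the factor $n_A$ cancels in the ratio, so no lower bound on $n_A$ beyond positivity is needed. The main obstacle — really the only subtlety — is making sure the constant works out: the slack comes entirely from $n - n_C \le \sqrt{\rho}\, n$ (the core captures almost all points) set against the factor $\epsilon^2$ in the distance lower bound defining $A$, and one must check that the loss in replacing $1/(1-\sqrt{\rho})$ by $2$ is absorbed into the stated constant $2$ rather than requiring a larger one; since $\sqrt{\rho} < \epsilon^3$ this is comfortable. Everything else is the two-line counting argument above.
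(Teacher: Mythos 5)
Your proposal is correct and follows essentially the same argument as the paper: bound $W_A$ and $W_{A,B}$ by (number of pairs)$\cdot D_V$, use the core guarantees $n_C \ge (1-\sqrt{\rho})n$ (hence $n_A+n_B \le \sqrt{\rho}\,n$) and $w_{a,c} \ge \epsilon^2 D_V$ to get $W_{A,C} \ge \epsilon^2 n_A n_C D_V$, and absorb the $1/(1-\sqrt{\rho})$ factor into the constant $2$ using $\rho < \epsilon^6$. The paper does the same bookkeeping with marginally different constants ($\frac12 n_A + n_B \le 1.5\sqrt{\rho}\,n$), so there is no substantive difference.
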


\subsubsection{Analyzing the Approximation Ratio of Case (a) of $ALG$}

We first give an overview the approximation ratio analysis. Recall the definition of $P^*$ (Definition \ref{la.def.P*}). The first step towards our proof, is to show that instead of trying to approximate $OPT(G_k)$, it will be enough to consider its value restricted to intra-partition-set weights with respect to $P^*$. Even more, for such weights $w_{u,v}$, incident to $P^*_i$ and $P^*_{i+j}$, it will be enough to assume that their generated value towards the objective (i.e., the value $y_{u,v}$) is only $(j - 1)\epsilon n$ (while it may be as large as $(j+1)\epsilon n$). Formally, this will be done in Lemma \ref{la.lemma.1.5} (whose proof is deferred to the Appendix).

Next, recall that $ALG_{d-w}$ tries to guess the partition $P^*$ (up to some additive error) and let $P$ denote the partition guessed by $ALG_{d-w}$. Observe that if guessed correctly, the value generated towards $ALG$'s objective for any intra-partition-set weight crossing between $P_i$ and $P_{i+j}$ is at least $|P_{i+1}| + \cdots |P_{i+j - 1}|$ and if we managed to guess the set sizes as well then this value is exactly $(j - 1) \epsilon n$ (equivalent to that of $OPT$'s). This will be done in Proposition \ref{la.prop.2}.

\begin{lemma}
\label{la.lemma.1.5}
Given the balanced line partition of set sizes $\epsilon n$, denoted as $P^*$, we have
\[
OPT(G_k) \leq  (1 + 13 \epsilon) \sum_{\substack{1\leq i \leq k-1\\ 1 \leq j \leq k-i}}  W_{P^*_i, P^*_{i+j}} (|P^*_{i+1}| + \cdots + |P^*_{i+j-1}|).
\]
\end{lemma}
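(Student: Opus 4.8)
The plan is to split $OPT(G_k)$ according to the block structure $P^*$ and to show that the weight discarded on the right-hand side — namely weight \emph{inside} a single block, together with the $\pm\epsilon n$ slack carried by weight \emph{between} blocks — is only an $O(\epsilon)$-fraction of $OPT(G_k)$.

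Write $n := n_{G_k}$, $W := W_{G_k}$, let $y$ realize $OPT(G_k)$, and recall from \cref{la.def.P*} that the $1/\epsilon$ blocks $P^*_1,\dots,P^*_{1/\epsilon}$ occupy consecutive windows of $\epsilon n$ positions on the line. First I would decompose $OPT(G_k)$ into its intra-block part $\sum_{\ell}\sum_{u,v\in P^*_\ell}w_{u,v}y_{u,v}$ and its inter-block part $\sum_{1\le i<i+j\le 1/\epsilon}\sum_{u\in P^*_i,\,v\in P^*_{i+j}}w_{u,v}y_{u,v}$. For $u,v$ in a common block, $y_{u,v}\le\epsilon n$, so the intra-block part is at most $\epsilon n\sum_\ell W_{P^*_\ell}$. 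For $u\in P^*_i$ and $v\in P^*_{i+j}$ the positions differ by at most $(j+1)\epsilon n$; writing $j+1=(j-1)+2$, the inter-block part is at most $\epsilon n\,T+2\epsilon n\sum_{i<i+j}W_{P^*_i,P^*_{i+j}}$, where $T:=\sum_{i<i+j}(j-1)W_{P^*_i,P^*_{i+j}}$. Using that $\sum_\ell W_{P^*_\ell}+\sum_{i<i+j}W_{P^*_i,P^*_{i+j}}=W$ (every pair falls in exactly one term, so in particular $\sum_{i<i+j}W_{P^*_i,P^*_{i+j}}\le W$), adding the two estimates gives $OPT(G_k)\le\epsilon n\,T+2\epsilon n\,W$.

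The heart of the argument is to turn the additive error $2\epsilon n\,W$ into a multiplicative one: by \cref{fact.avg_link} we have $OPT(G_k)\ge\frac13 nW$, hence $W\le\frac{3}{n}OPT(G_k)$, so $OPT(G_k)\le\epsilon n\,T+6\epsilon\,OPT(G_k)$, i.e.\ $(1-6\epsilon)OPT(G_k)\le\epsilon n\,T$. By \cref{general.ob.0}(2), for $\epsilon$ a small enough constant $\frac{1}{1-6\epsilon}\le 1+13\epsilon$, and therefore $OPT(G_k)\le(1+13\epsilon)\epsilon n\,T$. Finally, since each block has size exactly $\epsilon n$, $(j-1)\epsilon n=|P^*_{i+1}|+\cdots+|P^*_{i+j-1}|$ (an empty sum when $j=1$), and substituting this in rewrites $(1+13\epsilon)\epsilon n\,T$ as exactly the right-hand side of the lemma.

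I expect the only genuinely delicate point to be this conversion step: one must package the discarded weight so that it scales like $W$ rather than like $nW$, after which the averaging bound $OPT(G_k)=\Omega(nW)$ closes the gap; everything else is routine bookkeeping about positions on the line. It is also worth noting that this estimate never uses the case-(a) density assumption $\rho\ge\epsilon^6$; density is needed only to let $ALG_{d-w}$ recover an approximation of $P^*$ and for the running-time analysis.
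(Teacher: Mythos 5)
Your proposal is correct and follows essentially the same route as the paper's proof: the same intra-block/inter-block decomposition of $OPT(G_k)$ with the bounds $\epsilon n$ and $(j+1)\epsilon n$ on positions, and the same use of Fact~\ref{fact.avg_link} ($OPT(G_k)\geq \tfrac13 nW$) to convert the discarded weight into an $O(\epsilon)$ multiplicative loss. The only difference is bookkeeping: you pool the intra-block weight and the $2\epsilon n$ slack per crossing pair into a single additive term $2\epsilon n W$ and invoke the averaging fact once, whereas the paper absorbs the two error sources in two separate steps (giving the factor $\tfrac{1+4\epsilon}{1-7\epsilon}$); both land within the stated $1+13\epsilon$.
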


\noindent Before proving Proposition \ref{la.prop.2} we state the properties of the general graph partitioning algorithm of Goldreich et al. \cite{Property_Testing_and_its_Connection_to_Learning_and_Approximation}. 

\begin{definition}[\cite{Property_Testing_and_its_Connection_to_Learning_and_Approximation}]
\label{general.def.ggr_inequalities}
Let $\Phi = \{\lambda_j^{LB}, \lambda_j^{UB} \}_{j = 1}^k \cup \{\mu_{j,j'}^{LB}, \mu_{j,j'}^{UB} \}_{j,j' = 1}^k$ denote a set of non-negative values such that $\lambda_j^{LB} \leq \lambda_j^{UB}$ and $\mu_{j,j'}^{LB} \leq \mu_{j,j'}^{UB}$. We define $\mathcal{GP}_{\Phi}$ the set of graphs $G$ on $n$ vertices that have a $k$ partition $(V_1, \ldots, V_k)$ upholding the following constraints
\begin{equation*}
    \begin{gathered}
        \forall j: \, \lambda_j^{LB} \leq \frac{|V_j|}{n} \leq \lambda_j^{UB}; \quad
        \forall j, j': \, \mu_{j,j'}^{LB} \leq \frac{W_{V_j, V_{j'}}}{n^2} \leq \mu_{j,j'}^{UB}.
    \end{gathered}
\end{equation*}
\end{definition}

\begin{theorem}[\cite{Property_Testing_and_its_Connection_to_Learning_and_Approximation}]
\label{general.thm.ggr}
Given inputs $G = (V, w)$ with $|V| = n$ and $w: V \times V \rightarrow [0,1]$ describing the graph and $\Phi$ describing bounds on the wanted partition, $\epsilon_{err}$, the algorithm $PT(G, \Phi, \epsilon_{err})$  has expected running time\footnote{We remark that the original algorithm contains a probability of error $\delta$, that appears in the running time. We disregard this error and bound the expected running time of the algorithm.}  of
\[
\exp\big( \log(\frac{1}{\epsilon_{err}}) \cdot (\frac{O(1)}{\epsilon_{err}})^{k+1}\big) + O(\frac{\log{\frac{k}{\epsilon_{err}}}}{\epsilon_{err}^2}) \cdot n.
\]
Furthermore, if $G \in \mathcal{GP}_{\Phi}$ as in Definition \ref{general.def.ggr_inequalities} then the algorithm outputs a partition satisfying
\begin{itemize}
    \item $\forall j: \,\, \lambda_j^{LB} - \epsilon_{err} \leq \frac{|V_j|}{n} \leq \lambda_j^{UB} + \epsilon_{err}$,
    \item $\forall j, j': \,\, \mu_{j,j'}^{LB} - \epsilon_{err} \leq \frac{W_{V_j, V_{j'}}}{n^2} \leq \mu_{j,j'}^{UB} + \epsilon_{err}$.
\end{itemize}
\end{theorem}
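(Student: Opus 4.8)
The plan is to reproduce the sampling-based argument of Goldreich, Goldwasser and Ron for the General Graph Partition problem; since the statement is quoted verbatim from their work, I only sketch the structure of their proof adapted to the weighted ($w\colon V\times V\to[0,1]$) setting. Fix a graph $G\in\mathcal{GP}_{\Phi}$ with a witnessing partition $(V_1,\dots,V_k)$. The first move is to draw a uniform random sample $S$ of $s=\poly(k/\epsilon_{err})$ vertices. By Hoeffding/Chernoff bounds, with high probability $S$ is \emph{representative}: $|S\cap V_j|/|S|$ is within $\epsilon_{err}$ of $|V_j|/n$ for every $j$, and for all $j,j'$ the empirical average weight between $S\cap V_j$ and $S\cap V_{j'}$, as well as between $S\cap V_j$ and all of $V_{j'}$, is within $\epsilon_{err}$ of $W_{V_j,V_{j'}}/n^2$ after the appropriate rescaling. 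Here the $[0,1]$-boundedness of $w$ is exactly what makes the concentration go through, so nothing changes relative to the unweighted case.

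Since the algorithm does not know $(V_1,\dots,V_k)$, it enumerates all candidates: every one of the $k^{s}$ labelled partitions $(S_1,\dots,S_k)$ of $S$, together with every threshold matrix $\tau\in\{0,\epsilon_{err},2\epsilon_{err},\dots,1\}^{k\times k}$ on a grid of granularity $\epsilon_{err}$. Given a candidate $(S_1,\dots,S_k,\tau)$ it classifies each remaining vertex $v$ by reading its weighted neighbourhood into each $S_{j'}$ and assigning $v$ to the part $j$ whose predicted profile $(\tau_{j,1},\dots,\tau_{j,k})$ best matches $v$'s observed profile, breaking ties by a fixed greedy rule, thereby producing a full partition $(\hat V_1,\dots,\hat V_k)$. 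Finally it estimates, again by sampling, the part sizes and the within/between weights of $(\hat V_1,\dots,\hat V_k)$, discards any candidate violating the $\epsilon_{err}$-relaxed constraints, and outputs a surviving one; a union bound over the $\exp\!\big(\log(1/\epsilon_{err})\cdot(O(1)/\epsilon_{err})^{k+1}\big)$ candidates controls the failure probability.

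Correctness reduces to analyzing the \emph{correct} candidate, i.e.\ $(S_1,\dots,S_k)=(S\cap V_1,\dots,S\cap V_k)$ with $\tau_{j,j'}$ the grid point nearest to $W_{V_j,V_{j'}}/(n\,|V_{j'}|)$. The key claim is that the classification rule recovers $(V_1,\dots,V_k)$ up to an $\epsilon_{err}$-fraction of vertices: a vertex $v\in V_j$ is misclassified only if its true profile to the parts is itself $\Omega(\epsilon_{err})$-far from the typical profile of $V_j$, and an averaging argument over the witnessing partition shows that at most $O(\epsilon_{err} n)$ vertices (summed over all parts) are this atypical. Since the greedy rule only reshuffles these $O(\epsilon_{err} n)$ ambiguous vertices, the sizes of $\hat V_j$ differ from $|V_j|$ by $O(\epsilon_{err} n)$ and the within/between weights differ by $O(\epsilon_{err} n^2)$, which after folding the constants into $\epsilon_{err}$ gives exactly $\lambda_j^{LB}-\epsilon_{err}\le|\hat V_j|/n\le\lambda_j^{UB}+\epsilon_{err}$ and $\mu_{j,j'}^{LB}-\epsilon_{err}\le W_{\hat V_j,\hat V_{j'}}/n^2\le\mu_{j,j'}^{UB}+\epsilon_{err}$. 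The running-time bound then splits into the enumeration over labelled sample-partitions and threshold grids, giving the $\exp\!\big(\log(1/\epsilon_{err})(O(1)/\epsilon_{err})^{k+1}\big)$ term, and the cost of classifying all $n$ vertices and estimating statistics, each reading $O(\log(k/\epsilon_{err})/\epsilon_{err}^2)$-sized neighbourhoods, giving the $O\!\big(\frac{\log(k/\epsilon_{err})}{\epsilon_{err}^2}\big)\cdot n$ term.

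I expect the main obstacle to be precisely the robustness of the classification rule: formalizing "best matching profile up to the grid" so that every typical vertex of $V_j$ is sent to part $j$, atypical vertices number at most $\epsilon_{err} n$, and the greedy resolution of the remaining ambiguous vertices never pushes a size or a weight beyond the $\epsilon_{err}$ slack. This is where the GGR proof does its real work, via a conditional-expectation / oblivious-partition argument, and where care is needed to keep the error additive in $\epsilon_{err}$ rather than letting it compound across the $k$ parts.
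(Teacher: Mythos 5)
This statement is quoted verbatim from Goldreich, Goldwasser and Ron and the paper supplies no proof of it — it is used as a black box — so there is no internal argument to compare yours against. Judged on its own terms, your sketch follows the right template for the GGR General Graph Partition algorithm (sample, enumerate labelled partitions of the sample together with a grid of density values, classify the remaining vertices by their weighted neighbourhood profiles, then verify the relaxed constraints), and your observation that the $[0,1]$-boundedness of $w$ is what lets the Chernoff/Hoeffding concentration carry over unchanged from the $\{0,1\}$ case is the correct reason the weighted extension is routine.

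However, as a proof the proposal is incomplete in exactly the place you flag yourself. The claim that the profile-matching rule misclassifies only $O(\epsilon_{err} n)$ vertices is not a corollary of the sampling bounds; it is the theorem's entire content, and in GGR it is established not by a single global profile-matching pass but by an iterated construction (the sample is split into $O(1/\epsilon_{err})$ pieces, vertices are reassigned in rounds, and a hybrid argument over ``oblivious'' partitions shows that some branch of the enumeration tracks the witnessing partition to within the allowed slack). A vertex of $V_j$ can have a profile genuinely closer to that of $V_{j'}$ without being rare — ruling this out, or showing that such reshuffling cannot violate the $\mu_{j,j'}$ constraints, is precisely the conditional-expectation argument you defer. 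You also do not verify that the enumeration count and per-vertex query cost actually reproduce the two stated running-time terms rather than just having the same general shape. So the proposal is a faithful outline of the external proof strategy, but it does not yet establish the statement; to make it a proof you would need to import or reconstruct the GGR hybrid argument in the weighted setting.
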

 
\noindent We are now ready to prove Proposition \ref{la.prop.2}.

\begin{proposition}
\label{la.prop.2}
If $ALG$ terminates in case (a) then $\frac{ALG_{d-w}(G_k)}{OPT(G_k)} = \frac{ALG(G_k)}{OPT(G_k)} \geq 1 - 20 \epsilon.$
\end{proposition}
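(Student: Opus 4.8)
First note that the equality $\frac{ALG_{d-w}(G_k)}{OPT(G_k)}=\frac{ALG(G_k)}{OPT(G_k)}$ is immediate: when $ALG$ reaches case (a) it outputs precisely the arrangement produced by $ALG_{d-w}(G_k)$. So the task is to lower bound $ALG_{d-w}(G_k)$. Throughout, let $n=n_{V_k}$ and $D_V=D_{V_k}$ denote the parameters of $G_k$, and write $S^*:=\sum_{1\le i\le k-1,\,1\le j\le k-i}W_{P^*_i,P^*_{i+j}}\bigl(|P^*_{i+1}|+\cdots+|P^*_{i+j-1}|\bigr)$ for the ``inner-span'' sum of the balanced line partition $P^*$ induced by $OPT(G_k)$ (Definition~\ref{la.def.P*}). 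The plan is to sandwich $OPT(G_k)$ and $ALG_{d-w}(G_k)$ around $S^*$: Lemma~\ref{la.lemma.1.5} already bounds $OPT(G_k)\le(1+13\epsilon)S^*$ from above, so it remains to show that $ALG_{d-w}(G_k)$ is at least $S^*$ minus lower-order terms.

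The key step is to show that $ALG_{d-w}$ enumerates a guess that reproduces $P^*$'s inter-part weight profile closely. Since $|P^*_j|=\epsilon n$ and $w_{u,v}\le D_V$, we have $W_{P^*_j,P^*_{j'}}\le\epsilon^2 n^2 D_V$, so the grid $\{i\epsilon^9 n^2 D_V: i\le 1/\epsilon^7\}$ contains, for each pair $j,j'$, a value $\mu_{j,j'}$ within $\epsilon^9 n^2 D_V$ of $W_{P^*_j,P^*_{j'}}$. For this guess $G_k$ admits the partition $P^*$, which meets $\Phi$'s size constraints exactly and its (normalized) weight constraints up to additive $\epsilon^9$; after widening $\Phi$'s weight windows by $\epsilon^9$ to get $\Phi'$ with $G_k\in\mathcal{GP}_{\Phi'}$, Theorem~\ref{general.thm.ggr} with $\epsilon_{err}=\epsilon^9$ yields a partition $P$ with $\bigl||P_j|-\epsilon n\bigr|\le\epsilon^9 n$ and $W_{P_i,P_{i+j}}\ge W_{P^*_i,P^*_{i+j}}-O(\epsilon^9 n^2 D_V)$ for all $i,j$. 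Since the tester does not reject on this guess, $ALG_{d-w}(G_k)\ge\sum_e w_e\widehat y_e$ for the consecutive arrangement $\widehat y$ of $P$; and since $\widehat y$ places the parts consecutively, any edge between $P_i$ and $P_{i+j}$ has $\widehat y$-length at least $|P_{i+1}|+\cdots+|P_{i+j-1}|$, giving $ALG_{d-w}(G_k)\ge\sum_{i,j}W_{P_i,P_{i+j}}\bigl(|P_{i+1}|+\cdots+|P_{i+j-1}|\bigr)$.

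Next I would push this back to $S^*$ and then to $OPT$. Using $|P_{i+1}|+\cdots+|P_{i+j-1}|\ge(j-1)\epsilon n(1-\epsilon^8)$, the weight bound above, and $\sum_{i,j}(j-1)=\binom{k}{3}\le\tfrac{1}{6\epsilon^3}$, one obtains
\[
ALG_{d-w}(G_k)\ \ge\ (1-\epsilon^8)\bigl(S^*-O(\epsilon^7 n^3 D_V)\bigr).
\]
By Lemma~\ref{la.lemma.1.5}, $S^*\ge OPT(G_k)/(1+13\epsilon)$. For the additive term, case (a) means $\rho_{V_k}\ge\epsilon^6$, i.e.\ $W_{V_k}\ge\epsilon^6 n^2 D_V$, so by the average/random-arrangement bound of Fact~\ref{fact.avg_link}, $OPT(G_k)\ge\tfrac13 n\,W_{V_k}\ge\tfrac{\epsilon^6}{3}n^3 D_V$; hence $O(\epsilon^7 n^3 D_V)=O(\epsilon)\cdot OPT(G_k)$. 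Combining the three estimates gives $ALG_{d-w}(G_k)\ge(1-\epsilon^8)\bigl((1-13\epsilon)-O(\epsilon)\bigr)OPT(G_k)$, and since the hidden constant is small enough that $(1-13\epsilon)-O(\epsilon)\ge 1-16\epsilon$ and $(1-\epsilon^8)(1-16\epsilon)\ge 1-20\epsilon$ on $(0,1)$, this yields the claimed bound.

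The hard part will be the interface with the GGR tester: I must exhibit a guess on which the tester is \emph{guaranteed} to return a partition, which forces me to deal with $P^*$ only approximately satisfying the equal-bound constraints $\{\mu_{j,j'},\mu_{j,j'}\}$ (handled by widening $\Phi$ by $\epsilon^9$ and charging it to $\epsilon_{err}$), and then convert the tester's purely \emph{additive} slack in part sizes and inter-part weights into a \emph{multiplicative} $(1-O(\epsilon))$ loss. The latter is exactly where the density threshold is needed: $\rho_{V_k}\ge\epsilon^6$ forces $OPT(G_k)=\Omega(\epsilon^6 n^3 D_V)$, which dominates the $O(\epsilon^7 n^3 D_V)$ error by the required factor of $\epsilon$; with no lower bound on the density this last reduction would fail.
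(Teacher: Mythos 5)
Your proposal is correct and follows essentially the same route as the paper's proof: guess the inter-part weights of $P^*$ on the $\epsilon^9 n^2 D_V$ grid, invoke Theorem~\ref{general.thm.ggr} to recover a partition $P$ close to $P^*$ in sizes and inter-part weights, lower bound $ALG_{d-w}$ by the inner-span sum of $P$, compare to $OPT(G_k)$ via Lemma~\ref{la.lemma.1.5}, and absorb the additive $O(\epsilon^7 n^3 D_V)$ error using $\rho\ge\epsilon^6$ together with Fact~\ref{fact.avg_link}. Your explicit handling of the tester interface (widening the degenerate windows $\{\mu_{j,j'},\mu_{j,j'}\}$ so that $G_k\in\mathcal{GP}_{\Phi'}$ before charging the slack to $\epsilon_{err}$) is a fair, slightly more careful account of a step the paper folds into its combined $2\epsilon^9 n^2 D_V$ error term, but it does not change the argument.
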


\begin{proof}
Let $P = \{P_i\} $ denote the partition returned by $PT(G_k, \Phi, \epsilon_{err})$ and recall that its number of sets is $k = \frac{1}{\epsilon}$ and that $\epsilon_{err} = \epsilon^9$. We first observe that by Theorem \ref{general.thm.ggr} we are guaranteed that the error in $|P_i|$ compared to $|P^*_i| = \epsilon n$ is at most $|P_i| \geq \epsilon n - \epsilon_{err} n$ (due to the fact that in $\Phi$ we requested sets of size exactly $\epsilon n$). Therefore
\begin{equation}
\label{la.prop_2.1}
\begin{gathered}
ALG_{d-w} \geq 
\sum_{\substack{1\leq i \leq k-1\\ 1 \leq j \leq k-i}} W_{P_i, P_{i+j}} (|P_{i+1}| + \cdots + |P_{i+j-1}|) \geq
\sum_{\substack{1\leq i \leq k-1\\ 1 \leq j \leq k-i}} (j-1) (\epsilon n - \epsilon_{err} n) W_{P_i, P_{i+j}} ,
\end{gathered}
\end{equation}
where $W_{P_i, P_{i+j}}$ denotes the weight crossing between $P_i$ and $P_{i+j}$. For ease of presentation we will remove the subscript in the summation henceforth.

Consider the difference between the cut size of $W_{P_i, P_{i+j}}$ and $W_{P^*_i, P^*_{i+j}}$. Their difference originates from two errors: (1) the error that incurred by the PT algorithm (see Theorem \ref{general.thm.ggr}) and (2) the error $ALG_{d-w}$ incurred in order to guess the partition of $OPT(G_k)$ (see Algorithm \ref{alg.la_alg_dense}). Therefore, 
\[
W_{P_i, P_{i+j}} \geq W_{P^*_i, P^*_{i+j}} - \epsilon_{err} n^2 D_V - \epsilon^9 n^2 D_V = W_{P^*_i, P^*_{i+j}} - 2\epsilon^9 n^2 D_V
\]
where the last equality is since $\epsilon_{err} = \epsilon^9$. Combining this with inequality \ref{la.prop_2.1} yields
\begin{equation}
\label{la.prop_2.3}
\begin{gathered}
ALG_{d-w} \geq 
(\epsilon n - \epsilon_{err} n) \cdot \sum  (j-1)W_{P^*_i, P^*_{i+j}} - 
(\epsilon n - \epsilon_{err} n) \cdot 2(\epsilon^9 n^2) D_V \sum(j-1) \geq \\ 
(\epsilon n - \epsilon_{err} n) \cdot \sum  (j-1)W_{P^*_i, P^*_{i+j}} - 2n^3 \epsilon^7 D_V,
\end{gathered}
\end{equation}
where the last inequality follows since $\epsilon_{err} > 0$ and $\sum(j-1) = \sum_{i = 1}^k \sum_{j = i+1}^k (j-1) \leq k^3 = \epsilon^{-3}$.

Due to the fact that we are in case (a) we have that $\frac{W}{n^2 D_V} = \rho \geq \epsilon^6$. By Fact \ref{fact.avg_link} we have that $OPT \geq \frac13 nW$ and therefore $2n^3 \epsilon^7 D_V$ can be bounded by $2n^3 \epsilon^7 D_V \leq 2 \epsilon n W  \leq 6 \epsilon OPT$. Thus we get $2n^3 \epsilon^7 D_V \leq 6 \epsilon OPT(G_k)$. Combining this with inequality \ref{la.prop_2.3} yields
\begin{equation}
\label{la.prop_2.6}
ALG_{d-w} \geq 
(\epsilon n - \epsilon_{err} n) \cdot \sum (j-1) W_{P^*_i, P^*_{i+j}}  - 6 \epsilon OPT(G_k).
\end{equation}

On the other hand, recall that $P^*$ denotes the balanced partition where all sets are of size $\epsilon n$. Therefore, by Lemma \ref{la.lemma.1.5} we therefore get
\begin{equation}
\label{la.prop_2.2}
\begin{gathered}
OPT(G_k) \leq 
(1 + 13 \epsilon)\sum  W_{P^*_i, P^*_{i+j}} (|P^*_{i+1}| + \cdots + |P^*_{i+j-1}|) = \\
(1 + 13 \epsilon)\sum  (j-1) (\epsilon n) W_{P^*_i, P^*_{i+j}}  = 
\epsilon n(1 + 13 \epsilon)  \cdot \sum (j-1) W_{P^*_i, P^*_{i+j}} .
\end{gathered}
\end{equation}


\noindent Combining inequalities \ref{la.prop_2.6} and \ref{la.prop_2.2} yields
\begin{equation*}
\begin{gathered}
ALG_{d-w} \geq
\frac{\epsilon n - \epsilon_{err} n}{\epsilon n(1 + 13 \epsilon)} OPT(G_k) - 6\epsilon OPT(G_k) = \\
\frac{1 - \epsilon^8}{1 + 13 \epsilon} OPT(G_k) - 6\epsilon OPT(G_k) \geq
(1 - 20 \epsilon)OPT(G_k),
\end{gathered}
\end{equation*}
thereby concluding the proof.
\end{proof}

\subsubsection{Analyzing the Approximation Ratio of Case (b) of $ALG$}

Using our structural lemmas we will analyze the approximation ratio of $ALG$ applied to $G_k$ under the assumption that the algorithm terminated in case (b) (i.e., that $\rho < \epsilon^6$ and $W_{B \cup C} \leq \epsilon W_{G_k}$). The full proof is deferred to the Appendix.

\begin{proposition}
\label{la.prop.3}
If $ALG$ terminates in case (b) then $\frac{ALG(G_k)}{OPT(G_k)} \geq 1 - 33 \epsilon.$
\end{proposition}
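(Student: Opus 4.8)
The plan is to show that when $ALG$ halts in case (b), both $ALG(G_k)$ and $OPT(G_k)$ are essentially governed by the single quantity $\tfrac12 n W_{A,C}$, so that their ratio is forced to be close to $1$. Here $V = V_k$, and $A, B, C$ are the sets produced by $ALG$ on $G_k$, with $\rho = \rho_{V_k} < \epsilon^6$ and $W_{B\cup C} = W_{V\setminus A} < \epsilon W_V$. Throughout I would abbreviate $W = W_V$, $D_V$ the diameter, and use that $\rho < \epsilon^6$ implies $\sqrt\rho/\epsilon^2 < \epsilon$, so all error terms of the form $c\sqrt\rho/\epsilon^2$ are at most $c\epsilon$.

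First I would handle $ALG(G_k)$ from below. Since $ALG$ places $A$ entirely to the left of $V\setminus A$, Corollary \ref{la.cor.0} gives $ALG(G_k) \ge \sum_{a\in A, c\in C} w_{a,c}y_{a,c} \ge \tfrac12 n W_{A,C}(1 - 5\sqrt\rho/\epsilon^2) \ge \tfrac12 n W_{A,C}(1-5\epsilon)$, since all contributions to the objective are non-negative. Next I would bound $OPT(G_k)$ from above. The key point is that most of the weight of $G_k$ is the inter-weight $W_{A,C}$: by Lemma \ref{la.lemma.7.5.5}, $W_A + W_{A,B} \le 2(\sqrt\rho/\epsilon^2)W_{A,C} \le 2\epsilon W_{A,C}$, and by the case-(b) hypothesis $W_{B\cup C} = W_B + W_{B,C} + W_C < \epsilon W$; combining these with $W = W_A + W_B + W_C + W_{A,B} + W_{A,C} + W_{B,C}$ shows $W \le W_{A,C}(1 + O(\epsilon)) + \epsilon W$, hence $W \le W_{A,C}(1 + O(\epsilon))$. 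Now split $OPT(G_k)$ into the part on pairs with one endpoint in $A$ and one in $C$, and everything else. The first part is at most $(n - \tfrac{n_C}{2})(W_{A,C} + n_C n_A D_C)$ by Lemma \ref{la.lemma.7}, which Corollary \ref{la.cor.1} simplifies to $\tfrac12 n W_{A,C}(1 + 9\sqrt\rho/\epsilon^2) \le \tfrac12 n W_{A,C}(1+9\epsilon)$. For the remaining pairs $(u,v)$ — those with both endpoints outside the $(A,C)$ bipartition, i.e. pairs inside $A$, inside $B\cup C$, or between $A$ and $B$ — I would use the trivial bound $y_{u,v}\le n$, giving a contribution at most $n(W_A + W_{A,B} + W_{B\cup C}) \le n(2\epsilon W_{A,C} + \epsilon W) \le n\cdot\epsilon W_{A,C}\cdot O(1) = O(\epsilon)\cdot n W_{A,C}$, using $W \le W_{A,C}(1+O(\epsilon))$. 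Adding the two parts, $OPT(G_k) \le \tfrac12 n W_{A,C}(1 + O(\epsilon))$.

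Putting the two bounds together, $\frac{ALG(G_k)}{OPT(G_k)} \ge \frac{\tfrac12 n W_{A,C}(1-5\epsilon)}{\tfrac12 n W_{A,C}(1+O(\epsilon))} \ge 1 - 33\epsilon$ once the constants are tracked carefully (using Observation \ref{general.ob.0}(2) to convert $1/(1+c\epsilon)$ into $1 - c'\epsilon$, assuming $\epsilon$ small enough that the hypotheses there hold). One degenerate case needs a remark: if $W_{A,C} = 0$ then $A$ must be empty (any point of $A$ is at distance $\ge \epsilon^2 D_V$ from every point of $C$, and if $C\ne\emptyset$ and $A\ne\emptyset$ this distance is positive, forcing $W_{A,C}>0$; and $C\ne\emptyset$ whenever $\rho<1$ by Lemma \ref{general.lemma.1}), so the algorithm peeled nothing, $V\setminus A = V$, the case-(b) test says $W < \epsilon W$, hence $W = 0$ and both $ALG$ and $OPT$ are $0$ — the ratio statement is then vacuous or interpreted as $1$.

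The main obstacle I anticipate is the bookkeeping of constants: one must make sure the loose bound $y_{u,v}\le n$ on the "leftover" pairs, after being multiplied through by the accumulated $(1+O(\epsilon))$ factors relating $W$ to $W_{A,C}$, still only costs $O(\epsilon)\cdot nW_{A,C}$ and that the final tally of all the $\epsilon$-terms (the $5\epsilon$ from Corollary \ref{la.cor.0}, the $9\epsilon$ from Corollary \ref{la.cor.1}, the $2\epsilon$ from Lemma \ref{la.lemma.7.5.5}, the $\epsilon$ from the case-(b) threshold, and the conversion losses) lands at or below $33\epsilon$. A secondary subtlety is legitimately discarding the $n_C n_A D_C$ terms in Lemmas \ref{la.lemma.6} and \ref{la.lemma.7}: this is exactly what Corollaries \ref{la.cor.0} and \ref{la.cor.1} already do using $n_C \ge n(1-\sqrt\rho)$ and $D_C \le 4 D_V\sqrt\rho$ from Lemma \ref{general.lemma.1}, so I would simply invoke the corollaries rather than re-derive them.
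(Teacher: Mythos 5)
Your proposal is correct and follows essentially the same route as the paper's proof: you isolate the $(A,C)$ cross-contribution, control the remaining weight via Lemma \ref{la.lemma.7.5.5} together with the case-(b) condition, and compare $ALG$ and $OPT$ on the cross part through Corollaries \ref{la.cor.0} and \ref{la.cor.1}. The only (immaterial) difference is bookkeeping: the paper absorbs the leftover terms into $OPT(G_k)$ via Fact \ref{fact.avg_link}, whereas you express both bounds directly in units of $\tfrac12 n W_{A,C}$ after relating $W$ to $W_{A,C}$; the constants work out within $1-33\epsilon$ either way.
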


\begin{proof} [Sketch]
    The proof follows the following path. Due to the fact that most of the instance's density is centered at the metric's core $C$, the majority of $OPT(G_k)$'s objective is derived from weights incident to $C$. Since we are case (b), the weight of $W_{B \cup C}$ is negligible and therefore we will show that in fact $OPT(G_k)$'s objective is defined by $\opt{G_k}{A,C}$. Thereafter, we show that in fact the best strategy to optimize for weights in $W_{A,C}$ is to place $A$ at one extreme of the line and $C$ at the other - which, fortunately, is what $ALG(G_k)$ (approximately) does - thereby approximating $OPT(G_k)$.
\end{proof}

\subsubsection{Setting the Values $\alpha_i$, $\beta_i$ and $\gamma_i$}

Due to lack of space, the following proofs are deferred to the Appendix.

\begin{proposition}
\label{la.prop.4}
For $A_i$ and $C_i$ as defined by our algorithm applied to $G_i$ and for $\alpha_i = \frac12 n W_{A, C} (1 - \frac{5 \sqrt{\rho}}{\epsilon^2})$, we have $ALG(G_i) \geq \alpha_i + ALG(G_{i+1})$.
\end{proposition}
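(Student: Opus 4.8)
The plan is to decompose the objective value produced by $ALG(G_i)$ along the bipartition $V_i = A_i \sqcup (V_i \setminus A_i)$ that the algorithm uses in recursion step $i$, and then to bound each of the three resulting contributions separately using only nonnegativity of weights/distances and Corollary \ref{la.cor.0}.

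Concretely, I would let $y$ denote the linear arrangement returned by $ALG(G_i)$. By the description of case (c) in Algorithm \ref{alg.la_alg}, $y$ places every point of $A_i$ to the left of every point of $V_i \setminus A_i$, orders $A_i$ arbitrarily, and orders $V_i \setminus A_i$ exactly as in the recursive call $ALG(G_{i+1})$. This gives the exact decomposition
\[
ALG(G_i) \;=\; \sol{G_i}{A_i} \;+\; \sol{G_i}{A_i,\, V_i \setminus A_i} \;+\; \sol{G_i}{V_i \setminus A_i}.
\]
I would then handle the three terms in turn. The first term is a sum of products of nonnegative weights and nonnegative distances, so $\sol{G_i}{A_i} \geq 0$. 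The third term equals $ALG(G_{i+1})$ exactly: since $A_i$ sits entirely to the left, restricting $y$ to $V_i \setminus A_i$ reproduces the arrangement of the recursive call up to a global shift by $n_{A_i}$, and the objective depends only on the shift-invariant differences $|y_u - y_v|$. For the middle (cross) term, since $C_i \subseteq V_i \setminus A_i$ and every summand is nonnegative, $\sol{G_i}{A_i,\, V_i \setminus A_i} \geq \sum_{a \in A_i,\, c \in C_i} w_{a,c} y_{a,c}$; moreover $y$ places all of $A_i$ to the left of all of $C_i$, so the hypothesis of Corollary \ref{la.cor.0} is met and that corollary bounds the right-hand side below by $\tfrac12 n W_{A_i, C_i}\bigl(1 - \tfrac{5\sqrt{\rho}}{\eps^2}\bigr) = \alpha_i$. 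Summing the three bounds yields $ALG(G_i) \geq \alpha_i + ALG(G_{i+1})$.

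I do not expect a genuinely hard step here: the substantive work is entirely carried by Corollary \ref{la.cor.0} (equivalently Lemma \ref{la.lemma.6}), which is already available, together with nonnegativity. The only point that warrants a little care is the recursive bookkeeping — making sure that $ALG$'s arrangement really does place $A_i$ entirely on one side of $V_i \setminus A_i$, so that (i) the hypothesis of Corollary \ref{la.cor.0} genuinely holds on the $A_i$--$C_i$ pairs, and (ii) the within-$(V_i\setminus A_i)$ contribution is exactly $ALG(G_{i+1})$ rather than merely bounded by it.
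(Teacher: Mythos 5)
Your proposal is correct and follows essentially the same route as the paper: decompose $ALG(G_i)$'s value over $A_i$, the $A_i$--$(B_i\cup C_i)$ cross pairs, and $B_i\cup C_i$, identify the last term with $ALG(G_{i+1})$ (via shift invariance of the restricted arrangement), drop the nonnegative within-$A_i$ term, restrict the cross term to $A_i$--$C_i$ pairs, and invoke Corollary \ref{la.cor.0}. The only difference is that you spell out the shift-invariance bookkeeping and the verification of Corollary \ref{la.cor.0}'s hypothesis, which the paper leaves implicit.
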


\begin{proposition}
\label{la.prop.8}
Let $G_i = (V_i, w_i)$ and $G_{i+1} = (V_{i+1}, w_{i+1})$ denote the instances defined by the $i$ and $i+1$ recursion steps. Furthermore let $\beta_i = \frac12 n_{V_i} W_{A_i, C_i} (1 + \frac{13 \sqrt{\rho}}{\epsilon^2})$ and $\gamma_i =  1 + 4 \sqrt{\rho_i}$. Therefore,
$
OPT(G_i) \leq 
\beta_i + \gamma_i OPT(G_{i+1}).
$
\end{proposition}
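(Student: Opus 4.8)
The plan is to bound $OPT(G_i)$ by relating the optimal arrangement on $V_i = A_i \cup C_i \cup B_i$ to an arrangement on the recursion instance $G_{i+1} = (V_i \setminus A_i, w_i)$. Fix an optimal linear arrangement $y^*$ of $G_i$. I would split its value as $OPT(G_i) = \sol{G_i}{A_i} + \sol{G_i}{A_i, V_i \setminus A_i} + \sol{G_i}{V_i \setminus A_i}$. The last term is the value $y^*$ induces on the pairs inside $V_{i+1}$; restricting $y^*$ to $V_{i+1}$ (preserving relative order) only decreases each $y_{i,j}$, so $\sol{G_i}{V_i \setminus A_i} \le OPT(G_{i+1})$. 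This already gives the $OPT(G_{i+1})$ piece with coefficient $1$; the factor $\gamma_i = 1 + 4\sqrt{\rho_i}$ will be needed only to absorb a lower-order cross-term as explained below, so a first pass can aim at coefficient exactly $1$ and see where the slack must go.

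Next I would control the first two terms, i.e. everything touching $A_i$. For the within-$A_i$ term, the trivial bound $y_{a,a'} \le n_{V_i}$ gives $\sol{G_i}{A_i} \le n_{V_i} W_{A_i}$, and for the $A_i$-to-rest term, splitting the "rest" into $C_i$ and $B_i$ and again using $y \le n_{V_i}$ gives $\sol{G_i}{A_i, B_i} \le n_{V_i} W_{A_i, B_i}$, while for $\sol{G_i}{A_i, C_i}$ I would invoke Corollary \ref{la.cor.1}, which yields $\sol{G_i}{A_i, C_i} \le \frac12 n_{V_i} W_{A_i, C_i}(1 + \tfrac{9\sqrt{\rho_i}}{\eps^2})$. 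Summing, the $A_i$-incident contribution is at most $\frac12 n_{V_i} W_{A_i, C_i}(1 + \tfrac{9\sqrt{\rho_i}}{\eps^2}) + n_{V_i}(W_{A_i} + W_{A_i, B_i})$. Now apply Lemma \ref{la.lemma.7.5.5}, which says $W_{A_i} + W_{A_i, B_i} \le \tfrac{2\sqrt{\rho_i}}{\eps^2} W_{A_i, C_i}$; this turns the last summand into $\tfrac{2\sqrt{\rho_i}}{\eps^2} n_{V_i} W_{A_i, C_i}$, so the whole $A_i$-incident part is at most $\frac12 n_{V_i} W_{A_i, C_i}\big(1 + \tfrac{9\sqrt{\rho_i}}{\eps^2} + \tfrac{4\sqrt{\rho_i}}{\eps^2}\big) = \frac12 n_{V_i} W_{A_i, C_i}(1 + \tfrac{13\sqrt{\rho_i}}{\eps^2}) = \beta_i$. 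Note $\rho$ and $n$ in the statement refer to $\rho_i$ and $n_{V_i}$ for the instance $G_i$; I would make that identification explicit.

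The one subtlety — and the step I expect to be the main obstacle — is the cross-term $\sol{G_i}{A_i, B_i}$ being bounded against $W_{A_i,C_i}$ rather than against anything in $G_{i+1}$, and more importantly making sure nothing has been double-charged: the restriction-to-$V_{i+1}$ bound used $y^*$'s order on $V_{i+1}$, and I need $OPT(G_{i+1})$ to genuinely dominate $\sol{G_i}{V_i\setminus A_i}$ with no loss, which it does since any arrangement of $V_{i+1}$ is feasible for $G_{i+1}$ and the weights are inherited. The $\gamma_i = 1 + 4\sqrt{\rho_i}$ factor then provides the cushion in case the clean accounting above needs to pay an extra $O(\sqrt{\rho_i})\, OPT(G_{i+1})$ term — for instance if one prefers to bound $n_{V_i} \le n_{V_{i+1}}/(1 - \sqrt{\rho_i})$-type quantities or if $\sol{G_i}{A_i,B_i}$ must instead be partially absorbed into $OPT(G_{i+1})$ via $W_{B_i} \le$ (something controlled by $OPT(G_{i+1})$ through Fact \ref{fact.avg_link}). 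Concretely, since $G_{i+1}$ has $n_{V_{i+1}} = n_{V_i} - n_{A_i} \ge n_{V_i}(1 - \sqrt{\rho_i})$ by Lemma \ref{general.lemma.1} (as $A_i \subseteq V_i \setminus C_i$ and $n_{C_i} \ge n_{V_i}(1-\sqrt{\rho_i})$ forces $n_{A_i} \le \sqrt{\rho_i} n_{V_i}$), any place where an $n_{V_{i+1}}$ needs to be traded for an $n_{V_i}$ costs a factor $(1-\sqrt{\rho_i})^{-1} \le 1 + 2\sqrt{\rho_i}$ by Observation \ref{general.ob.0}(2), comfortably inside $\gamma_i$. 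I would finish by assembling $OPT(G_i) \le \beta_i + \gamma_i OPT(G_{i+1})$ from these pieces.
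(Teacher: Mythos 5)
Your treatment of the $\beta_i$ part matches the paper exactly: bounding $\opt{G_i}{A_i} + \opt{G_i}{A_i,B_i}$ by $n_{V_i}(W_{A_i}+W_{A_i,B_i})$ and then by $\frac{2\sqrt{\rho_i}}{\eps^2} n_{V_i} W_{A_i,C_i}$ via Lemma \ref{la.lemma.7.5.5}, and bounding $\opt{G_i}{A_i,C_i}$ via Corollary \ref{la.cor.1}, which together give $\beta_i$. The gap is in the inner term $\opt{G_i}{B_i\cup C_i}$. You argue that restricting the optimal arrangement $y^*$ to $V_{i+1}$ only decreases each $y_{u,v}$, and conclude that $\opt{G_i}{B_i\cup C_i}\le OPT(G_{i+1})$ ``with no loss.'' For a maximization objective this is a non sequitur: feasibility of the restricted arrangement gives $OPT(G_{i+1}) \ge$ (value of the restricted arrangement), but the restricted value is \emph{smaller} than $\opt{G_i}{B_i\cup C_i}$, not larger. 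The points of $A_i$ that $y^*$ interleaves among $V_{i+1}$ inflate each distance $y^*_{u,v}$ with $u,v\in B_i\cup C_i$ by up to $n_{A_i}$ beyond anything realizable on a line of length $n_{V_{i+1}}$; e.g.\ if $y^*$ puts all of $A_i$ in the middle, pairs of $V_{i+1}$ straddling it have $y^*_{u,v} > n_{V_{i+1}}-1$. So coefficient exactly $1$ on $OPT(G_{i+1})$ cannot be had this way.

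The correct accounting, and the paper's, is $\opt{G_i}{B_i\cup C_i}\le n_{A_i}W_{B_i\cup C_i}+\optsin{G_{i+1}}$, after which the extra term is absorbed into $\gamma_i$: by Fact \ref{fact.avg_link}, $\optsin{G_{i+1}}\ge\frac13 n_{B_i\cup C_i}W_{B_i\cup C_i}$, and by Lemma \ref{general.lemma.1}, $n_{A_i}\le\frac{\sqrt{\rho_i}}{1-\sqrt{\rho_i}}n_{B_i\cup C_i}$, hence $n_{A_i}W_{B_i\cup C_i}\le\frac{3\sqrt{\rho_i}}{1-\sqrt{\rho_i}}\optsin{G_{i+1}}\le 4\sqrt{\rho_i}\,\optsin{G_{i+1}}$, which is precisely where $\gamma_i=1+4\sqrt{\rho_i}$ is spent. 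You do gesture at the $\gamma_i$ cushion, but only speculatively and for the wrong term (the cross-term $\sol{G_i}{A_i,B_i}$, or $W_{B_i}$, which is already fully charged to $\beta_i$ via Lemma \ref{la.lemma.7.5.5}); the actual place the slack is needed is the inner $B_i\cup C_i$ contribution your main argument claims to pass through at cost $1$.
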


\noindent Thus, we have managed to set the values of $\alpha_i$, $\beta_i$ and $\gamma_i$ as follows.

\begin{definition}
\label{la.def.3}
We define the values $\alpha_i$, $\beta_i$ and $\gamma_i$ as follows

\begin{equation}
\label{la.eq.1}
\begin{gathered}
    \alpha_i = \frac12 n W_{A_i, C_i}(\conslaalphai); \quad
    \beta_i = \frac12 n_{V_i} W_{A_i, C_i} (1 + \frac{13 \sqrt{\rho_i}}{\epsilon^2}); \quad
    \gamma_i  =  1 + 4 \sqrt{\rho_i}.
\end{gathered}    
\end{equation}
\end{definition}

\subsubsection{Putting it all Together}

Now that we have analyzed the terminal cases of the algorithm (cases (a) and (b)) and that we have set the values of $\alpha_i$, $\beta_i$ and $\gamma_i$ we will to combine these results to prove $ALG$'s approximation ratio (as in Observation \ref{general.ob.0}). In order to so we must therefore bound the values $\min_i \{\frac{\alpha_i}{\beta_i \Pi_{j=0}^{i-1} \gamma_j}\}$ and $\frac{ALG(G_k)}{(\Pi_{i=0}^{k-1} \gamma_i)OPT(G_k)}$. However, before doing so we will first show that $\Pi_{j=0}^{i-1} \gamma_j$ converges. Recall that $\gamma_i = 1 + 4 \sqrt{\rho_i}$. The following lemma shows that the instances' densities ($\rho_i$) increase at a fast enough rate (exponentially) in order for $\Pi_{j=0}^{i-1} \gamma_j$ to converge. 

\begin{lemma}
\label{la.lemma.9}
For all $i = 1, \ldots, k-1$ we are guaranteed that $\rho_{i+1} \geq 4 \rho_i$.
\end{lemma}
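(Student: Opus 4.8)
The plan is to track how the three quantities $W_{V_{i+1}}$, $n_{V_{i+1}}$, $D_{V_{i+1}}$ evolve in one recursion step and plug them into the definition $\rho_{i+1} = W_{V_{i+1}}/(n_{V_{i+1}}^2 D_{V_{i+1}})$. Recall the anatomy of step $i$: since $ALG$ recursed on $G_i$, it was not in case (a) (so $\rho_i < \eps^6$) and not in case (b) (so $W_{V_{i+1}} = W_{V_i \setminus A_i} \geq \eps W_{V_i}$, by the case-(c) condition). It chose a core $C_i \subseteq V_i$ which, by Lemma \ref{general.lemma.1}, satisfies $D_{C_i} \leq 4 D_{V_i}\sqrt{\rho_i}$, and set $A_i = \{v \in V_i : d(v, C_i) \geq \eps^2 D_{V_i}\}$, recursing on $V_{i+1} = V_i \setminus A_i$. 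In particular every point of $V_{i+1}$ is either in $C_i$ or lies within distance strictly less than $\eps^2 D_{V_i}$ of some point of $C_i$.

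The key step is bounding the new diameter $D_{V_{i+1}}$. First I would observe that for any $u, v \in V_{i+1}$ there exist $c_u, c_v \in C_i$ with $w_{u, c_u}, w_{v, c_v} < \eps^2 D_{V_i}$ (taking $c_u = u$ if $u \in C_i$, and likewise for $v$); then two applications of the triangle inequality give $w_{u,v} \leq w_{u,c_u} + w_{c_u, c_v} + w_{c_v, v} < 2\eps^2 D_{V_i} + D_{C_i} \leq \bigl(2\eps^2 + 4\sqrt{\rho_i}\bigr) D_{V_i}$. Hence $D_{V_{i+1}} \leq (2\eps^2 + 4\sqrt{\rho_i}) D_{V_i}$, i.e.\ peeling shrinks the diameter by a factor roughly $\eps^2$. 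This is where the metric (triangle-inequality) assumption is essential, and it is really the only nontrivial ingredient.

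Next I would assemble the estimate. Using $W_{V_{i+1}} \geq \eps W_{V_i}$, the trivial bound $n_{V_{i+1}} \leq n_{V_i}$, the diameter bound above, and the fact that $\rho_i < \eps^6$ forces $2\eps^2 + 4\sqrt{\rho_i} < 2\eps^2 + 4\eps^3 \leq 6\eps^2$ (assuming $\eps \leq 1$), I get
\[
\rho_{i+1} \;=\; \frac{W_{V_{i+1}}}{n_{V_{i+1}}^2 D_{V_{i+1}}} \;\geq\; \frac{\eps\, W_{V_i}}{n_{V_i}^2 \cdot 6\eps^2 D_{V_i}} \;=\; \frac{1}{6\eps}\,\rho_i \;\geq\; 4\rho_i,
\]
where the last inequality holds for $\eps \leq \tfrac1{24}$, which we may assume throughout since we only need the result for arbitrarily small constant $\eps$. (If one prefers not to assume $\eps$ small, one can instead observe $4\eps^3 \leq \eps^2$ for $\eps \leq \tfrac14$, tightening the diameter factor to $3\eps^2$ and giving $\rho_{i+1} \geq \tfrac1{3\eps}\rho_i \geq 4\rho_i$ for $\eps \leq \tfrac1{12}$.)

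I do not expect a genuine obstacle here: the argument is short and the same proof in fact works for every recursion index $i \in \{0, \dots, k-1\}$, not just $i \geq 1$. The only points requiring care are (i) correctly invoking the triangle inequality twice to pass through two core points, and (ii) the bookkeeping of constants to verify that the contraction factor $1/(6\eps)$ (or $1/(3\eps)$) indeed dominates $4$ once $\eps$ is below an absolute constant — both of which are routine.
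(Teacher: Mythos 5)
Your proposal is correct and follows essentially the same route as the paper's proof: bound the new diameter by $(2\eps^2 + 4\sqrt{\rho_i})D_{V_i}$ via two applications of the triangle inequality through the core, combine with $W_{V_{i+1}} \geq \eps W_{V_i}$ and $n_{V_{i+1}} \leq n_{V_i}$, and use $\rho_i < \eps^6$ to conclude the contraction factor dominates $4$ for small $\eps$. The only difference is cosmetic bookkeeping of constants ($1/(6\eps)$ versus the paper's $1/(3\eps)$).
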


\begin{proof}
Let $V$ denote the set of nodes of $G_i$. Recall the notations $A$, $B$ and $C$ defined by our algorithm applied to $V$ (in particular, the set of nodes of $G_{i+1}$ is exactly $B \cup C$). Therefore, if we denote by $D_{B-C}$ the largest distance between any point in $B$ and its closest point in $C$, then
$
D_{B \cup C} \leq 2D_{B-C} + D_C \leq 2\epsilon^2 D_V + 4D_V \sqrt{\rho_i},
$
where the first inequality follow from the triangle inequality and the second follows due to the fact that $B$ is defined as the set of all points of distance at most $\epsilon^2$ from $C$. Therefore,
\begin{equation}
\label{la.equation.lemma_9.1}
\begin{gathered}
\rho_{i+1} = 
\frac{W_{B \cup C}}{n^2_{B \cup C} \cdot D_{B \cup C}} \geq 
\frac{W_{V}}{n^2_{V} \cdot D_{V}} (\frac{\epsilon}{2\epsilon^2 + 4 \sqrt{\rho_i}}) = 
\rho_i (\frac{\epsilon}{2\epsilon^2 + 4 \sqrt{\rho_i}}),    
\end{gathered}
\end{equation}
where the equalities follows by the definition of $\rho_i$ and the inequality follows due to the fact that $W_{B \cup C} \geq \epsilon W_V$ (which follows due to the fact that we are in case (c)), $n_{B \cup C} \leq n_V$ and $D_{B \cup C} \leq (2\epsilon^2  + 4\sqrt{\rho_i})D_V$ (as stated above). Since we are in case (c), we are guaranteed that $\rho_i \leq \epsilon^6$ and therefore
\begin{equation}
\label{la.equation.lemma_9.2}
\begin{gathered}
\frac{\epsilon}{2\epsilon^2 + 4 \sqrt{\rho_i}} \geq
\frac{\epsilon}{2\epsilon^2 + 4 \epsilon^3} \geq
\frac{1}{3 \epsilon},
\end{gathered}
\end{equation}
since $\epsilon \leq 10^{-2}$. Combining inequalities \ref{la.equation.lemma_9.1} and \ref{la.equation.lemma_9.2}, and since $\epsilon < 10^{-2}$ yields
$
\rho_{i+1}  \geq
\rho_i (\frac{\epsilon}{2\epsilon^2 + 4 \sqrt{\rho_i}}) \geq
\frac{\rho_i}{3 \epsilon} \geq
4 \rho_i,
$
thereby concluding the proof.
\end{proof}

\noindent We are now ready to show that $\Pi_{j=0}^{i-1} \gamma_j$ converges. 

\begin{lemma}
\label{la.lemma.10}
For $\gamma_i = 1 + 4 \sqrt{\rho_i}$ we have $\Pi_{j=0}^{i-1} \gamma_j \leq 1 + 5 \sqrt{\rho_i}$.
\end{lemma}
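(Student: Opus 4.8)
The claim is that the partial product $\Pi_{j=0}^{i-1}\gamma_j = \Pi_{j=0}^{i-1}(1+4\sqrt{\rho_j})$ is bounded by $1+5\sqrt{\rho_i}$. The key input is Lemma~\ref{la.lemma.9}, which gives $\rho_{j+1}\geq 4\rho_j$ for every $j$, hence by a trivial induction $\rho_j \leq \rho_i / 4^{\,i-j}$ for all $j \leq i$, i.e. $\sqrt{\rho_j} \leq \sqrt{\rho_i}/2^{\,i-j}$. The plan is to first use Observation~\ref{general.ob.0}(3) (or a direct exponential bound) to pass from the product to a sum: $\Pi_{j=0}^{i-1}(1+4\sqrt{\rho_j}) \leq \exp\big(\sum_{j=0}^{i-1} 4\sqrt{\rho_j}\big)$, since $1+x\leq e^x$. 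Then bound the exponent by the geometric series
\[
\sum_{j=0}^{i-1} 4\sqrt{\rho_j} \;\leq\; 4\sqrt{\rho_i}\sum_{j=0}^{i-1} 2^{-(i-j)} \;=\; 4\sqrt{\rho_i}\sum_{m=1}^{i} 2^{-m} \;<\; 4\sqrt{\rho_i}.
\]

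Second, I would convert the exponential back: we have $\exp(4\sqrt{\rho_i}) $, and we want to show this is at most $1+5\sqrt{\rho_i}$. This uses that $\rho_i$ is small — recall $\rho_i \leq \epsilon^6$ with $\epsilon \leq 10^{-2}$, so $\sqrt{\rho_i} \leq \epsilon^3$ is tiny, and for small $x$ we have $e^{4x} \leq 1+5x$. Indeed $e^{4x} = 1 + 4x + 8x^2 + \cdots \leq 1 + 4x + (\text{something} \leq x)$ when $x$ is small enough; concretely $e^{4x}-1-4x \leq x$ holds whenever $x \leq c$ for a fixed constant $c$, and $\sqrt{\rho_i}\leq\epsilon^3 \leq 10^{-6}$ certainly satisfies this. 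Alternatively, one can use Observation~\ref{general.ob.0}(3) directly in the form $e^{k\alpha} < 1+(k+1)\alpha$ with $\alpha = \sqrt{\rho_i}$ and $k=4$, which gives exactly $\Pi_{j=0}^{i-1}\gamma_j \leq e^{4\sqrt{\rho_i}} < 1 + 5\sqrt{\rho_i}$ provided the hypothesis $\alpha \in (0, \tfrac{1}{k(k+1)}) = (0,\tfrac1{20})$ of that observation is met — and again $\sqrt{\rho_i}\leq\epsilon^3$ is far below $1/20$.

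The only mild subtlety — and the one place to be careful — is making sure the geometric-series step is set up with the right indexing: Lemma~\ref{la.lemma.9} is stated for $i=1,\ldots,k-1$, so iterating it down from index $i$ to index $j$ requires $1 \leq j+1 \leq i$, which is fine for all $j$ in the range $0,\ldots,i-1$ that appears in the product. I would also double-check that the empty product case $i=0$ (or $i=1$) is handled: for $i=0$ the product is $1 \leq 1+5\sqrt{\rho_0}$ trivially, and for $i=1$ it is $1+4\sqrt{\rho_0}\leq 1+4\sqrt{\rho_1}\leq 1+5\sqrt{\rho_1}$ using $\rho_0\leq\rho_1$. Beyond that, the whole argument is a short chain of inequalities: geometric decay of $\sqrt{\rho_j}$, then $\prod(1+x_j)\leq e^{\sum x_j}$, then $\sum x_j < 4\sqrt{\rho_i}$, then $e^{4\sqrt{\rho_i}} < 1+5\sqrt{\rho_i}$ by smallness of $\rho_i$. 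No real obstacle; it is essentially bookkeeping on top of Lemma~\ref{la.lemma.9} and Observation~\ref{general.ob.0}.
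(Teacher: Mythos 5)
Your proposal is correct and follows the same route as the paper: bound the product by $e^{4\sum_j\sqrt{\rho_j}}$, use the geometric growth $\rho_{j+1}\geq 4\rho_j$ (Lemma~\ref{la.lemma.9}) to bound the exponent by $4\sqrt{\rho_i}$, and finish with Observation~\ref{general.ob.0}(3) and the smallness of $\rho_i$. The only differences are cosmetic: you spell out the geometric-series and edge-case bookkeeping that the paper leaves implicit (and you correctly cite Lemma~\ref{la.lemma.9}, where the paper's proof points to its HC analogue Lemma~\ref{hc.lemma.7}).
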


\begin{proof}
Observe that 
$
\Pi_{j=0}^{i-1} (1 + 4\sqrt{\rho_j}) \leq
e^{4 \cdot \sum_j \sqrt{\rho_j}} \leq
e^{4 \sqrt{\rho_i}} \leq
1 + 5 \sqrt{\rho_i},
$
where the first inequality follows from Observation \ref{general.ob.0}, the second follows since $\sqrt{\rho_j}$ are exponentially increasing (Lemma \ref{hc.lemma.7}) and the third inequality follows again by Observation \ref{general.ob.0} combined with the fact that $\rho < \epsilon^2$ and $\epsilon < 10^{-2}$.
\end{proof}

\noindent Next we leverage the former lemma to bound $\min_i \{\frac{\alpha_i}{\beta_i \Pi_{j=0}^{i-1} \gamma_j}\}$ and $\frac{ALG(G_k)}{(\Pi_{i=0}^{k-1} \gamma_i)OPT(G_k)}$.

\begin{proposition}
\label{la.prop.11}
For $\alpha_i$, $\beta_i$ and $\gamma_i$ as in Definition \ref{la.def.3}, we have $\min_i \{\frac{\alpha_i}{\beta_i \Pi_{j=0}^{i-1} \gamma_j}\} \geq 
1 - 23\epsilon$.
\end{proposition}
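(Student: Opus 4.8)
The plan is to unwind the definitions from Definition~\ref{la.def.3} and reduce the whole statement to one scalar inequality. First I would note that the common prefactor $\frac12 n_{V_i} W_{A_i, C_i}$ appearing in both $\alpha_i$ and $\beta_i$ cancels, so that
\[
\frac{\alpha_i}{\beta_i} = \frac{1 - 5\sqrt{\rho_i}/\epsilon^2}{1 + 13\sqrt{\rho_i}/\epsilon^2}.
\]
Then I would apply Lemma~\ref{la.lemma.10}, which gives $\Pi_{j=0}^{i-1}\gamma_j \leq 1 + 5\sqrt{\rho_i}$, and therefore
\[
\frac{\alpha_i}{\beta_i \, \Pi_{j=0}^{i-1}\gamma_j} \;\geq\; \frac{1 - 5\sqrt{\rho_i}/\epsilon^2}{\bigl(1 + 13\sqrt{\rho_i}/\epsilon^2\bigr)\bigl(1 + 5\sqrt{\rho_i}\bigr)}.
\]

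The key quantitative input is that each recursive step $G_i$ (for $i = 0, \ldots, k-1$) is reached through case~(c), so $\rho_i < \epsilon^6$; hence $\sqrt{\rho_i} < \epsilon^3$ and in particular $\sqrt{\rho_i}/\epsilon^2 < \epsilon$. Substituting these crude bounds and using $\epsilon < 10^{-2}$ together with the elementary estimates of Observation~\ref{general.ob.0} to linearize, I obtain
\[
\frac{\alpha_i}{\beta_i \, \Pi_{j=0}^{i-1}\gamma_j} \;\geq\; \frac{1 - 5\epsilon}{(1 + 13\epsilon)(1 + 5\epsilon^3)} \;\geq\; \frac{1 - 5\epsilon}{1 + 14\epsilon} \;\geq\; (1 - 5\epsilon)(1 - 14\epsilon) \;\geq\; 1 - 23\epsilon.
\]
Since this lower bound does not depend on $i$, it also bounds $\min_i\{\cdot\}$, which is exactly the claim.

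The only thing to be careful about is the bookkeeping: every symbol ($n$, $\rho$, $A$, $C$) must be read at scale $i$, and it is the case-(c) hypothesis $\rho_i < \epsilon^6$ — stronger than the $\rho_i < \epsilon^2$ invoked in some of the other lemmas — that lets $\sqrt{\rho_i}/\epsilon^2$ be absorbed into $O(\epsilon)$. I do not expect a genuine obstacle here: the constant $23$ is loose (the chain above in fact yields $1 - 19\epsilon$), so any reasonable expansion of the product in the denominator and inversion of $1 + O(\epsilon)$ will close the argument.
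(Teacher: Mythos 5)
Your proof is correct and follows essentially the same route as the paper: cancel the common prefactor to get $\frac{\alpha_i}{\beta_i}=\frac{1-5\sqrt{\rho_i}/\epsilon^2}{1+13\sqrt{\rho_i}/\epsilon^2}$, invoke Lemma~\ref{la.lemma.10} for $\Pi_{j<i}\gamma_j\le 1+5\sqrt{\rho_i}$, and absorb everything using the case-(c) bound $\rho_i<\epsilon^6$. The only cosmetic difference is that you substitute $\sqrt{\rho_i}/\epsilon^2<\epsilon$ separately for each $i$, while the paper keeps the bound as $1-\frac{23}{\epsilon^2}\sqrt{\rho_i}$ and only at the end uses monotonicity of $\rho_i$ together with $\rho_{k-1}\le\epsilon^6$; both yield the stated $1-23\epsilon$.
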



\begin{proof}
We first bound $\frac{\alpha_i}{\beta_i}$. By the definitions of $\alpha_i$ and $\beta_i$ we have
\begin{equation}
\label{la.eq.lemma_10.2}
\begin{gathered}
\frac{\alpha_i}{\beta_i} = 
\frac{1 - \frac{5 \sqrt{\rho_i}}{\epsilon^2}}{1 + \frac{13 \sqrt{\rho_i}}{\epsilon^2}} \geq 
(1 - \frac{5 \sqrt{\rho_i}}{\epsilon^2})(1 - \frac{13 \sqrt{\rho_i}}{\epsilon^2}) \geq
1 - \frac{18 \sqrt{\rho_i}}{\epsilon^2},
\end{gathered}
\end{equation}
where the first inequality follows from the definitions of $\alpha_i$ and $\beta_i$ and the rest of the inequalities follow since $\epsilon < 10^2$ and $\rho < \epsilon^6$.

By Lemma \ref{la.lemma.10} we are guaranteed that $\Pi_{j=0}^{i-1} \gamma_j \leq 1 + 5 \sqrt{\rho_i}$. Combining this with inequality \ref{la.eq.lemma_10.2} yields
\[
\frac{\alpha_i}{\beta_i \Pi_{j=0}^{i-1} \gamma_j} \geq 
\frac{1 - \frac{18 \sqrt{\rho_i}}{\epsilon^2}}{1 + 5 \sqrt{\rho_i}} \geq
(1 - \frac{18}{\epsilon^2} \sqrt{\rho_i})(1 - 5 \sqrt{\rho_i}) \geq
1 - \frac{23}{\epsilon^2} \sqrt{\rho_i},
\]
and since $\rho_i$ only increases and $\rho_{k-1} \leq \epsilon^6$ we have
$\min_i \{\frac{\alpha_i}{\beta_i \Pi_{j=0}^{i-1} \gamma_j}\} \geq 1 - \frac{23}{\epsilon^2} \sqrt{\rho_{k-1}} \geq 1 - 23 \epsilon$,
thereby concluding the proof.
\end{proof}

\begin{proposition}
\label{la.prop.12}
For $\gamma_i = 1 + 4 \sqrt{\rho_i}$ we have $\frac{ALG(G_k)}{(\Pi_{i=0}^{k-1} \gamma_i)OPT(G_k)} \geq 
1 - 34 \epsilon$.
\end{proposition}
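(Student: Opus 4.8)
The plan is to combine the two terminal-case bounds (Propositions~\ref{la.prop.2} and~\ref{la.prop.3}) with the convergence bound for $\Pi_{i=0}^{k-1}\gamma_i$. The key observation is that the product $\Pi_{i=0}^{k-1}\gamma_i$ telescopes into a bound of the form $1+5\sqrt{\rho_{k-1}}$ just as in Lemma~\ref{la.lemma.10}, except that now the product runs all the way up to index $k-1$ rather than stopping at $i-1$; since $\rho_{k-1}<\epsilon^6$ (we only recurse when $\rho_i<\epsilon^6$, so in particular this holds at step $k-1$), we get $\Pi_{i=0}^{k-1}\gamma_i \leq 1+5\sqrt{\rho_{k-1}} \leq 1+5\epsilon^3 \leq 1+\epsilon$.

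First I would split on which terminal case $ALG(G_k)$ hits. If it terminates in case~(a), Proposition~\ref{la.prop.2} gives $\frac{ALG(G_k)}{OPT(G_k)} \geq 1-20\epsilon$; if it terminates in case~(b), Proposition~\ref{la.prop.3} gives $\frac{ALG(G_k)}{OPT(G_k)} \geq 1-33\epsilon$. In either case $\frac{ALG(G_k)}{OPT(G_k)} \geq 1-33\epsilon$. Then, using the product bound above,
\[
\frac{ALG(G_k)}{(\Pi_{i=0}^{k-1}\gamma_i)OPT(G_k)} \geq \frac{1-33\epsilon}{1+\epsilon} \geq (1-33\epsilon)(1-\epsilon) \geq 1-34\epsilon,
\]
where the middle inequality uses $\frac{1}{1+\epsilon} \geq 1-\epsilon$ and the last uses $33\epsilon\cdot\epsilon \leq \epsilon$ for $\epsilon<10^{-2}$. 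This is exactly the claimed bound.

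The only subtlety — and the one step I would be careful about — is justifying $\Pi_{i=0}^{k-1}\gamma_i \leq 1+5\sqrt{\rho_{k-1}}$. This needs that $\sqrt{\rho_j}$ is exponentially increasing up to $j=k-1$, which is Lemma~\ref{la.lemma.9} ($\rho_{j+1}\geq 4\rho_j$ for $j=1,\dots,k-1$), so $\sum_{j=0}^{k-1}\sqrt{\rho_j} \leq 2\sqrt{\rho_{k-1}}$ by summing the geometric series; then $\Pi_{j=0}^{k-1}(1+4\sqrt{\rho_j}) \leq e^{4\sum_j\sqrt{\rho_j}} \leq e^{8\sqrt{\rho_{k-1}}} \leq 1+5\sqrt{\rho_{k-1}}$ using Observation~\ref{general.ob.0}, wait — this gives a slightly worse constant than Lemma~\ref{la.lemma.10}, so I would instead note that the same computation as in Lemma~\ref{la.lemma.10} applies verbatim with the upper index shifted from $i-1$ to $k-1$, giving $\Pi_{i=0}^{k-1}\gamma_i \leq 1+5\sqrt{\rho_{k-1}}$, and then $\rho_{k-1}\leq\epsilon^6$ gives $5\sqrt{\rho_{k-1}}\leq 5\epsilon^3 \leq \epsilon$. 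With that in hand the displayed chain of inequalities finishes the proof. I expect no real obstacle here; the main care is tracking that the hypothesis "$\rho_{k-1}<\epsilon^6$" is available (it is, since step $k-1$ is a recursive step and case~(c) is entered only when $\rho_i<\epsilon^6$) and that all the loose constants are absorbed correctly.
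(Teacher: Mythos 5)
Your proposal is correct and takes essentially the same route as the paper: combine the terminal-case bounds of Propositions \ref{la.prop.2} and \ref{la.prop.3} (so $ALG(G_k)/OPT(G_k) \geq 1-33\epsilon$ in either case) with a convergence bound on $\Pi_{i=0}^{k-1}\gamma_i$, using $\rho_{k-1}<\epsilon^6$ from the fact that step $k-1$ recursed. The one small caveat is the product bound: applying Lemma \ref{la.lemma.10} ``shifted'' would give $1+5\sqrt{\rho_k}$ (and $\rho_k$ is not small in case (a)), so the estimate to keep is your direct one, $\sum_{j=0}^{k-1}\sqrt{\rho_j}\leq 2\sqrt{\rho_{k-1}}$ hence $\Pi_{i=0}^{k-1}\gamma_i\leq e^{8\sqrt{\rho_{k-1}}}\leq 1+9\sqrt{\rho_{k-1}}\leq 1+\epsilon$ (the paper instead peels off the last factor $\gamma_{k-1}=1+4\sqrt{\rho_{k-1}}$ and applies Lemma \ref{la.lemma.10} only up to index $k-2$); with either version your final chain of inequalities yields $1-34\epsilon$.
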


\begin{proof}
By Propositions \ref{la.prop.2} and \ref{la.prop.3} we are guaranteed that $\frac{ALG(G_k)}{OPT(G_k)} \geq 1 - 33 \epsilon$. On the other hand by by Lemma \ref{la.lemma.10} we are guaranteed that $\Pi_{i=0}^{k-2} \gamma_i \leq 1 + 5 \sqrt{\rho_{k-1}}$. Therefore, if $k = 1$ then 
$\frac{ALG(G_k)}{(\Pi_{i=0}^{k-1} \gamma_i)OPT(G_k)} = 
\frac{ALG(G_k)}{OPT(G_k)} \geq
1 - 33 \epsilon.$ Otherwise, we have
\[
\frac{ALG(G_k)}{(\Pi_{i=0}^{k-1} \gamma_i)OPT(G_k)} \geq
\frac{1 - 33 \epsilon}{(1 + 4 \sqrt{\rho_{k-1}})(1 + 5 \sqrt{\rho_{k-1}})} \geq
\frac{1 - 33 \epsilon}{(1 + 4 \epsilon^3)(1 + 5 \epsilon^3 )} \geq
1 - 34 \epsilon,
\]
where the second inequality follows since $\rho_{k - 1} < \epsilon^6$ (since we recursed to step $k$) and the subsequent inequalities follow since $\epsilon < 10^{-3}$ - thereby concluding the proof.
\end{proof}

\noindent Finally, we combine Propositions \ref{la.prop.11} and \ref{la.prop.12} to bound $ALG$'s approximation ratio.

\begin{theorem}
\label{la.thm.approx_ratio}
For any metric $G$, $\frac{ALG(G)}{OPT(G)} \geq 1 - 34 \epsilon$.
\end{theorem}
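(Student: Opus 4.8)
The plan is to assemble the pieces that the excerpt has already prepared via Observation~\ref{general.ob.1}. Recall that Observation~\ref{general.ob.1} gives
\[
\frac{ALG(G)}{OPT(G)} \geq \min\Big\{ \min_i \Big\{\frac{\alpha_i}{\beta_i \Pi_{j=0}^{i-1} \gamma_j}\Big\}, \ \frac{ALG(G_k)}{(\Pi_{i=0}^{k-1} \gamma_i)OPT(G_k)}\Big\},
\]
provided the hypotheses $ALG(G_i) \geq \alpha_i + ALG(G_{i+1})$ and $OPT(G_i) \leq \beta_i + \gamma_i OPT(G_{i+1})$ hold for all $i = 0,\ldots,k-1$. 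First I would invoke Propositions~\ref{la.prop.4} and~\ref{la.prop.8}, which establish exactly these two recursive inequalities with the values of $\alpha_i$, $\beta_i$, $\gamma_i$ recorded in Definition~\ref{la.def.3}. This licenses the application of Observation~\ref{general.ob.1}, so it remains only to lower bound the two quantities in the $\min$.

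The second step is to quote Proposition~\ref{la.prop.11}, which gives $\min_i \{\alpha_i / (\beta_i \Pi_{j=0}^{i-1} \gamma_j)\} \geq 1 - 23\epsilon$, and Proposition~\ref{la.prop.12}, which gives $ALG(G_k) / ((\Pi_{i=0}^{k-1}\gamma_i) OPT(G_k)) \geq 1 - 34\epsilon$. Here the distinction between the two terminal cases is already absorbed: Proposition~\ref{la.prop.12} relies on Propositions~\ref{la.prop.2} and~\ref{la.prop.3}, which cover termination in case (a) and case (b) respectively, taking the worse of the two bounds ($1 - 33\epsilon$) before accounting for the accumulated $\gamma$ factor. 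Taking the minimum of $1 - 23\epsilon$ and $1 - 34\epsilon$ yields $\frac{ALG(G)}{OPT(G)} \geq 1 - 34\epsilon$, which is precisely the claim.

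So the proof itself is a short two-line assembly: cite Propositions~\ref{la.prop.4} and~\ref{la.prop.8} to verify the hypotheses of Observation~\ref{general.ob.1}, then plug in Propositions~\ref{la.prop.11} and~\ref{la.prop.12} and take the minimum. There is essentially no obstacle at the level of Theorem~\ref{la.thm.approx_ratio} itself; all the work has been front-loaded into the supporting propositions. If I were to point to where the real difficulty lies, it is upstream — in Proposition~\ref{la.prop.2} (guessing $P^*$ via the Goldreich--Goldwasser--Ron partition tester and controlling the additive errors $\epsilon_{err}$ and the discretization $\epsilon^9 n^2 D_V$ against $OPT$ through Fact~\ref{fact.avg_link}) and in Lemma~\ref{la.lemma.9}, which guarantees the geometric growth $\rho_{i+1} \geq 4\rho_i$ that makes $\Pi \gamma_j$ converge (Lemma~\ref{la.lemma.10}). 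But those are assumed here. One small bookkeeping point worth stating explicitly in the write-up: since $\epsilon$ is a parameter we may shrink at will (the $1-34\epsilon$ guarantee becomes $1-\epsilon'$ by running the algorithm with $\epsilon = \epsilon'/34$), the theorem indeed certifies the EPRAS claim, modulo the running-time analysis deferred to the next subsection.

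\begin{proof}
By Propositions~\ref{la.prop.4} and~\ref{la.prop.8}, the values $\alpha_i$, $\beta_i$, $\gamma_i$ of Definition~\ref{la.def.3} satisfy $ALG(G_i) \geq \alpha_i + ALG(G_{i+1})$ and $OPT(G_i) \leq \beta_i + \gamma_i OPT(G_{i+1})$ for every $i = 0, \ldots, k-1$. Hence Observation~\ref{general.ob.1} applies and gives
\[
\frac{ALG(G)}{OPT(G)} \geq \min\Big\{ \min_i \Big\{\frac{\alpha_i}{\beta_i \Pi_{j=0}^{i-1} \gamma_j}\Big\}, \ \frac{ALG(G_k)}{(\Pi_{i=0}^{k-1} \gamma_i)OPT(G_k)}\Big\}.
\]
By Proposition~\ref{la.prop.11} the first term is at least $1 - 23\epsilon$, and by Proposition~\ref{la.prop.12} the second term is at least $1 - 34\epsilon$. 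Taking the minimum yields $\frac{ALG(G)}{OPT(G)} \geq 1 - 34\epsilon$.
\end{proof}
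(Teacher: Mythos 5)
Your proof is correct and is exactly the paper's argument: the theorem is proved by combining Observation \ref{general.ob.1} (whose hypotheses are supplied by Propositions \ref{la.prop.4} and \ref{la.prop.8} via Definition \ref{la.def.3}) with the bounds of Propositions \ref{la.prop.11} and \ref{la.prop.12}, and taking the minimum $\min\{1-23\epsilon,\,1-34\epsilon\} = 1-34\epsilon$. No substantive difference from the paper's route.
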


\subsection{Analyzing the Running Time of $ALG$}
\label{la.sec.analyzing_runtime}

Consider the definition of $ALG$. We observe that in each recursion step, the algorithm finds the layer to peel off, $A$, and then recurses. Therefore the running time is defined by the sum of these recursion steps, plus the terminating cases (i.e., either case (a) or case (b)). Recall that case (a) applies $ALG_{d-w}$ on the instance, while case (b) arranges the instance arbitrarily. Therefore, a bound on cases (a) and (b) is simply a bound on the running time of $ALG_{d-w}$ which is given by Lemma \ref{la.lemma.11} (whose proof appears in the Appendix).

\begin{lemma}
\label{la.lemma.11}
Given an instance $G$, the running time of $ALG_{d-w}(G)$ is at most $(\frac{1}{\epsilon^7})^{\frac{1}{\epsilon^2}} \cdot O(n^2)$.
\end{lemma}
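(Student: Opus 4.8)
The plan is to read off the running time of $ALG_{d-w}$ directly from its pseudocode (Algorithm \ref{alg.la_alg_dense}), which consists of an outer loop over candidate weight-profiles $\{\mu_{j,j'}\}$, and for each such profile a single invocation of the general graph partitioning tester $PT(G,\Phi,\epsilon_{err})$ followed by an $O(n^2)$-time evaluation of the resulting arrangement's objective value. First I would count the iterations of the outer loop: each $\mu_{j,j'}$ ranges over a set of size $\epsilon^{-7}$, and there are fewer than $k^2 = \epsilon^{-2}$ ordered pairs $(j,j')$, so the number of iterations is at most $(\epsilon^{-7})^{\epsilon^{-2}}$. (One should be slightly careful to state that the exponent $k(k-1) < k^2 = \epsilon^{-2}$ suffices, so that the bound can be written cleanly as $(1/\epsilon^7)^{1/\epsilon^2}$.)

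Next I would bound the cost of one iteration. By Theorem \ref{general.thm.ggr}, with $k = 1/\epsilon$ partition classes and $\epsilon_{err} = \epsilon^9$, the tester $PT(G,\Phi,\epsilon_{err})$ runs in expected time
\[
\exp\!\big(\log(1/\epsilon^9)\cdot (O(1)/\epsilon^9)^{1/\epsilon + 1}\big) + O\!\big(\tfrac{\log(1/(\epsilon\cdot\epsilon^9))}{\epsilon^{18}}\big)\cdot n .
\]
The first summand is a function of $1/\epsilon$ only — call it $f(1/\epsilon)$ — and the second is $g(1/\epsilon)\cdot n$ for another function $g$; computing $\widehat y$ and the value $\sum_e w_e\widehat y_e$ takes an additional $O(n^2)$. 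So one iteration costs $f(1/\epsilon) + g(1/\epsilon)\cdot n + O(n^2)$, which is bounded by $h(1/\epsilon)\cdot O(n^2)$ for an appropriate function $h$ of $1/\epsilon$ (absorbing the additive $f$ and the linear term into the $n^2$ term at the cost of the $h$ factor, assuming $n \geq$ some constant; for small $n$ the whole problem is solved by brute force anyway and the bound is trivial). Multiplying by the iteration count gives total running time $(\epsilon^{-7})^{\epsilon^{-2}}\cdot h(1/\epsilon)\cdot O(n^2)$.

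Finally I would observe that the $h(1/\epsilon)$ factor is itself dominated by — or at worst comparable to — the loop-count factor $(1/\epsilon^7)^{1/\epsilon^2} = \exp(7\epsilon^{-2}\log(1/\epsilon))$: indeed $\log f(1/\epsilon) = \log\log(1/\epsilon^9) + (\epsilon^{-1}+1)\log(O(1)/\epsilon^9) = O(\epsilon^{-1}\log(1/\epsilon))$, which is much smaller than $\epsilon^{-2}\log(1/\epsilon)$, and likewise $g(1/\epsilon) = \mathrm{poly}(1/\epsilon)$. Hence the entire product can be folded back into $(1/\epsilon^7)^{1/\epsilon^2}$ (adjusting the constant inside, or simply noting $(1/\epsilon^7)^{1/\epsilon^2}$ already has slack), yielding the claimed bound $(1/\epsilon^7)^{1/\epsilon^2}\cdot O(n^2)$. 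The main obstacle — really the only subtlety — is the bookkeeping in this last step: making sure that the doubly-exponential-looking term from $PT$'s running time, $\exp(\log(1/\epsilon^9)\cdot(O(1)/\epsilon^9)^{1/\epsilon+1})$, is genuinely a function of $1/\epsilon$ that does not overwhelm the outer-loop factor, so that collecting everything into $(1/\epsilon^7)^{1/\epsilon^2}\cdot O(n^2)$ is legitimate; everything else is direct substitution into Theorem \ref{general.thm.ggr} and Algorithm \ref{alg.la_alg_dense}.
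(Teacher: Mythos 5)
Your overall route is the same as the paper's: count the $(1/\epsilon^7)^{k(k-1)} \le (1/\epsilon^7)^{1/\epsilon^2}$ iterations of the enumeration loop, bound each iteration by the running time of $PT(G,\Phi,\epsilon_{err}=\epsilon^9)$ from Theorem \ref{general.thm.ggr} plus $O(n^2)$ for constructing and evaluating $\widehat{y}$, and multiply. Those two steps match the paper and are correct.

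Your final step, however, contains a genuine error. Writing $f(1/\epsilon) = \exp\big(\log(1/\epsilon^{9})\cdot(O(1)/\epsilon^{9})^{1/\epsilon+1}\big)$ for the first summand of $PT$'s running time, you assert $\log f(1/\epsilon) = \log\log(1/\epsilon^{9}) + (\epsilon^{-1}+1)\log(O(1)/\epsilon^{9}) = O(\epsilon^{-1}\log(1/\epsilon))$. What you computed is the logarithm of the \emph{exponent} of $f$, i.e.\ $\log\log f$, not $\log f$. In fact $\log f(1/\epsilon) = \log(1/\epsilon^{9})\cdot(O(1)/\epsilon^{9})^{1/\epsilon+1}$, which is itself exponential in $1/\epsilon$; so $f$ is doubly exponential in $1/\epsilon$ and is \emph{not} dominated by (nor comparable to) the loop-count factor $(1/\epsilon^7)^{1/\epsilon^2} = \exp(7\epsilon^{-2}\log(1/\epsilon))$, which is only singly exponential. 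The claim that $h(1/\epsilon)$ "can be folded back into $(1/\epsilon^7)^{1/\epsilon^2}$" is therefore false. The stated bound survives only under the convention the paper itself uses: the $O(n^2)$ is asymptotic in $n$ for fixed $\epsilon$, so the $n$-independent term $f(1/\epsilon)$ is absorbed into the constant hidden by $O(n^2)$ rather than into the explicit $(1/\epsilon^7)^{1/\epsilon^2}$ factor. That reading still suffices for the EPRAS conclusion of Theorem \ref{la.thm.running_time}, which only needs an $f(1/\epsilon)\cdot n^{O(1)}$ bound, but you should drop the domination argument and state the absorption explicitly.
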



\begin{remark}
\label{la.remark.loglog}
A bi-product of Lemma \ref{la.lemma.9} is that the number of recursion steps is bounded by $O(\log n)$. The proof follows similarly to the proof of Lemma \ref{hc.lemma.loglog} substituting the inequality $\rho_{i+1} \geq 4 \epsilon \sqrt{\rho_i}$ with $\rho_{i+1} \geq 4 \sqrt{\rho_i}$ (which holds due to Lemma \ref{la.lemma.9}).
\end{remark}

\noindent We are now ready to analyze the running time of $ALG$. (The proof is deferred to the Appendix.)

\begin{theorem}
\label{la.thm.running_time}
The algorithm $ALG$ is an EPRAS (with running time $O(n^2  \log n)$ plus the running time of $ALG_{d-w}$).
\end{theorem}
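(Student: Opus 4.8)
The plan is to bound the total running time of $ALG$ by summing the cost over all recursion steps plus the cost of the single terminating call, and to argue each piece is of the claimed order. First I would observe that $ALG$ performs at most $k$ recursive calls, and by Remark \ref{la.remark.loglog} (a consequence of Lemma \ref{la.lemma.9}, the exponential growth $\rho_{i+1} \geq 4\rho_i$) we have $k = O(\log n)$. In each recursion step $i$, the algorithm does three things: (i) compute the metric's core $C_i$ of $G_i$, (ii) identify the layer $A_i$ of points at distance $\geq \epsilon^2 D_{V_i}$ from $C_i$, and (iii) place $A_i$ to the left and recurse on $V_i \setminus A_i$. Step (ii) is a scan over all pairs, costing $O(n_{V_i}^2) = O(n^2)$. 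Step (iii)'s "placement" is bookkeeping, $O(n)$. The only subtle point is step (i): the core is defined via Lemma \ref{general.lemma.1} and the text notes it "can be found algorithmically simply through brute force." I would make this explicit: for each candidate center (or each threshold on diameter) one checks in $O(n^2)$ time whether some ball of the right radius captures $\geq n_{V_i}(1 - \sqrt{\rho_{V_i}})$ points; iterating over $O(n)$ centers gives $O(n^3)$ per step — so I would instead argue that a core can be extracted in $O(n^2)$ time (e.g., sort distances from a well-chosen pivot, or exploit the structure in the proof of Lemma \ref{general.lemma.1} in the Appendix). Granting $O(n^2)$ per recursive step, the total cost of all recursion steps is $O(n^2) \cdot O(\log n) = O(n^2 \log n)$.

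Next I would handle the terminating call. If $ALG$ halts in case (a), it runs $ALG_{d-w}$ once, which by Lemma \ref{la.lemma.11} costs at most $(1/\epsilon^7)^{1/\epsilon^2} \cdot O(n^2)$. If it halts in case (b), it arranges $V \setminus A$ arbitrarily, which is $O(n)$. Either way the terminating cost is dominated by $f(1/\epsilon)\cdot O(n^2)$ for $f(1/\epsilon) = (1/\epsilon^7)^{1/\epsilon^2}$. Adding this to the $O(n^2 \log n)$ from the recursion gives total expected running time $O(n^2 \log n) + (1/\epsilon^7)^{1/\epsilon^2} \cdot O(n^2)$, which is of the form $f(1/\epsilon)\, n^{O(1)}$; combined with the $1 - 34\epsilon$ approximation guarantee of Theorem \ref{la.thm.approx_ratio} (rescaling $\epsilon$ by a constant factor), this shows $ALG$ is an EPRAS, as claimed.

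The main obstacle I anticipate is pinning down the $O(n^2)$ bound for finding the core: the naive "brute force" reading of Definition \ref{def.metric_core} could be as bad as $O(n^3)$ (try every vertex as the ball center, count points within the appropriate radius). To stay at $O(n^2)$ overall per step one should follow the constructive argument behind Lemma \ref{general.lemma.1}: the proof in the Appendix presumably picks the center via an averaging/probabilistic argument that singles out a specific vertex (or a specific distance threshold) computable in $O(n^2)$ time, so only a single ball needs to be examined. I would lean on that constructive proof rather than re-deriving it. The remaining steps — summing a geometric-length ($O(\log n)$) sequence of $O(n^2)$ terms, and invoking Lemma \ref{la.lemma.11} for the one terminal call — are routine, and the expectation in the running time is inherited solely from the expected running time of $PT$ inside $ALG_{d-w}$ (Theorem \ref{general.thm.ggr}), since every other component is deterministic.
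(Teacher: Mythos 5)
Your proposal is correct and follows essentially the same route as the paper: $O(\log n)$ recursion depth via Remark \ref{la.remark.loglog}, $O(n^2)$ work per recursive step, case (b) dominated by case (a), and Lemma \ref{la.lemma.11} for the terminating call, yielding $O(n^2\log n)$ plus the running time of $ALG_{d-w}$. Your only hesitation --- that locating the core might cost $O(n^3)$ --- resolves exactly as you suspected: the proof of Lemma \ref{general.lemma.1} is algorithmic, and for each of the $n$ candidate centers one only needs to count the points within distance $2D_V\sqrt{\rho_V}$, an $O(n)$ scan of one row of the distance matrix (no diameter verification is needed, since the triangle inequality bounds the ball's diameter automatically), so all candidates are examined in $O(n^2)$ total, which is precisely how the paper bounds each step.
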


\begin{remark}
We remark that one may improve the running time by replacing $ALG_{d-w}$ with any faster algorithm while slightly degrading the quality of the approximation.
\end{remark}

\section{The Hierarchical Clustering Objective}
\label{hc.sec}

The section is outlined as follows. We begin by presenting our algorithms (first the algorithm to handle case (a) and subsequently the general algorithm). Thereafter we will bound the algorithm's approximation guarantee (by following the bounding scheme of Observation \ref{general.ob.1}). Finally, we will analyze the algorithm's running time.

\vspace*{-\baselineskip}

\subsection{Defining the Algorithms}
\label{hc.sec.defining_algorithms}

As in the linear arragement setting, we will begin by applying our general algorithm to the linear arrangement problem (which we will denote simply as $ALG$). The algorithm uses, as a subroutine, an algorithm to handle case (a). We denote this subroutine as $ALG_{d-w}$ and define it following the definition of $ALG$.

\vspace*{-\baselineskip}

\subsubsection{Defining $ALG$}

Here we apply our general algorithm (Algorithm \ref{alg.general_alg}) to the hierarchical clustering setting. In order to do so, roughly speaking, we define the layer to peel off $A$ as all points outside of the metric's core.\\

\vspace*{-\baselineskip}

\begin{algorithm}[H]
\caption{Hierarchical Clustering Algorithm ($ALG$)}
\label{alg.hc_alg}
\lIf(\tcp*[f]{case (a)}){$\rho \geq \epsilon^2$}{
    Solve the instance using $ALG_{d-w}$.
}
\Else{
    Let $C$ denote the metric's core (as defined by Lemma \ref{general.lemma.1}).
    
    Let $A = V \setminus C$ denote the rest of the points.

    Arrange $A$ as a (arbitrary) ladder and denote the tree by $T_A$.
    
    \uIf(\tcp*[f]{case (b)}){$W_{C} < 16 \epsilon\cdot W_V$}{
        Arrange $C$ arbitrarily and denote the resulting tree by $T_{C}$. 
        
        Attach $T_{C}$'s root as a child of the bottom most internal node of $T_A$ and return.

    }
    \Else(\tcp*[f]{case (c)}){
        Continue recursively on $C$ and denote the resulting tree by $T_{C}$. 
        
        Attach $T_{C}$'s root as a child of the bottom most internal node of $T_A$ and return.
    }
}
\end{algorithm}

\begin{remark}
Note that Algorithm \ref{alg.hc_alg} conforms to the general Algorithm \ref{alg.general_alg} since $C = V \setminus A$.
\end{remark}

\vspace*{-\baselineskip}

\subsubsection{Defining $ALG_{d-w}$}
We will use the algorithm of Vainstein et al. \cite{Hierarchical_Clustering_via_Sketches_and_Hierarchical_Correlation_Clustering} as $ALG_{d-w}$. As part of their algorithm they make use of the general graph partitioning algorithm of Goldreich et al. \cite{Property_Testing_and_its_Connection_to_Learning_and_Approximation} which is denoted by $PT(\cdot)$. Since we will use $PT(\cdot)$ to devise our own algorithm for the LA objective we refer the reader to Definition \ref{general.def.ggr_inequalities} and Theorem \ref{general.thm.ggr} for a more in-depth explanation of the $PT(\cdot)$ algorithm. We restate $ALG_{d-w}$ in Algorithm \ref{alg.hc_alg_dense}  as defined in Vainstein et al. \cite{Hierarchical_Clustering_via_Sketches_and_Hierarchical_Correlation_Clustering}.

\begin{algorithm}[H]
\caption{HC Algorithm for Sufficiently Densely Weighted Instances ($ALG_{d-w}$)}
\label{alg.hc_alg_dense}
Enumerate over all trees $T$ with $k = \frac{1}{\epsilon}$ internal nodes.

\For{each such $T$}{
    \For{$\{\lambda_i \}_{i \leq k} \subset \{i \epsilon^2 n : i \in \mathbb{N} \land i \leq \frac{3}{\epsilon}\}$}{
        \For{$\{\mu_{j,j'}\}_{j\leq k, j' \leq k, j\neq j'} \subset \{i \epsilon^3 n^2 D_V: i \in \mathbb{N} \land i \leq \frac{9}{\epsilon}\}$}{
            Let $\Phi = \{\lambda_i, \lambda_i \}_{i = 1}^k \cup \{\mu_{j,j'}, \mu_{j,j'} \}_{j,j' = 1}^k$.
        
            Run $PT(G, \Phi, \epsilon_{err} = \epsilon^3)$. Let $P$ denote the output partition (if succeeded).
            
            Compute the HC objective value based on $T$ and $P$.
        }
    }
}

Return the partition $P$ and tree $T$ with maximal HC objective value.
\end{algorithm}

\subsection{Analyzing the Approximation Ratio of $ALG$}
\label{hc.sec.analyzing_approx}

Now that we have defined $ALG$ we are ready to analyze its approximation ratio. Recall that by Observation \ref{general.ob.1} it is enough to analyze the approximation ratio of cases (a), (b) and the total approximation loss generated by the recursion steps (i.e., by finding $\alpha_i$, $\beta_i$ and $\gamma_i$).

\subsubsection{Analyzing the Approximation Ratio of Case (a) of $ALG$}

In order to analyse the approximation ratio of $ALG_{d-w}$ in our setting we must first recall the definition of instances with not-all-small-weights (as defined by Vainstein et al. \cite{Hierarchical_Clustering_via_Sketches_and_Hierarchical_Correlation_Clustering}).

\begin{definition}
A metric $G$ is said to have not all small weights if there exist constants (with respect to $n_V$) $c_0, c_1 < 1$ such that the fraction of weights smaller than $c_0 \cdot D_V$ is at most $1-c_1$.
\end{definition}

\noindent The following theorem was presented in Vainstein et al. \cite{Hierarchical_Clustering_via_Sketches_and_Hierarchical_Correlation_Clustering}.

\begin{theorem}
\label{hc.theorem.1}
For any constant $\xi>0$ and any metric $G = (V,w)$ with not all small weights (with constants $c_0$ and $c_1$) we are guaranteed that
$\frac{ALG_{d-w}(G)}{OPT(G)} \geq 1 - O(\frac{\xi}{c_0 \cdot c_1})$ and that $ALG_{d-w}$'s expected running time is at most $f(\frac{1}{\xi}) \cdot n^2$.
\end{theorem}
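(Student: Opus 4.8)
The plan is to read Algorithm~\ref{alg.hc_alg_dense} as a ``guess the top of the optimal tree'' scheme, with the Goldreich--Goldwasser--Ron partition tester $PT$ turning a guessed set of partition parameters into an actual partition of $G$. The first step is a \textbf{structural lemma}: there is an HC tree $\widehat{T}$ that agrees with $OPT(G)$'s tree on its top $k=\tfrac1\epsilon$ internal nodes, arranges the resulting $m\le k$ leaf-clusters $V_1,\dots,V_m$ arbitrarily inside, and still has value $\ge(1-O(\epsilon/(c_0c_1)))\,OPT(G)$. Granting this, the value of $\widehat T$ depends only on (i) the shape of the top tree $T^{*}_{\mathrm{top}}$, (ii) the cluster sizes $|V_\ell|$, and (iii) the crossing weights $W_{V_a,V_b}$: for $i\in V_a$, $j\in V_b$ with $a\neq b$ the LCA in $\widehat T$ lies inside the top tree and its subtree size is the sum of the sizes of the clusters below it, while intra-cluster pairs are simply discarded (a nonnegative loss). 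These three ingredients are exactly what the algorithm enumerates -- top trees with $k$ internal nodes, sizes $\lambda_i$ on an $\epsilon^2 n$-grid, crossing weights $\mu_{j,j'}$ on an $\epsilon^3 n^2 D_V$-grid -- so some enumerated guess $\Phi^{*}$ matches $T^{*}_{\mathrm{top}}$ and the parameters of its cluster partition up to those grid resolutions.

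For the structural lemma I would take $T^{*}$ optimal and pick its top $k$ internal nodes so that the induced clusters are roughly balanced, $|V_\ell|=O(n/m)$, using a balanced-separator argument on $T^{*}$ to keep the number of cuts linear in the number of clusters rather than a naive top-down truncation (which can leave one huge cluster). The loss from replacing $T^{*}$ by $\widehat T$ is exactly $\sum_\ell |V_\ell|\,W_{V_\ell}$; bounding $W_{V_\ell}\le |V_\ell|^2 D_V$ gives a loss $O(D_V\sum_\ell |V_\ell|^3)=O(D_V n^3/m^2)$ for balanced clusters. On the other side, ``not all small weights'' forces $W_V=\Omega(c_0c_1\,n^2D_V)$, hence $OPT(G)\ge\tfrac23 nW_V=\Omega(c_0c_1\,n^3D_V)$ by the HC part of Fact~\ref{fact.avg_link}, so the loss is $O\!\big(\tfrac{1}{m^2 c_0c_1}\big)OPT(G)=O(\epsilon/(c_0c_1))\,OPT(G)$ once $m=\Theta(1/\sqrt{\epsilon c_0c_1})\le k$. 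I expect this step to be the \emph{main obstacle}: it is the only place the metric/denseness hypothesis is used essentially (not merely for error accounting), and making $k=\tfrac1\epsilon$ internal nodes suffice requires the clusters to be balanced, which forces the separator-based choice of $T^{*}_{\mathrm{top}}$.

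Next I would pad $\Phi^{*}$ into a box of radius $\epsilon_{err}=\epsilon^3$ around its grid values so that $G\in\mathcal{GP}_{\Phi^{*}}$, witnessed by the cluster partition of $T^{*}$, and run $PT(G,\Phi^{*},\epsilon^3)$; by Theorem~\ref{general.thm.ggr} it returns a partition $U_1,\dots,U_k$ whose sizes and crossing weights match those of $T^{*}$'s clusters up to additive $O(\epsilon^2 n)$ and $O(\epsilon^3 n^2 D_V)$. Feeding $T=T^{*}_{\mathrm{top}}$ and the $U_j$'s into the HC objective, the size errors perturb each LCA-multiplier by at most $\sum_\ell O(\epsilon^2 n)=O(\epsilon n)$ and hence the total by $O(\epsilon n)\cdot W_V$, and the weight errors perturb it by at most $\tfrac1{\epsilon^2}\cdot O(\epsilon^3 n^2 D_V)\cdot n=O(\epsilon n^3 D_V)$; using $n^3 D_V=O(nW_V/(c_0c_1))=O(OPT(G)/(c_0c_1))$, both are $O(\epsilon/(c_0c_1))\,OPT(G)$. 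Since the algorithm computes the genuine HC value of the actual $(T,P)$ on $G$ (so intra-part terms only help) and returns the maximum over all guesses, $ALG_{d-w}(G)\ge \mathrm{val}(\widehat T)-O(\epsilon/(c_0c_1))OPT(G)\ge(1-O(\epsilon/(c_0c_1)))OPT(G)$; taking the algorithm's $\epsilon=\Theta(\xi)$ gives $1-O(\xi/(c_0c_1))$. For the running time: there are $2^{O(1/\epsilon)}$ trees with $k$ internal nodes, $(3/\epsilon)^{k}=(1/\epsilon)^{O(1/\epsilon)}$ choices of $\{\lambda_i\}$ and $(9/\epsilon)^{k^2}=(1/\epsilon)^{O(1/\epsilon^2)}$ choices of $\{\mu_{j,j'}\}$, i.e.\ $f(1/\epsilon)$ guesses; each invokes $PT(G,\Phi,\epsilon^3)$ at cost $\exp\!\big(\log(1/\epsilon^3)(O(1)/\epsilon^3)^{k+1}\big)+O\!\big(\tfrac{\log(k/\epsilon^3)}{\epsilon^6}\big)\cdot n=f(1/\epsilon)+f(1/\epsilon)\cdot n$ by Theorem~\ref{general.thm.ggr}, and evaluating the HC objective from $(T,P)$ costs $O(n^2)$ (compute the $\binom k2$ crossing weights and $O(k)$ multipliers). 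Multiplying gives total expected time $f(1/\epsilon)\cdot n^2=f(1/\xi)\cdot n^2$, as claimed.
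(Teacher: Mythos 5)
First, a framing point: the paper does not prove Theorem~\ref{hc.theorem.1} at all --- it is imported verbatim from Vainstein et al.\ \cite{Hierarchical_Clustering_via_Sketches_and_Hierarchical_Correlation_Clustering}, and Algorithm~\ref{alg.hc_alg_dense} is merely restated from that work. So there is no internal proof to compare against; what you have written is a reconstruction of the external proof. Your overall architecture (coarse top tree, GGR partition tester to realize the guessed parameters, error accounting via the not-all-small-weights lower bound $W_V=\Omega(c_0c_1 n^2D_V)$ and Fact~\ref{fact.avg_link}, and the runtime count) is the right shape, and the $PT$ error accounting and running-time analysis are essentially correct.

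The genuine gap is in the structural lemma, which you yourself flag as the crux. The claim that one can choose the top $k$ internal nodes of $T^{*}$ so that the hanging clusters are balanced is false: for a caterpillar (ladder) tree, any connected top subtree of $m-1$ internal nodes leaves clusters of sizes $1,\dots,1,n-m+1$, and the only partitions of the leaves into hanging subtrees have exactly this form, so no choice yields clusters of size $O(n/m)$. The balanced-separator fix does not rescue this, because separator-based parts are not hanging subtrees of $T^{*}$ and hence do not induce a small connected top tree whose cross-cluster LCA sizes match $T^{*}$'s. The standard repair --- take the maximal subtrees of $T^{*}$ with at most $\epsilon n$ leaves (which bounds the intra-cluster loss by $\sum_\ell |V_\ell|W_{V_\ell}\le \epsilon n W_V = O(\epsilon)\,OPT$ without ever needing $c_0,c_1$ or your $W_{V_\ell}\le|V_\ell|^2D_V$ detour), and then \emph{regroup} the many small clusters shed along non-branching paths into super-clusters of size $\Theta(\epsilon n)$, modifying the top tree and paying an extra $O(\epsilon n W_V)$ for the perturbed LCA sizes --- is a missing idea, not a detail: without it the top tree can have $\Theta(n)$ internal nodes. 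A concrete symptom that your parameters are off: you target clusters of size $\Theta(n/\sqrt{\epsilon c_0c_1})$, but Algorithm~\ref{alg.hc_alg_dense} only enumerates $\lambda_i\le 3\epsilon n$, so the partition you want the tester to find cannot even be expressed by the algorithm's grid; the intended clusters have size $\Theta(\epsilon n)$ and there are $\Theta(1/\epsilon)$ of them.
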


\noindent Applying the above theorem with $\xi = \epsilon^5$ to our metric instance $G_k$ yields Proposition \ref{hc.proposition.2} (whose proof is deferred to the Appendix).

\begin{proposition}
\label{hc.proposition.2}
If $ALG$ terminates in case (a) then $\frac{ALG_{d-w}(G_k)}{OPT(G_k)} = \frac{ALG(G_k)}{OPT(G_k)} \geq 1 - \epsilon.$
\end{proposition}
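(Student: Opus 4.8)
The plan is to derive this directly from the black-box guarantee of Vainstein et al.\ (Theorem~\ref{hc.theorem.1}); essentially the only thing to check is that the terminal instance $G_k$ is a metric with \emph{not all small weights} and with usable constants $c_0,c_1$. First, the equality $ALG_{d-w}(G_k)=ALG(G_k)$ is immediate from the code of Algorithm~\ref{alg.hc_alg}: when $ALG$ reaches case (a) on $G_k$ it simply runs $ALG_{d-w}$ on $G_k$ and returns that output, so the two quantities coincide by definition.

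Next I would translate the case-(a) test $\rho_{G_k}\ge \epsilon^2$ into a lower bound on the mass of ``large'' weights. Write $n=n_{G_k}$, $D=D_{G_k}$, $W=W_{G_k}$, so the test reads $W\ge \epsilon^2 n^2 D$. Let $f$ be the fraction of weights $w_{i,j}$ that are $\ge \tfrac{\epsilon^2}{2}D$. Since each weight is at most $D$ and every other weight is below $\tfrac{\epsilon^2}{2}D$, we have $W\le f\,n^2 D+\tfrac{\epsilon^2}{2}n^2 D$, hence $\epsilon^2\le\rho_{G_k}\le f+\tfrac{\epsilon^2}{2}$, i.e.\ $f\ge \tfrac{\epsilon^2}{2}$. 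Thus $G_k$ has not all small weights with $c_0=c_1=\tfrac{\epsilon^2}{2}$ (both in $(0,1)$ for $\epsilon<1$, and constant with respect to $n$): the fraction of weights smaller than $c_0 D$ is at most $1-c_1$.

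Finally, apply Theorem~\ref{hc.theorem.1} with $\xi=\Theta(\epsilon^5)$ (the constant chosen to dominate the absolute constant hidden in its $O(\cdot)$ guarantee), which gives
\[
\frac{ALG_{d-w}(G_k)}{OPT(G_k)}\;\ge\;1-O\!\Big(\frac{\xi}{c_0c_1}\Big)\;=\;1-O\!\Big(\frac{\epsilon^5}{\epsilon^4}\Big)\;=\;1-O(\epsilon)\;\ge\;1-\epsilon,
\]
and combined with the first paragraph this is the claim; the same theorem also supplies the $f(1/\epsilon)\cdot n^2$ expected running time used later for the EPRAS statement.

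There is no conceptual obstacle: the density lower bound passes directly to a constant lower bound on the fraction of large weights, and everything else is the cited theorem applied verbatim. The only thing needing a little care is bookkeeping the constants — choosing $\xi=\Theta(\epsilon^5)$ small enough relative to $c_0c_1=\epsilon^4/4$ so that the $O(\epsilon)$ slack is genuinely at most $\epsilon$ (equivalently, absorbing the hidden constant of Theorem~\ref{hc.theorem.1} into the choice of $\xi$).
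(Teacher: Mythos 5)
Your proposal is correct and follows essentially the same route as the paper: translate the case-(a) density test $\rho_{G_k}\ge\epsilon^2$ into the not-all-small-weights property with constants $c_0,c_1=\Theta(\epsilon^2)$ (the paper takes $c_0=c_1=\epsilon^2$ via a short contradiction argument, you take $\epsilon^2/2$ via a direct averaging bound), and then apply Theorem \ref{hc.theorem.1} with $\xi=\Theta(\epsilon^5)$ so that the $O(\xi/(c_0c_1))$ loss is at most $\epsilon$. The differences are only in bookkeeping of constants, not in substance.
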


\subsubsection{Analyzing the Approximation Ratio of Case (b) of $ALG$}

\begin{proposition}
\label{hc.prop.3}
If $ALG$ terminates in case (b) then $\frac{ALG(G_k)}{OPT(G_k)} \geq 1 - 17\epsilon.$
\end{proposition}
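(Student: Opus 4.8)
The plan is to lower-bound $ALG(G_k)$ directly from the shape of the tree it outputs and to upper-bound $OPT(G_k)$ by the trivial bound, then divide; the two defining facts of case (b) --- that $\rho < \epsilon^2$ (we are not in case (a)) and that $W_C < 16\epsilon W_V$, where throughout $W_V$, $W_C$, $n$, $n_C$ refer to the terminal instance $G_k$ and its core $C$, and $A = V_k \setminus C$ --- are exactly what make this go through.

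First I would record the two consequences of halting in case (b). Since $\rho = \rho_{G_k} < \epsilon^2$ we have $\sqrt{\rho} < \epsilon$, so Lemma \ref{general.lemma.1} gives $n_C \ge n(1 - \sqrt{\rho}) > (1-\epsilon)n$; and the case-(b) test gives $W_C < 16\epsilon W_V$. Then comes the structural heart of the argument: in the tree returned by $ALG$, the subtree $T_C$ is attached strictly below the ladder on $A$, so for every pair $\{i,j\}$ with at least one endpoint in $A$ the lowest common ancestor of $i$ and $j$ lies on the ladder and its subtree contains all of $C$; hence $|T_{i,j}| \ge n_C$ for every such pair. Summing the HC objective over exactly these pairs, and discarding the non-negative contribution of the $C$-internal pairs, gives $ALG(G_k) \ge n_C (W_V - W_C)$, since $W_V - W_C$ is precisely the total weight of pairs touching $A$.

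Combining, $ALG(G_k) \ge n_C (W_V - W_C) > (1-\epsilon)\, n \cdot (1-16\epsilon)\, W_V \ge (1-17\epsilon)\, n W_V$, where the last step uses $(1-\epsilon)(1-16\epsilon) = 1 - 17\epsilon + 16\epsilon^2 \ge 1 - 17\epsilon$. On the other side, every pair in any HC tree on $n$ leaves satisfies $|T_{i,j}| \le n$, so $OPT(G_k) \le n\sum_{i,j} w_{i,j} = n W_V$. Dividing the two displays yields $ALG(G_k)/OPT(G_k) \ge (1-17\epsilon)\, n W_V / (n W_V) = 1-17\epsilon$.

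The one step that actually needs attention is the structural claim $|T_{i,j}| \ge n_C$ for every pair touching $A$: one must check that the operation of attaching $T_C$ below the $A$-ladder really does place all of $C$ underneath every peeling step of $A$, so that no $A$-incident pair has its lowest common ancestor strictly inside $T_C$ or inside a small residual fragment of the ladder. Beyond that the argument is short, and it is worth noting that, in contrast to case (a), here the crude estimate $OPT(G_k) \le n W_V$ already suffices, so neither Fact \ref{fact.avg_link} nor any finer split of $OPT$ into its $W_A$, $W_{A,C}$, $W_C$ pieces is needed.
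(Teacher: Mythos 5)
Your proof is correct and follows essentially the same route as the paper's: lower-bound $ALG(G_k)$ by noting every pair touching $A$ contributes at least $n_C \ge (1-\sqrt{\rho})n$ with total weight $W_A + W_{A,C} \ge (1-16\epsilon)W_V$, compare against the trivial bound $OPT(G_k) \le nW_V$, and use $\rho < \epsilon^2$. The structural point you flag (the LCA of any $A$-incident pair sits on the ladder above all of $C$) is exactly the observation the paper relies on as well.
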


\begin{proof}
    The proof is deferred to the Appendix.
\end{proof}

\subsubsection{Setting the Values $\alpha_i$, $\beta_i$ and $\gamma_i$}

Due to lack of space, we defer the following proofs to the Appendix.

\begin{lemma}
\label{hc.lemma.4}
For $A_i$ and $C_i$ as defined by our algorithm applied to $G_i$ and for $\alpha_i = n_{V_i} (W_{A_i} + W_{A_i,C_i})(1 - \sqrt{\rho_i})$ we have $ALG(G_i) \geq \alpha_i + ALG(G_{i+1})$.
\end{lemma}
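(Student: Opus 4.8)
The plan is to expand the HC objective value of the tree $T$ produced by $ALG(G_i)$ as a sum over pairs of leaves, split this sum according to where the two endpoints sit relative to the partition $V_i = A_i \sqcup C_i$, and bound each piece using the shape of the tree that Algorithm \ref{alg.hc_alg} assembles: a ladder (Definition \ref{def.ladder}) on $A_i$ whose spine of internal nodes sits entirely above the recursively-built subtree $T_{C_i}$, and $T_{C_i}$ has leaf set exactly $C_i$ and is the induced sub-instance $G_{i+1}$.

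First I would record the two structural facts about $T$. (i) For distinct $u,v\in C_i$, the lowest common ancestor of $u$ and $v$ in $T$ lies in $T_{C_i}$ and agrees with their lowest common ancestor there, so $|T_{u,v}| = |(T_{C_i})_{u,v}|$; since $G_{i+1}$ is the metric induced on $C_i$, summing $w_{u,v}|T_{u,v}|$ over $u,v\in C_i$ gives exactly $ALG(G_{i+1})$. (ii) If $\{u,v\}\cap A_i\neq\emptyset$, then at least one endpoint is a ladder leaf, so their lowest common ancestor in $T$ is an internal node on the ladder spine, hence a strict ancestor of the root of $T_{C_i}$; therefore the subtree rooted at that lowest common ancestor contains all of $C_i$, which yields $|T_{u,v}|\geq n_{C_i}$.

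Next I would invoke the core property (Lemma \ref{general.lemma.1}, Definition \ref{def.metric_core}) applied to $G_i$, which gives $n_{C_i}\geq n_{V_i}(1-\sqrt{\rho_i})$. Combining this with fact (ii), the contribution to $ALG(G_i)$ of all pairs touching $A_i$ — the pairs inside $A_i$ contribute (with the weight convention of the definitions) $n_{C_i} W_{A_i}$ at least, and the crossing pairs at least $n_{C_i} W_{A_i,C_i}$ — is at least $n_{C_i}(W_{A_i}+W_{A_i,C_i})\geq n_{V_i}(1-\sqrt{\rho_i})(W_{A_i}+W_{A_i,C_i}) = \alpha_i$. Adding the $C_i$-internal part from fact (i) gives $ALG(G_i)\geq \alpha_i + ALG(G_{i+1})$, as desired; the degenerate case $A_i=\emptyset$ is immediate since then $\alpha_i = 0$ and $T = T_{C_i}$.

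The one point I would state most carefully is fact (ii): one must check that in the tree built by Algorithm \ref{alg.hc_alg} (the ladder $T_{A_i}$ with $T_{C_i}$ hung off its bottom internal node), every spine node is a strict ancestor of $T_{C_i}$'s root, so that any lowest common ancestor involving a point of $A_i$ has all of $C_i$ below it. Everything else is routine summation, and the case-(c) hypothesis $W_{C_i}\geq 16\epsilon W_{V_i}$ is not used beyond ensuring we are in the recursive branch so that $G_{i+1}$, $A_i$, $C_i$ are defined.
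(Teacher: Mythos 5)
Your proposal is correct and follows essentially the same route as the paper's proof: decompose $ALG(G_i)$ into the part on pairs inside $C_i$ (which equals $ALG(G_{i+1})$ since the algorithm recurses on $C_i$ and hangs $T_{C_i}$ below the ladder) and the part on pairs touching $A_i$, lower bound the latter by $n_{C_i}(W_{A_i}+W_{A_i,C_i})$ using that every such LCA is a ladder spine node containing all of $C_i$, and finish with $n_{C_i}\geq n_{V_i}(1-\sqrt{\rho_i})$ from Lemma \ref{general.lemma.1}. You merely spell out the structural facts (i) and (ii) that the paper states in one line.
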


\begin{lemma}
\label{hc.lemma.6}
Let $G_i = (V_i, w_i)$ and $G_{i+1} = (V_{i+1}, w_{i+1})$ denote the instances defined by the $i$ and $i+1$ recursion steps. Furthermore, let $\beta_i = n_{V_i}(W_{A_i} + W_{A_i, C_i})$ and $\gamma_i = 1 + 2 \sqrt{\rho_i}$. Therefore,
$OPT(G_i) \leq \beta_i + \gamma_i OPT(G_{i+1}).$

\end{lemma}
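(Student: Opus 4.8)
The plan is to bound $OPT(G_i)$ by splitting the contribution of the optimal tree $T^*$ for $G_i$ according to where pairs of points sit relative to the core $C_i$ and the peeled layer $A_i = V_i \setminus C_i$. Write $OPT(G_i) = \sol{G_i}{A_i} + \sol{G_i}{A_i,C_i} + \sol{G_i}{C_i}$ for the optimal tree. The first two terms are over pairs with at least one endpoint in $A_i$; since every subtree has at most $n_{V_i}$ leaves, each such pair contributes at most $n_{V_i} w_{u,v}$, so together these two terms are at most $n_{V_i}(W_{A_i} + W_{A_i,C_i}) = \beta_i$. The last term, $\sol{G_i}{C_i}$, is the contribution of pairs entirely inside $C_i$ under $T^*$; the subtree structure of $T^*$ restricted to $C_i$ is a valid HC tree on $C_i$, and for each such pair the subtree size in $T^*$ is at most $n_{C_i}$ larger than the subtree size in the restricted tree (in fact one expects an even tighter bound, since going from $T^*$ to its restriction to $C_i$ can only remove the $A_i$-leaves). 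So $\sol{G_i}{C_i} \le$ (value of the restricted tree on $C_i$) $+$ (error term), and the value of the restricted tree on $C_i$ is at most $OPT(G_i[C_i]) = OPT(G_{i+1})$.

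The crux is controlling the error term and showing it is at most $(\gamma_i - 1)\,OPT(G_{i+1}) = 2\sqrt{\rho_i}\,OPT(G_{i+1})$. Each pair $u,v \in C_i$ loses at most $n_{A_i}$ from its subtree count when we pass to the restriction, so the total error is at most $n_{A_i} W_{C_i}$. Now I use the structural facts about the core: by Lemma \ref{general.lemma.1}, $n_{A_i} \le n_{V_i}\sqrt{\rho_i}$ and $D_{C_i} \le 4 D_{V_i}\sqrt{\rho_i}$, while $W_{C_i} \le n_{C_i}^2 D_{C_i}$. On the other hand, by the HC part of Fact \ref{fact.avg_link} applied to $G_{i+1} = G_i[C_i]$, we have $OPT(G_{i+1}) \ge \frac{2}{3} n_{C_i} W_{C_i}$. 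Combining, the error term $n_{A_i} W_{C_i} \le n_{V_i}\sqrt{\rho_i}\, W_{C_i} \le \frac{3}{2}\cdot\frac{n_{V_i}}{n_{C_i}}\sqrt{\rho_i}\,OPT(G_{i+1})$, and since $n_{V_i}/n_{C_i} \le 1/(1-\sqrt{\rho_i})$ is close to $1$ (using $\rho_i < \epsilon^2$ and $\epsilon$ small), this is at most $2\sqrt{\rho_i}\,OPT(G_{i+1})$ after absorbing constants. Putting the pieces together gives $OPT(G_i) \le \beta_i + OPT(G_{i+1}) + 2\sqrt{\rho_i}\,OPT(G_{i+1}) = \beta_i + \gamma_i OPT(G_{i+1})$.

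The main obstacle I anticipate is getting the error bound tight enough with the stated constant: one must be careful that the bound $n_{A_i} W_{C_i}$ on the loss is genuinely valid pairwise (the subtree of the LCA of $u,v$ in $T^*$ has at most $n_{A_i}$ more leaves than in the restricted tree, since the restriction only deletes $A_i$-leaves), and then the chain through Fact \ref{fact.avg_link} and $W_{C_i} \le n_{C_i}^2 D_{C_i}$ has to be assembled so the leftover factor is bounded by $2$ rather than something larger — this is where the precise hypothesis $\rho \ge \epsilon^2$ being false (so $\rho_i < \epsilon^2$) and the smallness of $\epsilon$ are used. An alternative, possibly cleaner, route to the error bound is to avoid $W_{C_i} \le n_{C_i}^2 D_{C_i}$ and instead directly compare $\sol{G_i}{C_i}$ under $T^*$ with its restriction leaf-by-leaf; but the averaging argument above should suffice.
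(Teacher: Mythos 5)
Your proposal is correct and follows essentially the same route as the paper: the same three-way decomposition with the $A_i$-incident terms bounded by $\beta_i$, the same restriction of the optimal tree to $C_i$ giving $\opt{G_i}{C_i} \leq OPT(G_{i+1}) + n_{A_i} W_{C_i}$, and the same absorption of the error via Fact \ref{fact.avg_link} and Lemma \ref{general.lemma.1} (your $n_{A_i} \leq n_{V_i}\sqrt{\rho_i}$ together with $n_{V_i}/n_{C_i} \leq 1/(1-\sqrt{\rho_i})$ is algebraically the paper's $n_{A_i} \leq \frac{\sqrt{\rho_i}}{1-\sqrt{\rho_i}} n_{C_i}$). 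The auxiliary bounds $W_{C_i} \leq n_{C_i}^2 D_{C_i}$ and $D_{C_i} \leq 4D_{V_i}\sqrt{\rho_i}$ you list are not needed, as your own final chain shows.
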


\noindent Thus, we combine these values in Definition \ref{hc.def.3}.

\begin{definition}
\label{hc.def.3}
We define the values $\alpha_i$, $\beta_i$ and $\gamma_i$ as follows
\begin{equation*}
\begin{gathered}
    \alpha_i = n_{V_i} (W_{A_i} + W_{A_i,C_i})(1 - \sqrt{\rho_i}); \quad
    \beta_i =  n_{V_i}(W_{A_i} + W_{A_i, C_i}); \quad
    \gamma_i  = 1 + 2 \sqrt{\rho_i}.
\end{gathered}    
\end{equation*}
\end{definition}

\subsubsection{Putting it all Together}

Now that we have analyzed the terminal cases of the algorithm (cases (a) and (b)) and that we have set the values of $\alpha_i$, $\beta_i$ and $\gamma_i$ we will combine these results to prove $ALG$'s approximation ratio (as in Observation \ref{general.ob.1}). Due to lack of space we defer the proofs of this section to the Appendix.

\begin{proposition}
\label{hc.prop.11}
For $\alpha_i$, $\beta_i$ and $\gamma_i$ as in Definition \ref{hc.def.3}, we have $\min_i \{\frac{\alpha_i}{\beta_i \Pi_{j=0}^{i-1} \gamma_j}\} \geq 
1 - 4 \epsilon$.
\end{proposition}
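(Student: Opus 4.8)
The plan is to mirror the argument of Proposition~\ref{la.prop.11}, adjusting the constants to the HC values fixed in Definition~\ref{hc.def.3}. The first step is immediate: the common prefactor $n_{V_i}(W_{A_i}+W_{A_i,C_i})$ cancels in $\alpha_i/\beta_i$, so $\frac{\alpha_i}{\beta_i}=1-\sqrt{\rho_i}$, and the whole statement reduces to a good upper bound on the product $\Pi_{j=0}^{i-1}\gamma_j=\Pi_{j=0}^{i-1}(1+2\sqrt{\rho_j})$, valid for every recursion level $i=0,\dots,k-1$.

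To bound that product I would first record that a recursive call is issued only in case~(c), so each instance $G_j$ with $j<k$ fails the case~(a) test, i.e.\ $\rho_j<\epsilon^2$ and hence $\sqrt{\rho_j}<\epsilon$. Next I would invoke the HC analogue of Lemma~\ref{la.lemma.9} — the fact that the density at least quadruples at each recursion step, equivalently $\sqrt{\rho_{j+1}}\ge 2\sqrt{\rho_j}$ (this is the exponential-growth statement already cited in the proof of Lemma~\ref{la.lemma.10}). Summing the resulting geometric series backwards from level $i-1$ gives $\sum_{j=0}^{i-1}\sqrt{\rho_j}\le 2\sqrt{\rho_{i-1}}\le\sqrt{\rho_i}$, so, using $1+x\le e^x$ and Observation~\ref{general.ob.0},
\[
\Pi_{j=0}^{i-1}(1+2\sqrt{\rho_j})\;\le\; e^{2\sum_{j=0}^{i-1}\sqrt{\rho_j}}\;\le\; e^{2\sqrt{\rho_i}}\;\le\; 1+3\sqrt{\rho_i},
\]
the last step being valid because $\sqrt{\rho_i}<\epsilon<10^{-2}$ keeps the exponent small; this is the HC analogue of Lemma~\ref{la.lemma.10}.

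Putting the two pieces together, for each $i=0,\dots,k-1$,
\[
\frac{\alpha_i}{\beta_i\,\Pi_{j=0}^{i-1}\gamma_j}\;\ge\;\frac{1-\sqrt{\rho_i}}{1+3\sqrt{\rho_i}}\;\ge\;(1-\sqrt{\rho_i})(1-3\sqrt{\rho_i})\;\ge\;1-4\sqrt{\rho_i},
\]
using $\tfrac{1}{1+x}\ge 1-x$. Since $\sqrt{\rho_i}$ is nondecreasing in $i$ and $\rho_{k-1}<\epsilon^2$, every term of the minimum is at least $1-4\sqrt{\rho_{k-1}}>1-4\epsilon$, which is the claim; the empty-product case $i=0$ (where $\Pi_{j=0}^{-1}\gamma_j=1$, giving $\alpha_0/\beta_0=1-\sqrt{\rho_0}\ge 1-\epsilon$) is consistent.

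The one genuinely substantive ingredient is the geometric growth of the densities $\rho_j$ along the recursion (the HC analogue of Lemma~\ref{la.lemma.9}); without it $k$ could be as large as $\Theta(\log n)$ and the product $\Pi_j(1+2\sqrt{\rho_j})$ could not be controlled uniformly. Everything downstream of that — the cancellation in $\alpha_i/\beta_i$, the backward geometric-series sum, and the two final one-line inequalities — is routine and handled by Observation~\ref{general.ob.0}.
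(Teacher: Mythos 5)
Your proposal is correct and follows essentially the same route as the paper: the cancellation $\alpha_i/\beta_i = 1-\sqrt{\rho_i}$, the bound $\Pi_{j=0}^{i-1}\gamma_j \le 1+3\sqrt{\rho_i}$ via the density-quadrupling property (the paper's Lemma \ref{hc.lemma.7} and Corollary \ref{hc.cor.7.5}, which you rederive inline), and the final estimate $1-4\sqrt{\rho_i} \ge 1-4\epsilon$ using monotonicity of $\rho_i$ and $\rho_{k-1}<\epsilon^2$. No gaps; this matches the paper's proof.
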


\begin{proposition}
\label{hc.prop.12}
For $\gamma_i = 1 + 2 \sqrt{\rho_i}$ we have $\frac{ALG(G_k)}{(\Pi_{i=0}^{k-1} \gamma_i)OPT(G_k)} \geq 
1 - 23 \epsilon$.
\end{proposition}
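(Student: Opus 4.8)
The plan is to follow the template of Proposition~\ref{la.prop.12}, instantiated with the HC quantities of Definition~\ref{hc.def.3}. The first ingredient is an approximation guarantee for the terminal instance $G_k$: combining Proposition~\ref{hc.proposition.2} (halting in case (a)) with Proposition~\ref{hc.prop.3} (halting in case (b)) gives $\frac{ALG(G_k)}{OPT(G_k)} \geq 1 - 17\epsilon$ regardless of which halting case occurs. So it remains to control the accumulated multiplicative error $\Pi_{i=0}^{k-1}\gamma_i$ with $\gamma_i = 1 + 2\sqrt{\rho_i}$, and then divide.

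To bound the product I would first establish the HC counterpart of Lemma~\ref{la.lemma.10}, namely $\Pi_{j=0}^{i-1}\gamma_j \leq 1 + 3\sqrt{\rho_i}$. As in the LA case, this follows from $1 + x \leq e^x$ (Observation~\ref{general.ob.0}) together with the geometric growth of the densities along the recursion — the HC analog of Lemma~\ref{la.lemma.9}, i.e.\ Lemma~\ref{hc.lemma.7}, which gives $\sqrt{\rho_{j+1}} \geq 2\sqrt{\rho_j}$ and hence $\sum_{j=0}^{i-1}\sqrt{\rho_j} \leq \sqrt{\rho_i}$, so $\Pi_{j=0}^{i-1}(1 + 2\sqrt{\rho_j}) \leq e^{2\sqrt{\rho_i}} \leq 1 + 3\sqrt{\rho_i}$ for $\sqrt{\rho_i}$ small. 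If $ALG$ performs no recursive call the product is the empty product $1$ and the statement is immediate from $1 - 17\epsilon \geq 1 - 23\epsilon$. Otherwise I would peel off the last factor, writing $\Pi_{i=0}^{k-1}\gamma_i = \gamma_{k-1}\cdot\Pi_{i=0}^{k-2}\gamma_i \leq (1 + 2\sqrt{\rho_{k-1}})(1 + 3\sqrt{\rho_{k-1}})$, and observe that since $ALG$ recursed from $G_{k-1}$ to $G_k$ it was in case (c) at step $k-1$, so the density test failed there: $\rho_{k-1} < \epsilon^2$, whence $\sqrt{\rho_{k-1}} < \epsilon$.

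Finally, assembling the pieces for $\epsilon$ a sufficiently small constant (so that $(1+2\epsilon)(1+3\epsilon) \leq 1 + 6\epsilon$ and $\frac{1}{1+6\epsilon} \geq 1 - 6\epsilon$) gives
\[
\frac{ALG(G_k)}{(\Pi_{i=0}^{k-1}\gamma_i)\,OPT(G_k)} \;\geq\; \frac{1 - 17\epsilon}{(1 + 2\epsilon)(1 + 3\epsilon)} \;\geq\; \frac{1 - 17\epsilon}{1 + 6\epsilon} \;\geq\; (1 - 17\epsilon)(1 - 6\epsilon) \;\geq\; 1 - 23\epsilon,
\]
which is the claim. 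The only step that is not pure arithmetic is the product bound $\Pi_{j=0}^{i-1}\gamma_j \leq 1 + 3\sqrt{\rho_i}$; the crux there is the geometric growth of $\sqrt{\rho_i}$ furnished by Lemma~\ref{hc.lemma.7}, after which the tail-sum estimate and the passage from $e^{x}$ to $1 + 3x$ are routine. Everything else — splitting off $\gamma_{k-1}$, using $\rho_{k-1} < \epsilon^2$ from the case-(c) threshold, and the final division — is elementary once Propositions~\ref{hc.proposition.2} and~\ref{hc.prop.3} are in hand.
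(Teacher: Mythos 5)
Your proposal is correct and follows essentially the same route as the paper: the paper's proof likewise combines the terminal bound $1-17\epsilon$ from Propositions \ref{hc.proposition.2} and \ref{hc.prop.3}, bounds $\Pi_{j=0}^{k-2}\gamma_j \leq 1+3\sqrt{\rho_{k-1}}$ via the geometric growth of the densities (your inline derivation is exactly the paper's Corollary \ref{hc.cor.7.5}, proved from Lemma \ref{hc.lemma.7} and Observation \ref{general.ob.0}), splits off $\gamma_{k-1}$, and uses $\rho_{k-1}<\epsilon^2$ from the case-(c) threshold, handling the no-recursion case separately. The only difference is cosmetic: you reprove the product bound rather than cite it.
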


\begin{theorem}
\label{hc.thm.last}
For any metric $G$, $\frac{ALG(G)}{OPT(G)} \geq 1 - 23 \epsilon$.
\end{theorem}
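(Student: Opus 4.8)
The plan is to derive this theorem as a direct corollary of the bookkeeping scheme in Observation \ref{general.ob.1}, combined with the recursion-step bounds and the two ``putting it all together'' propositions that precede it. Concretely, I would first verify that the hypotheses of Observation \ref{general.ob.1} are met: Lemma \ref{hc.lemma.4} supplies $ALG(G_i) \geq \alpha_i + ALG(G_{i+1})$ and Lemma \ref{hc.lemma.6} supplies $OPT(G_i) \leq \beta_i + \gamma_i OPT(G_{i+1})$ for every $i = 0, \ldots, k-1$, where $\alpha_i, \beta_i, \gamma_i$ are exactly the quantities fixed in Definition \ref{hc.def.3} (all strictly positive since $\rho_i \in (0,1)$ and $W_{A_i} + W_{A_i,C_i} > 0$ whenever the recursion proceeds). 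With these in hand, Observation \ref{general.ob.1} yields
\[
\frac{ALG(G)}{OPT(G)} \geq \min\left\{ \min_i \left\{\frac{\alpha_i}{\beta_i \Pi_{j=0}^{i-1}\gamma_j}\right\},\ \frac{ALG(G_k)}{(\Pi_{i=0}^{k-1}\gamma_i) OPT(G_k)}\right\}.
\]

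Next I would simply bound each of the two terms inside the minimum using the results already established. Proposition \ref{hc.prop.11} gives $\min_i\{\alpha_i / (\beta_i \Pi_{j=0}^{i-1}\gamma_j)\} \geq 1 - 4\epsilon$, and Proposition \ref{hc.prop.12} gives $ALG(G_k) / ((\Pi_{i=0}^{k-1}\gamma_i) OPT(G_k)) \geq 1 - 23\epsilon$ (this second term already silently absorbs the two terminal-case estimates, Proposition \ref{hc.proposition.2} for case (a) and Proposition \ref{hc.prop.3} for case (b), since $\frac{ALG(G_k)}{OPT(G_k)} \geq 1 - 17\epsilon$ is the weaker of the two). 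Taking the minimum of $1 - 4\epsilon$ and $1 - 23\epsilon$ gives $\frac{ALG(G)}{OPT(G)} \geq 1 - 23\epsilon$, as claimed.

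I do not expect any genuine obstacle here: once the recursion-step inequalities of Definition \ref{hc.def.3} and the two composite propositions are granted, the theorem is a one-line consequence of Observation \ref{general.ob.1}. The only point requiring a sentence of care is the edge case $k = 0$ (no recursive call), where $ALG(G) = ALG(G_0) = ALG(G_k)$, $OPT(G) = OPT(G_k)$, the product $\Pi_{i=0}^{k-1}\gamma_i$ is empty hence equals $1$, and the bound reduces to the terminal-case analysis, which is at worst $1 - 17\epsilon \geq 1 - 23\epsilon$; this is already handled implicitly inside Proposition \ref{hc.prop.12}, so no separate argument is needed. All the real work — controlling the geometric growth of the densities $\rho_i$ so that $\Pi_j \gamma_j$ converges, and the terminal-case approximation ratios — has been done in the lemmas and propositions leading up to the statement.
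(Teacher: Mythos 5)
Your proposal is correct and matches the paper's own proof, which likewise derives the theorem directly from Observation \ref{general.ob.1} together with Propositions \ref{hc.prop.11} and \ref{hc.prop.12} (with the $k=0$ edge case already handled inside Proposition \ref{hc.prop.12}). Your verification of the hypotheses via Lemmas \ref{hc.lemma.4} and \ref{hc.lemma.6} and Definition \ref{hc.def.3} is exactly the intended bookkeeping.
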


\subsection{Analyzing the Running Time of $ALG$}
\label{hc.sec.analyzing_runtime}

Consider the definition of $ALG$. In each recursion step, the algorithm finds the layer to peel off and then recurses. Therefore the running time is defined by the sum of these recursion steps, plus the terminating cases (i.e., either case (a) or case (b)). Recall that case (a) applies $ALG_{d-w}$ on the instance, while case (b) arranges the instance arbitrarily. Therefore, a bound on cases (a) and (b) is simply a bound on the running time of $ALG_{d-w}$ which is given by Theorem \ref{hc.theorem.1} \cite{Hierarchical_Clustering_via_Sketches_and_Hierarchical_Correlation_Clustering}. In Lemma \ref{hc.lemma.loglog} we bound the number of recursion steps and subsequently prove Theorem \ref{hc.thm.running_time} (proofs appear in the Appendix).

\begin{theorem}
\label{hc.thm.running_time}
The algorithm $ALG$ is an EPRAS (with running time $O(n^2  \log \log n)$ plus the running time of $ALG_{d-w}$).
\end{theorem}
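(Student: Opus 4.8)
The plan is to assemble the running time of $ALG$ as the sum of three contributions: the per-recursion-step overhead, the number of recursion steps, and the cost of the terminating call. First I would bound the work performed in a single recursion step (case (c)): the algorithm computes the weighted density $\rho$ (an $O(n^2)$ computation over all pairs), finds the metric's core $C$ via the brute-force search guaranteed by Lemma \ref{general.lemma.1} and Definition \ref{def.metric_core} — which, as remarked after Definition \ref{def.metric_core}, can be done in $\poly(n)$, in fact $O(n^2)$ by checking, for each candidate ball radius $4D_V\sqrt{\rho_V}$, which nodes lie within it — sets $A = V\setminus C$, builds the ladder $T_A$ in $O(n)$ time, and attaches the recursively produced subtree. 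So each recursion step costs $O(n^2)$. Summing over the recursion steps, the overhead is $O(n^2)$ times the number of steps.

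Next I would invoke the bound on the number of recursion steps. The key input is Lemma \ref{hc.lemma.7} (the exponential growth of $\rho_i$, used already in the proof of Lemma \ref{la.lemma.10}): since $\rho_i$ increases at least geometrically and must start at $\rho_0 \geq 1/n^2$ (the density cannot be smaller, as $W_V \geq D_V$ whenever there is at least one positive edge) while staying below the case-(a) threshold $\epsilon^2$, a doubly-exponential-in-$i$ lower bound of the form $\rho_i \geq \rho_0^{c^{-i}}$ (equivalently $\sqrt{\rho_{i+1}} \geq \text{(const)}\cdot\sqrt{\rho_i}$ in the sense giving $\rho_{i+1}\gtrsim\sqrt{\rho_i}$) forces $k = O(\log\log n)$. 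This is exactly the content of Lemma \ref{hc.lemma.loglog} referenced in the text (and parallels Remark \ref{la.remark.loglog} in the LA section, where the weaker recursion $\rho_{i+1}\geq 4\sqrt{\rho_i}$ only yields $O(\log n)$). Hence the total recursion overhead is $O(n^2\log\log n)$.

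Finally, the terminating call is either case (a) — which runs $ALG_{d-w}$, whose expected running time is $f(1/\xi)\cdot n^2$ for the chosen constant $\xi = \epsilon^5$ by Theorem \ref{hc.theorem.1}, i.e. $f(1/\epsilon)\cdot n^2$ — or case (b), which merely arranges the core arbitrarily in $O(n)$ time and attaches it. Adding the three contributions gives expected running time $O(n^2\log\log n) + f(1/\epsilon)\cdot n^2$, which is of the form $f(1/\epsilon)\cdot n^{O(1)}$; combined with Theorem \ref{hc.thm.last} (the $1-23\epsilon$ approximation), $ALG$ is an EPRAS. I expect the only real subtlety to be the number-of-steps bound: one must be careful that case (c) is only entered when $W_C \geq 16\epsilon W_V$, which is precisely the hypothesis needed to push $\rho$ upward in Lemma \ref{hc.lemma.7}, and that the base density $\rho_0$ is at least inverse-polynomial in $n$ so that only $O(\log\log n)$ geometric-in-the-exponent doublings are possible before crossing $\epsilon^2$; everything else is routine bookkeeping of $\poly(n)$ per-step costs.
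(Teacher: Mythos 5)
Your proposal matches the paper's proof in all essentials: $O(n^2)$ work per recursion step (finding the core algorithmically as in the proof of Lemma \ref{general.lemma.1}), $O(\log\log n)$ recursion depth via Lemma \ref{hc.lemma.loglog} driven by $\rho_{i+1}\geq 4\epsilon\sqrt{\rho_i}$ together with the inverse-polynomial lower bound on $\rho_0$ and the $\epsilon^2$ threshold, plus the $f(1/\epsilon)\cdot n^2$ cost of $ALG_{d-w}$ from Theorem \ref{hc.theorem.1} for the terminating case (case (b) being dominated). The only cosmetic deviation is your base-density bound $\rho_0\geq 1/n^2$ (via $W_V\geq D_V$) in place of the paper's $\rho_0\geq 1/(2n)$, which changes nothing in the $O(\log\log n)$ conclusion.
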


\begin{remark}
We remark that one may improve the running time by replacing $ALG_{d-w}$ with any faster algorithm while slightly degrading the quality of the approximation.
\end{remark}

\bibliographystyle{plainnat}
\bibliography{bib.bib}

\appendix

\section{Deferred Proofs of Section \ref{sec.multi_layer_peeling_framework}}

\begin{proof}[Proof of Lemma \ref{general.lemma.1}]
For every node $v$ let $W_v$ denote the sum of weights incident to $v$. Let $U_v$ denote the set of nodes that are within $2D_V\sqrt{\rho_V}$ distance of $v$. We prove that there exists $v$ with $|U_v| \geq n_V(1 - \sqrt{\rho_V})$, thereby concluding the proof. 

Assume towards contradiction that this is not the case. Then, for every node $v$ we have $$W_v > (\sqrt{\rho_V} n_V)(2D_V\sqrt{\rho_V}) = 2n_VD_V\rho_V$$ (since there are at least $n_V \sqrt{\rho_V}$ nodes of distances $\geq 2D_V\sqrt{\rho_V}$ from $v$). Summing over all $v$ yields $2W_V = \sum_v W_v > 2n_V^2 D_V \rho_V = 2W_V$ which is a contradiction. 
\end{proof}


\section{Deferred Proofs of Section \ref{la.sec}}

\begin{proof}[Proof of Lemma \ref{la.lemma.6}]
We first observe that for any $a \in A_i$ we have 
\begin{equation}
\label{la.eq.lemma_6_1}
\begin{gathered}
\sum_{c \in C_i} w_{a,c} y_{a,c} \geq 
(\min_{c \in C_i} \{ w_{a,c} \}) \sum_{c \in C_i} y_{a,c} \geq \\
(\min_{c \in C_i} \{ w_{a,c} \}) (1 + 2 + 3 + \cdots + n_{C_i}) \geq 
(\min_{c \in C_i} \{ w_{a,c} \}) \frac{n^2_{C_i}}{2},
\end{gathered}
\end{equation}
where the second inequality follows since $y$ places all the points in $A_i$ to the left of all the points in $C_i$.

By the triangle inequality for any point $c \in C_i$ we have that $\min_{c \in C_i} \{ w_{a,c} \} + D_{C_i} \geq w_{a,c}$ and therefore
\begin{equation}
\label{la.eq.lemma_6_2}
(\min_{c \in C_i} \{ w_{a,c} \}) \cdot n_{C_i} \geq
\sum_{c \in C_i} (w_{a,c} - D_{C_i}).
\end{equation}

\noindent Therefore, by summing over all $a \in A_i$
\begin{equation*}
\begin{gathered}
\sum_{c \in C_i,a \in A_i} w_{a,c} y_{a,c} =
\sum_{a \in A_i} \sum_{c \in C_i} w_{a,c} y_{a,c} \geq \\
\sum_{a \in A_i} (\min_{c \in C_i} \{ w_{a,c} \}) \frac{n^2_{C_i}}{2} \geq
\sum_{a \in A_i} \big(\sum_{c \in C_i} (w_{a,c} - D_{C_i}) \frac{n_{C_i}}{2} \big) = \\ 
\frac{n_{C_i}}{2}(W_{C_i,A_i} - n_{C_i} n_{A_i} D_{C_i}),
\end{gathered}
\end{equation*}
where the first inequality follows from inequality \ref{la.eq.lemma_6_1} and the second follows from inequality \ref{la.eq.lemma_6_2} - thereby concluding the proof.

\end{proof}

\begin{proof}[Proof of Corollary \ref{la.cor.0}]
We begin with Lemma \ref{la.lemma.6}
\begin{equation}
\label{la.eq.cor_0.0}
\begin{gathered}
\sum_{a \in A, c \in C} w_{a,c} y_{a,c} \geq 
\frac{n_C}{2}(W_{A,C} - n_A n_C D_C).
\end{gathered}
\end{equation}

Recall that by the definitions of $A$ and $C$ all weights between sets $A$ and $C$ are at least $\epsilon^2 D_V$ and therefore $W_{A,C} \geq n_A n_C D_V \epsilon^2$. Further recall that by Lemma \ref{general.lemma.1} we are guaranteed that $D_C \leq 4\sqrt{\rho} D_V$ and therefore 
\begin{equation}
\label{la.eq.cor_0.1}
\begin{gathered}
n_A n_C D_C \leq 
n_A n_C 4\sqrt{\rho}D_V \leq  
\frac{4 \sqrt{\rho}}{\epsilon^2} W_{A,C}. 
\end{gathered}
\end{equation}

\noindent Combining inequalities \ref{la.eq.cor_0.0} and \ref{la.eq.cor_0.1} with the fact that $n_C \geq (1 - \sqrt{\rho})n$ yields
\begin{equation*}
\begin{gathered}
\sum_{a \in A, c \in C} w_{a,c} y_{a,c} \geq
\frac{n_C}{2}(W_{A,C} - n_A n_C D_C) \geq \\
\frac{n_C}{2}W_{A,C}(1 - \frac{4 \sqrt{\rho}}{\epsilon^2}) \geq
\frac12 n W_{A,C} (1 - \sqrt{\rho})(1 - \frac{4 \sqrt{\rho}}{\epsilon^2}).
\end{gathered}
\end{equation*}

\noindent Finally, since $\epsilon < 1$ we have
\[
\sum_{a \in A, c \in C} w_{a,c} y_{a,c} \geq 
\frac12 n W_{A,C} (1 - \sqrt{\rho})(1 - \frac{4 \sqrt{\rho}}{\epsilon^2}) \geq
\frac12 n W_{A,C} (1 - \frac{5 \sqrt{\rho}}{\epsilon^2}),
\]
thereby concluding the proof.

\end{proof}

\begin{proof}[Proof of Lemma \ref{la.lemma.6_5}]
We first observe that
\begin{equation}
\label{la.eq.lemma_6_5.1}
\begin{gathered}
\sum_{c \in C_i} w_{p, c} y_{p,c} \leq 
(\max_{c \in C_i} \{w_{p, c}\}) \sum_{c \in C_i} y_{p,c} \leq \\
(\max_{c \in C_i} \{w_{p, c}\}) (\sum_{i=0}^{n_{C_i}-1} (n - n_{C_i} + i)) = 
(\max_{c \in C_i} \{w_{p, c}\}) (n \cdot n_{C_i} - \frac{n_{C_i}(n_{C_i}+1)}{2}) \leq \\
(\max_{c \in C_i} \{w_{p, c}\}) (n \cdot n_{C_i} - \frac{n_{C_i}^2}{2}),
\end{gathered}
\end{equation}
where the second inequality follows from the fact that to maximize $\sum_{c \in C_i} y_{p,c}$ (i.e., the inter-objective-value where all weights are equal to 1) one must place $p$ at one extreme of the line and $C_i$ at the other extreme. 
On the other hand, for every $c \in C_i$, by the triangle inequality we have $w_{p, c} + D_m \geq (\max_{c \in C_i} \{w_{p, c}\})$ and therefore,
\begin{equation}
\label{la.eq.lemma_6_5.2}
\begin{gathered}
n_{C_i} \cdot (\max_{c \in C_i} \{w_{p, c}\}) \leq
\sum_{c \in C_i} (w_{p, c} + D_{C_i}).
\end{gathered}
\end{equation}

\noindent Combining inequalities \ref{la.eq.lemma_6_5.1} and \ref{la.eq.lemma_6_5.2} yields
\begin{equation*}
\label{la.eq.lemma_6_5_3}
\begin{gathered}
\sum_{c \in C_i} w_{p, c} y_{p,c} \leq
(\max_{c \in C_i} \{w_{p, c}\}) (n \cdot  n_{C_i} - \frac{n_{C_i}^2}{2}) \leq \\ 
\sum_{c \in C_i}(w_{p, c} + D_{C_i}) (n - \frac{n_{C_i}}{2}) = 
(W_{p,C_i} + n_{C_i} D_{C_i})(n - \frac{n_{C_i}}{2}),
\end{gathered}
\end{equation*}
thereby concluding the proof.

\end{proof}

\begin{proof}[Proof of Lemma \ref{la.lemma.7}]
Follows by applying Lemma \ref{la.lemma.6_5} to all points $a \in {A_i}$.
\end{proof}

\begin{proof}[Proof of Corollary \ref{la.cor.1}]
We begin with Lemma \ref{la.lemma.7}
\begin{equation}
\label{la.eq.cor_1.0}
\begin{gathered}
\sum_{a \in A, c \in C} w_{a,c} y_{a,c} \leq 
(n - \frac{n_C}{2})(W_{A,C} + n_A n_C D_C).
\end{gathered}
\end{equation}

Recall that by the definitions of $A$ and $C$ all weights between sets $A$ and $C$ are at least $\epsilon^2 D_V$ and therefore $W_{A,C} \geq n_A n_C D_V \epsilon^2$. Further recall that by Lemma \ref{general.lemma.1} we are guaranteed that $D_C \leq 4\sqrt{\rho} D_V$ and therefore 
\begin{equation}
\label{la.eq.cor_1.1}
\begin{gathered}
n_A n_C D_C \leq 
n_A n_C 4\sqrt{\rho}D_V \leq  
\frac{4 \sqrt{\rho}}{\epsilon^2} W_{A,C}. 
\end{gathered}
\end{equation}

\noindent Combining inequalities \ref{la.eq.cor_1.0} and \ref{la.eq.cor_1.1} with the fact that $n_C \geq (1 - \sqrt{\rho})n$ yields
\begin{equation*}
\begin{gathered}
\sum_{a \in A, c \in C} w_{a,c} y_{a,c} \leq 
(n - \frac{n_C}{2})(W_{A,C} + n_A n_C D_C) \leq \\
(n - \frac{n_C}{2})W_{A,C}(1 + \frac{4\sqrt{\rho}}{\epsilon^2}) \leq
\frac12 n W_{A,C} (1 + \sqrt{\rho})(1 + \frac{4\sqrt{\rho}}{\epsilon^2}) \leq 
\frac12 n W_{A,C} (1 + \frac{9 \sqrt{\rho}}{\epsilon^2}).
\end{gathered}
\end{equation*}
where the last inequality follows since $\rho < 1$ and $\epsilon < 1$ - thereby concluding the proof.
\end{proof}

\begin{proof}[Proof of Lemma \ref{la.lemma.1.5}]
We first observe that $OPT(G_k)$ can be rewritten as
\begin{equation*}
\begin{gathered}
OPT(G_k) = 
\sum_{\substack{1\leq i \leq k-1\\ 1 \leq j \leq k-i}} \opt{G_k}{P_i^*, P_{i+j}^*} + \sum_{\substack{1\leq i \leq k}} \opt{G_k}{P_i^*}.
\end{gathered}
\end{equation*}
For ease of presentation we will remove the subscript in the summation henceforth. Due to the fact that $|P_i^*| \leq \epsilon n$ we have that $\sum_i \opt{G_k}{P_i^*} \leq \sum_i \epsilon n W_{P_i^*} \leq \epsilon n W$. Combining this with Fact \ref{fact.avg_link} guarantees that $\sum_i \opt{G_k}{P_i^*} \leq 3 \epsilon OPT(G_k)$. Therefore
\begin{equation*}
\begin{gathered}
OPT(G_k) = 
\sum \opt{G_k}{P_i^*, P_{i+j}^*} +  \sum \opt{G_k}{P_i^*} \leq \\
\sum \opt{G_k}{P_i^*, P_{i+j}^*} + 3 \epsilon OPT(G_k) \Rightarrow 
OPT(G_k) \leq \frac{1}{1 - 3 \epsilon }\sum \opt{G_k}{P_i^*, P_{i+j}^*}.
\end{gathered}
\end{equation*}

On the other hand every weight that crosses between $P_i^*$ and $P_{i+j}^*$ can contribute at most $(j+1)\epsilon n $ to the objective and therefore $\opt{G_k}{P_i^*, P_{i+j}^*} \leq W_{P^*_i, P^*_{i+j}} ((j+1)\epsilon n) = W_{P^*_i, P^*_{i+j}} (|P^*_{i}| + \cdots + |P^*_{i+j}|)$. Putting it all together gives us
\begin{equation}
\label{la.eq.lemma_1_5.1}
\begin{gathered}
OPT(G_k) \leq 
\frac{1}{1 - 3 \epsilon }\sum \opt{G_k}{P_i^*, P_{i+j}^*} \leq
 (1 + 4 \epsilon) \sum W_{P^*_i, P^*_{i+j}} (|P^*_{i}| + \cdots + |P^*_{i+j}|),
\end{gathered}
\end{equation}
where the last inequality follows since $\epsilon < \frac{1}{12}$. To conclude the proof we bound the value 
$(1 + 4 \epsilon) \sum W_{P^*_i, P^*_{i+j}} (|P^*_{i}|+ |P^*_{i+j}|)$. Recall that $|P^*_{i}|+ |P^*_{i+j}| = 2 \epsilon n$. Further note that $\sum W_{P^*_i, P^*_{i+j}} \leq W$ simply since every weight is counted at most once. Therefore
\begin{equation}
\label{la.eq.lemma_1_5.2}
\begin{gathered}
(1 + 4 \epsilon) \sum W_{P^*_i, P^*_{i+j}} (|P^*_{i}|+ |P^*_{i+j}|) =
(1 + 4 \epsilon) 2 \epsilon n \sum W_{P^*_i, P^*_{i+j}} \leq 
(1 + 4 \epsilon) 2 \epsilon n W.
\end{gathered}
\end{equation}

\noindent To conclude the proof, we use the fact that $OPT(G_k) \geq \frac{1}{3}nW$ (see Fact \ref{fact.avg_link}) and get
\begin{equation}
\label{la.eq.lemma_1_5.3}
\begin{gathered}
(1 + 4 \epsilon) 2 \epsilon n W \leq 
(1 + 4 \epsilon) 6 \epsilon OPT(G_k) \leq 
7 \epsilon OPT(G_k),
\end{gathered}
\end{equation}
for $\epsilon < 10^{-2}$. Combining the above inequalities \ref{la.eq.lemma_1_5.1}, \ref{la.eq.lemma_1_5.2} and \ref{la.eq.lemma_1_5.3} yields
\begin{equation*}
\begin{gathered}
OPT(G_k) \leq 
(1 + 4 \epsilon) \sum W_{P^*_i, P^*_{i+j}} (|P^*_{i+1}| + \cdots + |P^*_{i+j-1}|) +
7 \epsilon OPT(G_k) \Rightarrow \\
OPT(G_k) \leq 
\frac{1 + 4 \epsilon}{1 - 7 \epsilon} \sum W_{P^*_i, P^*_{i+j}} (|P^*_{i+1}| + \cdots + |P^*_{i+j-1}|)  \leq \\
(1 + 13\epsilon)\sum W_{P^*_i, P^*_{i+j}} (|P^*_{i+1}| + \cdots + |P^*_{i+j-1}|),
\end{gathered}
\end{equation*}
for $\epsilon < 10^{-2}$ - thereby concluding the proof.
\end{proof}

\begin{proof}[Proof of Proposition \ref{la.prop.3}]
Our proof will contain three steps - (1) we will show that $OPT(G_k) \leq (1 + 16 \epsilon)  \opt{G_k}{A,C}$ and (2) we will show that $\opt{G_k}{A,C} \leq (1 + \frac{16 \sqrt{\rho}}{\epsilon^2}) \alg{G_k}{A,C}$. In step (3) we combine these observations and prove the proposition.

\begin{enumerate}
    \item  $\mathbf{OPT(G_k) \leq ( 1 + 16\epsilon)  \opt{G_k}{A,C}}$: In order to show this we will first show that $W \leq 6 \epsilon W_{A,C}$ (and since the majority of the instance's weight is contained within $W_{A,C}$, $OPT(G_k)$ will generate most of its value from those weights). 
    
    Indeed, since we are in case (b) we have that $W_{B \cup C} \leq \epsilon W_V$ and therefore 
    \[
    W_{B \cup C} \leq \frac{\epsilon}{1 - \epsilon} (W_A + W_{A,B} + W_{A, C}) \leq 2 \epsilon (W_A + W_{A,B} + W_{A, C}),
    \]
    where the first inequality is since $W_V = W_{B \cup C} + W_A + W_{A,B} + W_{A, C}$ and the second inequality follows since $\epsilon < 10^{-3}$. On the other hand by Lemma \ref{la.lemma.7.5.5} we have that $W_A + W_{A,B} \leq 2\frac{\sqrt{\rho}}{\epsilon^2}W_{A, C}$. Therefore,
    \begin{equation}
    \label{la.eq.lemma_3.1}
    \begin{gathered}
    W_A + W_{A,B} + W_{B \cup C} \leq
    (1+ 2\epsilon)W_A +(1+ 2\epsilon)W_{A,B} + 2\epsilon W_{A, C}\leq \\
    (1+2\epsilon )(\frac{2\sqrt{\rho}}{\epsilon^2})W_{A, C} + 2\epsilon W_{A, C} \leq
    \Big((1+2\epsilon )(2 \epsilon) + 2\epsilon\Big)W_{A, C} \leq
    5 \epsilon W_{A,  C},
    \end{gathered}
    \end{equation}
    where the last inequality follows since we are in case (b) and therefore $\rho \leq \epsilon^6$ and since $\epsilon < \frac14$.

    \noindent Combining all the above yields
    \begin{equation*}
    \begin{gathered}
    OPT(G_k) = 
    \opt{G_k}{A} + \opt{G_k}{A, B} +  \opt{G_k}{B \cup C} + \opt{G_k}{A, C} \leq \\
    n(W_A + W_{A,B} + W_{B \cup C}) + \opt{G_k}{A, C} \leq
    5\epsilon n W_{A,C} + \opt{G_k}{A, C} \leq \\
    15\epsilon OPT(G_k) + \opt{G_k}{A, C},
    \end{gathered}
    \end{equation*}
    where the first inequality follows by simply rearranging $OPT(G_k)$'s terms and the second inequality follows simply since all $y_{i,j} \leq n$. The third inequality follows from inequality \ref{la.eq.lemma_3.1} and the last inequality follows from Fact \ref{fact.avg_link}. Rearranging the terms yields
    \begin{equation}
    \label{la.eq.lemma_3.2}
    \begin{gathered}
    OPT(G_k) \leq
    \frac{1}{1 - 15\epsilon} \opt{G_k}{A, C} \leq
    (1 + 16 \epsilon)\opt{G_k}{A, C},
    \end{gathered}
    \end{equation}
    where the last inequality follows since $\epsilon < 10^{-4}$.

    \item  $\mathbf{\opt{G_k}{A,C} \leq (1 + \frac{16 \sqrt{\rho}}{\epsilon^2} ) \alg{G_k}{A,C}}$: We do this by applying Corollary \ref{la.cor.0} to $ALG$'s arrangement which we will denote by $y^{ALG}$ and applying Corollary \ref{la.cor.1} to $OPT$'s arrangement (denoted by $y$). The two corollaries respectively yield
    \begin{equation*}
    \begin{gathered}
    \alg{G_k}{A,C} \geq 
    \frac12 n W_{A,C} (1 - \frac{5 \sqrt{\rho}}{\epsilon^2}), \\
    \opt{G_k}{A, C} \leq 
    \frac12 nW_{A,C}(1 + \frac{9 \sqrt{\rho}}{\epsilon^2}).
    \end{gathered}
    \end{equation*}
    Combining the two inequalities gives us
    \begin{equation}
    \label{la.eq.lemma_3.3}
    \begin{gathered}
    \opt{G_k}{A, C} \leq 
    (\frac{1 + \frac{9 \sqrt{\rho}}{\epsilon^2}}{1 - \frac{5 \sqrt{\rho}}{\epsilon^2}})\alg{G_k}{A,C} \leq
    (1 + \frac{16 \sqrt{\rho}}{\epsilon^2})\alg{G_k}{A,C},
    \end{gathered}
    \end{equation}
    where the last inequality follows since $\rho < \epsilon^6$ and $\epsilon < 10^{-3}$.
    
    \item $\mathbf{OPT(G_k) \leq  (1+33 \epsilon)ALG}$: We prove this by combining inequalities \ref{la.eq.lemma_3.1} and \ref{la.eq.lemma_3.3} - 
    \begin{equation*}
    \begin{gathered}
    OPT(G_k) \leq
    (1 + 16 \epsilon)\opt{G_k}{A, C} \leq \\
    (1 + 16 \epsilon)(1 + \frac{16 \sqrt{\rho}}{\epsilon^2})\alg{G_k}{A,C} \leq
    (1 + 16 \epsilon)(1 + \frac{16 \sqrt{\rho}}{\epsilon^2})ALG(G_k),
    \end{gathered}
    \end{equation*}
    and since $\rho < \epsilon^6$ and $\epsilon < 10^{-3}$ we get
    \[
    OPT(G_k) \leq
    (1+35\epsilon) ALG(G_k) \Rightarrow
    \frac{ALG(G_k)}{OPT(G_k)} \geq 1 - 33\epsilon,
    \]
    thereby concluding the proof. 
\end{enumerate}
\end{proof}

\begin{proof}[Proof of Proposition \ref{la.prop.4}]
Due to the fact that $ALG(G_i)$ recurses $B \cup C$ we have that $\alg{G_i}{B \cup C} = ALG(G_{i+1})$. Therefore,
\begin{equation*}
\begin{gathered}
ALG(G_i) = 
\alg{G_i}{A} + \alg{G_i}{A, B \cup C} + \alg{G_i}{B \cup C} = \\
\alg{G_i}{A} + \alg{G_i}{A, B \cup C} + ALG(G_{i+1}) \geq
\alg{G_i}{A, C} + ALG(G_{i+1}).
\end{gathered}
\end{equation*}
Applying Corollary \ref{la.cor.0} to $ALG(G_i)$'s arrangement results in
\begin{equation*}
\begin{gathered}
\alg{G_i}{A,C} \geq 
\frac12 n W_{A, C} (1 - \frac{5 \sqrt{\rho}}{\epsilon^2}).
\end{gathered}
\end{equation*}
Combining the two inequalities concludes the proof.
\end{proof}

\begin{proof}[Proof of Proposition \ref{la.prop.8}]
We first observe that 
\begin{equation}
\label{la.eq.prop.8.0}
\begin{gathered}
OPT(G_i) = 
\opt{G_i}{A_i} + \opt{G_i}{A_i,B_i} + \opt{G_i}{A_i,C_i} + \opt{G_i}{B_i \cup C_i}.
\end{gathered}
\end{equation}

Consider $\opt{G_i}{B_i \cup C_i} = \sum_{e \in B_i \cup C_i} w_e y_e$. The value $y_e$ is comprised of nodes from $n_{A_i}$ and $n_{B_i \cup C_i}$. Therefore $\opt{G_i}{B_i \cup C_i} \leq n_{A_i} W_{B_i \cup C_i} + \optsin{G_{i+1}}$, since $\optsin{G_{i+1}}$ solves the instance defined by $B_i \cup C_i$ optimally. By Fact \ref{fact.avg_link} we have that $\optsin{G_{i+1}} \geq \frac13 n_{B_i \cup C_i} W_{B_i \cup C_i}$. Additionally, by Lemma \ref{general.lemma.1}, we have $n_{B_i \cup C_i} \geq \frac{1 - \sqrt{\rho_i}}{\sqrt{\rho_i}}  n_{A_i}$. Combining the above yields
\begin{equation}
\label{la.eq.prop.8.1}
\begin{gathered}
\opt{G_i}{B_i \cup C_i} \leq 
n_{A_i} W_{B_i \cup C_i} + \optsin{G_{i+1}} \leq \\
\frac{\sqrt{\rho_i}}{1 - \sqrt{\rho_i}} n_{B_i \cup C_i} W_{B_i \cup C_i}+ \optsin{G_{i+1}} \leq 
(1 + \frac{3 \sqrt{\rho_i}}{1 - \sqrt{\rho_i}}) \optsin{G_{i+1}}.
\end{gathered}
\end{equation}

Next consider $\opt{G_i}{A_i} + \opt{G_i}{A_i,B_i}$. Observe that $\opt{G_i}{A_i} + \opt{G_i}{A_i,B_i} \leq n_{V_i}(W_{A_i} + W_{A_i, B_i})$ since every edge may contribute at most $n_{V_i}$. By Lemma \ref{la.lemma.7.5.5} we have that $W_A + W_{A,B} \leq 2\frac{\sqrt{\rho}}{\epsilon^2}W_{A, C}$. Combining the above yields
\begin{equation}
\label{la.eq.prop.8.2}
\begin{gathered}
\opt{G_i}{A_i} + \opt{G_i}{A_i,B_i} \leq 
n_{V_i}(W_{A_i} + W_{A_i, B_i}) \leq
n_{V_i} \cdot 2\frac{\sqrt{\rho}}{\epsilon^2}W_{A, C}.
\end{gathered}
\end{equation}

Finally, consider $\opt{G_i}{A_i,C_i}$.  By Corollary \ref{la.cor.1} applied to $\optsin{G_i}$ we have
\begin{equation}
\label{la.eq.prop.8.3}
\begin{gathered}
\opt{G_i}{A_i,C_i} \leq 
\frac12 n_{V_i} W_{A_i, C_i} (1 + \frac{9 \sqrt{\rho}}{\epsilon^2}).
\end{gathered}
\end{equation}

Thus, overall we get
\begin{equation*}
\begin{gathered}
OPT(G_i) = 
\opt{G_i}{A_i} + \opt{G_i}{A_i,B_i} + \opt{G_i}{A_i,C_i} + \opt{G_i}{B_i \cup C_i} \leq \\
n_{V_i} \cdot 2\frac{\sqrt{\rho}}{\epsilon^2}W_{A, C} + \frac12 n_{V_i} W_{A_i, C_i} (1 + \frac{9 \sqrt{\rho}}{\epsilon^2}) +  (1 + \frac{3 \sqrt{\rho_i}}{1 - \sqrt{\rho_i}}) \optsin{G_{i+1}} = \\
\frac12 n_{V_i} W_{A_i, C_i} (1 + \frac{13 \sqrt{\rho}}{\epsilon^2}) + (1 + \frac{3 \sqrt{\rho_i}}{1 - \sqrt{\rho_i}}) \optsin{G_{i+1}},
\end{gathered}
\end{equation*}
where the first equality follows from equality \ref{la.eq.prop.8.0} and the first inequality follows from inequalities \ref{la.eq.prop.8.1}, \ref{la.eq.prop.8.2} and \ref{la.eq.prop.8.3}. Finally, since $\rho_i < \epsilon^6$ and $\epsilon < \frac12$ we have $\frac{3 \sqrt{\rho_i}}{1 - \sqrt{\rho_i}} \leq 4 \sqrt{\rho_i}$ which in turn yields
\[
OPT(G_i) \leq
\frac12 n_{V_i} W_{A_i, C_i} (1 + \frac{13 \sqrt{\rho}}{\epsilon^2}) + (1 + 4 \sqrt{\rho_i}) \optsin{G_{i+1}}.
\]
Setting $\beta_i = \frac12 n_{V_i} W_{A_i, C_i} (1 + \frac{13 \sqrt{\rho}}{\epsilon^2})$ and $\gamma_i = 1 + 4 \sqrt{\rho_i}$ concludes the proof.
\end{proof}

\begin{proof}[Proof of Lemma \ref{la.lemma.7.5.5}]
We observe that trivially have that $W_A \leq \frac12 n_A^2 D_V$ and $W_{A,B} \leq n_A n_B D_V$ (recall that $D_V$ denotes the diameter of $V$). By Lemma \ref{general.lemma.1} we have that $n_A, n_B \leq \sqrt{\rho} n$ and that $n_C \geq (1 - \sqrt{\rho})n$ and therefore 
\begin{equation*}
\begin{gathered}
W_A + W_{A,B} \leq 
D_V n_A (\frac12 n_A + n_B)  \leq 
D_V n_A (1.5 \sqrt{\rho} n) \leq
D_V n_A n_C \cdot \frac{1.5 \sqrt{\rho}}{1 - \sqrt{\rho}}.
\end{gathered}
\end{equation*}
Finally, we note that by the definition of $C$ all weights between $A$ and $C$ are at least $\epsilon^2 D_V$ and therefore $W_{A,C} \geq \epsilon^2 n_A n_C D_V$. Combining this with the above inequalities yields
\begin{equation*}
\begin{gathered}
W_A + W_{A,B} \leq
\frac{1.5\sqrt{\rho}}{1 - \sqrt{\rho}} n_A n_C D_V \leq
\frac{1.5\sqrt{\rho}}{\epsilon^2(1 - \sqrt{\rho})} W_{A, C} \leq
2 \frac{\sqrt{\rho}}{\epsilon^2}W_{A, C},
\end{gathered}
\end{equation*}
where the third inequality follows since $\frac{1}{1 - \sqrt{\rho}} \leq \frac43$ (since $\rho < \epsilon^6$ and $\epsilon < \frac14$).
\end{proof}

\begin{proof}[Proof of Lemma \ref{la.lemma.11}]
Consider the algorithm $ALG_{d-w}$. It has a single loop that calls $PT(G, \Phi, \epsilon_{err} = \epsilon^9)$ and computes the value $\sum w_e \widehat{y}_e$ for the outputted partition. Computing the partition can be done in time $O(n^2)$. 

We consider Theorem \ref{general.thm.ggr} in order to bound the running time. We note that in our case the number of partition sets $k = \frac{1}{\epsilon}$. Therefore, the running time of $PT(G, \Phi, \epsilon_{err} = \epsilon^9)$ is bounded by
\[
\exp\big( \log(\frac{1}{\epsilon^9 }) \cdot (\frac{O(1)}{\epsilon^9})^{\frac{1}{\epsilon}+1}\big) + O(\frac{\log{\frac{1}{\epsilon^{10} }}}{\epsilon^{18}}) \cdot n.
\]

Consider the for loop within $ALG_{d-w}$. Every value $\beta_{j,j'}$ for a given pair $j, j'$ can have $\frac{1}{\epsilon^7}$ different values. Both $j$ and $j'$ can have $\frac{1}{\epsilon}$ values each. Therefore, the loop runs for $(\frac{1}{\epsilon^7})^{\frac{1}{\epsilon^2}}$ iterations. Therefore, the total running time of the algorithm is bounded by
\begin{equation*}
\begin{gathered}
(\frac{1}{\epsilon^7})^{\frac{1}{\epsilon^2}} \cdot \Big(\exp\big( \log(\frac{1}{\epsilon^9 }) \cdot (\frac{O(1)}{\epsilon^9})^{\frac{1}{\epsilon}+1}\big) + O(\frac{\log{\frac{1}{\epsilon^{10} }}}{\epsilon^{18}}) \cdot n + O(n^2)\Big) = 
(\frac{1}{\epsilon^7})^{\frac{1}{\epsilon^2}} \cdot O(n^2).
\end{gathered}
\end{equation*}

\end{proof}

\begin{proof}[Proof of Theorem \ref{la.thm.running_time}]
We first observe that case (b)'s running time is engulfed by that of case (a) and thus we may assume that the algorithm terminates in case (a).

Next we consider each recursion step and observe that its running time is defined by the time it takes to find $A_i$. In order to bound this running time consider the proof of Lemma \ref{general.lemma.1} and observe that it is algorithmic; one may iterate over all points and check for each point the amount of nodes of distance $\leq 2D_V \sqrt{\rho_V}$ - all in time $O(n^2)$ (which is linear in the size of the input). By Remark \ref{la.remark.loglog} we are guaranteed that the number of recursion steps is $O(  \log n)$ - summing to $O(n^2  \log n)$.

Therefore, together with Lemma \ref{la.lemma.11} (that bounds the running time of case (a)) we get that $ALG$ runs in time $O(n^2  \log n)$ plus the running time of $ALG_{d-w}$ (i.e., $f(\frac{1}{\epsilon}) \cdot O(n^2)$) which together yields an EPRAS.
\end{proof}


\section{Deferred Proofs of Section \ref{hc.sec}}

\begin{proof}[Proof of Proposition \ref{hc.proposition.2}]
Let $V$ denote the nodes of $G_k$ and $n_V = |V|$. Since we are in case (a) we have that $\rho_V \geq \epsilon^2$ and since $\rho_V = \frac{W_V}{n_V^2 D_V}$ we have that $\frac{W_V}{D_V} \geq n_V^2 \epsilon^2$. We argue that the instance has not-all-small-weights for $c_0 = c_1 = \epsilon^2$. Indeed, otherwise the total weight would be bounded by
\[
\frac{W_V}{D_V} < 
(1 - \epsilon^2) \cdot  \epsilon^2 \cdot  {n_V \choose 2} + \epsilon^2 \cdot 1 \cdot  {n_V \choose 2} <
2\epsilon^2 {n_V \choose 2} \leq
\epsilon^2 n_V^2,
\]
contradicting our assumption. Therefore, by Theorem \ref{hc.theorem.1} with $\xi = O(\epsilon^5)$ we are guaranteed that 
\[
\frac{ALG(G_k)}{OPT(G_k)} \geq
1 - O(\frac{\xi}{c_0 \cdot c_1}) = 
1 - \epsilon, 
\]
thereby concluding the proof.
\end{proof}

\begin{proof}[Proof of Proposition \ref{hc.prop.3}]
Recall that $ALG(G_k)$ is defined such that it first clusters $A$ as a ladder (denoted by $T_A$), then clusters $C$ arbitrarily (denoted by $T_C$) and finally roots $T_C$ to the bottom of $T_A$. Therefore, by the definition of the HC objective, every weight that is incident to $A$, adds to the objective its weight times $n_C$ and thus $ALG(G_k) \geq (W_A + W_{A,C})n_C$. By Lemma \ref{general.lemma.1} we have that $n_C \geq (1 - \sqrt{\rho})n$ and therefore $ALG(G_k) \geq (1 - \sqrt{\rho}) n (W_A + W_{A,C})$. 

Due to the fact that we are in case (b) we have that $W_C \leq 16 \epsilon W_V$ and therefore $W_A + W_{A,C} \geq (1 - 16 \epsilon )W_V$. Combined with what we explained above, we get $ALG(G_k) \geq (1 - \sqrt{\rho})(1 - 16 \epsilon ) n W_V$. Trivially, $OPT(G_k) \leq Wn$. Finally, again since we are in case (b) we have $\rho < \epsilon^2$. Overall,
\[
ALG(G_k) \geq 
(1 - \sqrt{\rho})(1 - 16 \epsilon ) OPT(G_k) \geq
(1 - \epsilon)(1 - 16 \epsilon )OPT(G_k) \geq
(1 - 17 \epsilon)OPT(G_k).
\]
\end{proof}

\begin{proof}[Proof of Lemma \ref{hc.lemma.4}]
Due to the fact that $ALG(G_i)$ places $A$ as a ladder and $C$ at the bottom of the ladder we have that $\alg{G_i}{A} + \alg{G_i}{A,C} \geq n_C(W_A + W_{A,C})$. On the other hand by Lemma \ref{general.lemma.1} we have that $n_C \geq n (1 - \sqrt{\rho})$. Finally, due to the fact that $ALG(G_i)$ recurses on $C$ we have that $\alg{G_i}{C} = ALG(G_{i+1})$. Therefore
\begin{equation*}
\begin{gathered}
ALG(G_i) = 
\alg{G_i}{A} + \alg{G_i}{A,C} + \alg{G_i}{C} \geq\\
n (W_A + W_{A,C})(1 - \sqrt{\rho}) + \alg{G_i}{C} = 
n (W_A + W_{A,C})(1 - \sqrt{\rho}) + ALG(G_{i+1}),
\end{gathered}
\end{equation*}
thereby concluding the proof.
\end{proof}

\begin{proof}[Proof of Lemma \ref{hc.lemma.6}]
We first observe that 
\begin{equation*}
\begin{gathered}
OPT(G_i) = 
\opt{G_i}{A_i} + \opt{G_i}{A_i,C_i} + \opt{G_i}{C_i} \leq
n_{V_i}(W_{A_i} + W_{A_i, C_i}) +  \opt{G_i}{C_i},
\end{gathered}
\end{equation*}
since every edge may contribute at most $n_{V_i}$. Consider $\opt{G_i}{C_i} = \sum_{e \in C_i} w_e T_e$. The value $T_e$ is comprised of nodes from $n_{A_i}$ and $n_{C_i}$. Therefore $\opt{G_i}{C_i} \leq n_{A_i} W_{C_i} + \optsin{G_{i+1}}$, since $\optsin{G_{i+1}}$ solves the instance defined by $C_i$ optimally. Therefore $OPT(G_i) \leq n_{V_i}(W_{A_i} + W_{A_i, C_i}) + n_{A_i} W_{C_i} + \optsin{G_{i+1}}$.

By Fact \ref{fact.avg_link} we have that $\optsin{G_{i+1}} \geq \frac23 n_{C_i} W_{C_i}$. Additionally, by Lemma \ref{general.lemma.1}, we have $n_{C_i} \geq \frac{1 - \sqrt{\rho_i}}{\sqrt{\rho_i}} \cdot n_{A_i}$. Combining these inequalities with the above yields
\begin{equation*}
\begin{gathered}
OPT(G_i) \leq 
n_{V_i}(W_{A_i} + W_{A_i, C_i}) + n_{A_i} W_{C_i} + \optsin{G_{i+1}} \leq \\
n_{V_i}(W_{A_i} + W_{A_i, C_i}) + \frac{\sqrt{\rho_i}}{1 - \sqrt{\rho_i}} n_c W_C + \optsin{G_{i+1}} \leq \\
n_{V_i}(W_{A_i} + W_{A_i, C_i}) + (1 + \frac{\frac32 \sqrt{\rho_i}}{1 - \sqrt{\rho_i}}) \optsin{G_{i+1}}.
\end{gathered}
\end{equation*}
Finally, since $\rho_i < \epsilon^2$ and $\epsilon < \frac12$ we have $\frac{\frac32 \sqrt{\rho_i}}{1 - \sqrt{\rho_i}} \leq 2 \sqrt{\rho_i}$ and therefore 
\[
OPT(G_i) \leq \beta_i + \gamma_i \optsin{G_{i+1}}
\]
for $\beta_i = n_{V_i}(W_{A_i} + W_{A_i, C_i})$ and $\gamma_i = 1 + 2 \sqrt{\rho_i}$.
\end{proof}

To prove Theorem \ref{hc.thm.last}, we will need to show that $\Pi_{j=0}^{i-1} \gamma_j$ converges. Fortunately, the weighted densities $\rho_i$ increase fast enough to ensure this. All proofs of this subsection are deferred to the Appendix.

\begin{lemma}
\label{hc.lemma.7}
For all $i = 0, 1, \ldots, k-1$ we are guaranteed that $\rho_{i+1} \geq 4 \rho_i$.
\end{lemma}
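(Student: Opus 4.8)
The plan is to mirror the argument for the Linear Arrangement analogue (Lemma \ref{la.lemma.9}), only the bookkeeping is simpler here because the recursive instance is the core $C_i$ itself rather than an extended core $B_i \cup C_i$. Fix $i \in \{0,\dots,k-1\}$ and write $V = V_i$ for the vertex set of $G_i$, with core $C$ (so that $V_{i+1} = C$ and $\rho_{i+1} = \rho_C$). Since the algorithm performs a recursive call at step $i$, it reached case (c) of Algorithm \ref{alg.hc_alg}; hence two facts are available: first $\rho_i < \epsilon^2$ (otherwise we would be in case (a)), and second $W_C \ge 16\epsilon\, W_V$ (otherwise we would be in case (b)).

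Next I would invoke Lemma \ref{general.lemma.1} on $G_i$, which guarantees the defining properties of the core: $D_C \le 4 D_V \sqrt{\rho_i}$ and $n_C \ge (1-\sqrt{\rho_i})\, n_V$; together with the trivial bound $n_C \le n_V$. Plugging these into the definition of weighted density gives
\[
\rho_{i+1} \;=\; \frac{W_C}{n_C^2\, D_C}
\;\ge\; \frac{16\epsilon\, W_V}{n_V^2 \cdot 4\sqrt{\rho_i}\, D_V}
\;=\; \frac{4\epsilon}{\sqrt{\rho_i}}\cdot \frac{W_V}{n_V^2\, D_V}
\;=\; \frac{4\epsilon}{\sqrt{\rho_i}}\,\rho_i
\;=\; 4\epsilon\sqrt{\rho_i},
\]
where the inequality uses $W_C \ge 16\epsilon W_V$, $n_C \le n_V$ and $D_C \le 4\sqrt{\rho_i} D_V$. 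Finally, since case (c) forces $\rho_i < \epsilon^2$, i.e.\ $\sqrt{\rho_i} < \epsilon$, we conclude $\rho_{i+1} \ge 4\epsilon\sqrt{\rho_i} \ge 4\sqrt{\rho_i}\cdot\sqrt{\rho_i} = 4\rho_i$, which is the claim. I would also record the intermediate bound $\rho_{i+1} \ge 4\epsilon\sqrt{\rho_i}$ explicitly, since that is the inequality referenced later (in the spirit of Remark \ref{la.remark.loglog}) to bound the number of recursion steps by $O(\log\log n)$.

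There is essentially no real obstacle in this proof; the only point requiring care is identifying precisely which branch of Algorithm \ref{alg.hc_alg} is in force. The index range $i \le k-1$ is exactly the set of steps at which a recursive call is made, so case (c) — and therefore both $\rho_i < \epsilon^2$ and $W_{C_i} \ge 16\epsilon\, W_{V_i}$ — is guaranteed; without the case-(c) weight lower bound the density could fail to grow. Everything else is a one-line substitution of the core's diameter and size bounds from Lemma \ref{general.lemma.1}.
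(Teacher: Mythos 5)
Your proof is correct and follows the paper's own argument exactly: both establish the intermediate bound $\rho_{i+1} \ge 4\epsilon\sqrt{\rho_i}$ from the case-(c) guarantee $W_{C_i} \ge 16\epsilon W_{V_i}$ together with the core bounds $n_{C_i} \le n_{V_i}$ and $D_{C_i} \le 4\sqrt{\rho_i}\, D_{V_i}$ from Lemma \ref{general.lemma.1}, and then conclude via $\sqrt{\rho_i} < \epsilon$. You merely write out the substitution that the paper leaves implicit.
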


\begin{proof}[Proof of Lemma \ref{hc.lemma.7}]
First observe that $\rho_{i+1} \geq 4 \epsilon \sqrt{\rho_i}$, which follows from the fact that $W_{V_{i+1}} \geq 16 \epsilon W_{V_i}$, $n_{V_{i+1}} \leq n_{V_i}$ and $D_{V_{i+1}} \leq 4D_{V_{i}}\sqrt{\rho_i}$, which all follow from Lemma \ref{general.lemma.1} and since case (c) applies. Next, also due to the fact that case (c) applies, we have that $\rho_i \leq \epsilon^2$ and therefore $4 \epsilon \sqrt{\rho_i} \geq 4 \rho_i$, thereby concluding the proof.
\end{proof}

\noindent We are now ready to show that $\Pi_{j=0}^{i-1} \gamma_j$ converges. 

\begin{corollary}
\label{hc.cor.7.5}
For $\gamma_i =  1 + 2 \sqrt{\rho_i}$ we have $\Pi_{j=0}^{i-1} \gamma_j \leq 1 + 3\sqrt{\rho_i}$.
\end{corollary}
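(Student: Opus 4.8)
The plan is to follow the same three-step template used for the linear-arrangement analogue in Lemma \ref{la.lemma.10}. First I would pass to an exponential upper bound: since $1 + x \le e^x$ for every $x \ge 0$, we get $\Pi_{j=0}^{i-1} \gamma_j = \Pi_{j=0}^{i-1}(1 + 2\sqrt{\rho_j}) \le \exp\big(2\sum_{j=0}^{i-1}\sqrt{\rho_j}\big)$. The case $i = 0$ is trivial since the empty product is $1 \le 1 + 3\sqrt{\rho_0}$, so assume $i \ge 1$.

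The substantive step is to bound $\sum_{j=0}^{i-1}\sqrt{\rho_j}$ by $\sqrt{\rho_i}$. Here I would invoke Lemma \ref{hc.lemma.7}, which gives $\rho_{j+1} \ge 4\rho_j$ and hence $\sqrt{\rho_{j+1}} \ge 2\sqrt{\rho_j}$; thus the sequence $(\sqrt{\rho_j})_j$ increases at least geometrically with ratio $2$, so $\sqrt{\rho_j} \le 2^{-(i-1-j)}\sqrt{\rho_{i-1}}$ for all $j \le i-1$. Summing the resulting geometric tail gives $\sum_{j=0}^{i-1}\sqrt{\rho_j} \le 2\sqrt{\rho_{i-1}}$, and applying Lemma \ref{hc.lemma.7} once more in the form $\sqrt{\rho_{i-1}} \le \tfrac12\sqrt{\rho_i}$ yields $\sum_{j=0}^{i-1}\sqrt{\rho_j} \le \sqrt{\rho_i}$. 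Therefore $\Pi_{j=0}^{i-1}\gamma_j \le e^{2\sqrt{\rho_i}}$.

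Finally I would convert the exponential back to a linear bound using part (3) of Observation \ref{general.ob.0} with $k = 2$ and $\alpha = \sqrt{\rho_i}$, which gives $e^{2\sqrt{\rho_i}} < 1 + 3\sqrt{\rho_i}$; this is legitimate because $\sqrt{\rho_i} < \tfrac16$, which holds since case (c) applies at recursion step $i$ so $\rho_i < \epsilon^2$ and $\epsilon < 10^{-2}$. Chaining the three displayed inequalities completes the proof. I do not anticipate a real obstacle; the only points needing care are that the geometric-tail estimate genuinely uses the multiplicative gap $\rho_{j+1} \ge 4\rho_j$ from Lemma \ref{hc.lemma.7} rather than mere monotonicity, and that the final exponential-to-linear passage needs $\rho_i$ small, which is precisely what the case-(c) hypothesis supplies.
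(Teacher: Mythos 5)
Your proposal is correct and follows essentially the same route as the paper's proof: bound the product by $e^{2\sum_j \sqrt{\rho_j}}$, use the geometric growth $\rho_{j+1} \geq 4\rho_j$ from Lemma \ref{hc.lemma.7} to collapse the sum to $\sqrt{\rho_i}$, and convert back via Observation \ref{general.ob.0}. You merely make explicit the geometric-tail estimate (and the trivial $i=0$ case) that the paper leaves implicit in the phrase ``$\sqrt{\rho_j}$ are exponentially increasing.''
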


\begin{proof}[Proof of Corollary \ref{hc.cor.7.5}]
Observe that 
\[
\Pi_{j=0}^{i-1} (1 + 2\sqrt{\rho_j}) \leq
e^{2 \cdot \sum_j \sqrt{\rho_j}} \leq
e^{2 \sqrt{\rho_i}} \leq
1 + 3 \sqrt{\rho_i},
\]
where the first inequality follows from Observation \ref{general.ob.0}, the second follows since $\sqrt{\rho_j}$ are exponentially increasing (Lemma \ref{hc.lemma.7}) and the third inequality follows again by Observation \ref{general.ob.0} combined with the fact that $\rho < \epsilon^2$ and $\epsilon < 10^{-2}$.
\end{proof}

\noindent Next we leverage the former lemmas to bound $\min_i \{\frac{\alpha_i}{\beta_i \Pi_{j=0}^{i-1} \gamma_j}\}$ and $\frac{ALG(G_k)}{(\Pi_{i=0}^{k-1} \gamma_i)OPT(G_k)}$.

\begin{proof}[Proof of Proposition \ref{hc.prop.11}]
By the definitions of $\alpha_i$, $\beta_i$ and $\gamma_i$ we have
\begin{equation*}
\begin{gathered}
\frac{\alpha_i}{\beta_i \Pi_{j=0}^{i-1} \gamma_j} = 
\frac{1 - \sqrt{\rho_i}}{ \Pi_{j=0}^{i-1} \gamma_j} \geq 
\frac{1 - \sqrt{\rho_i}}{1 + 3 \sqrt{\rho_i}} \geq
1 - 4 \sqrt{\rho_i},
\end{gathered}
\end{equation*}
where the first equality is due to the definitions of $\alpha_i$ and $\beta_i$ and the first inequality is due to Corollary \ref{hc.cor.7.5} and the definition of $\gamma_i$. Therefore, since $\rho_i$'s only increase, 
\[
\min_i \{\frac{\alpha_i}{\beta_i \Pi_{j=0}^{i-1} \gamma_j}\} \geq
1 - 4 \sqrt{\rho_{k-1}} \geq
1 - 4 \epsilon,
\]
where the last inequality follows since $\rho_{k-1} < \epsilon^2$ - thereby concluding the proof.
\end{proof}

\begin{proof}[Proof of Proposition \ref{hc.prop.12}]
If $k=0$ then we have no recursion and by the proof holds by Propositions \ref{hc.proposition.2} and \ref{hc.prop.3}. Otherwise 
\[
\frac{ALG(G_k)}{(\Pi_{i=0}^{k-1} \gamma_i)OPT(G_k)} \geq
\frac{1 - 17 \epsilon}{(1 + 2 \sqrt{\rho_{k-1}})(1 + 3 \sqrt{\rho_{k-1}})} \geq
1 - 23 \epsilon,
\]
where the first inequality follows from Corollary \ref{hc.cor.7.5} and the definition of $\gamma_{k-1}$ and thesecond inequality follows since $\rho_{k - 1} < \epsilon^2$ (since we recursed to step $k$) - thereby concluding the proof.
\end{proof}

\begin{proof}[Proof of Theorem \ref{hc.thm.last}]
Follows from Observation \ref{general.ob.1} and Propositions \ref{hc.prop.11} and \ref{hc.prop.12}.
\end{proof}

\begin{lemma}
\label{hc.lemma.loglog}
The number of recursion steps performed by Algorithm \ref{alg.hc_alg} is bounded by $O(\log \log n)$.
\end{lemma}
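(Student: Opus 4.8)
The plan is to follow the weighted densities $\rho_i$ of the recursive instances $G_0,\dots,G_k$: I will show $\rho_0=\Omega(1/n)$ and that, as long as the recursion continues, each step at least square-roots the ``gap'' between $\rho_i$ and the threshold $\epsilon^2$ on a logarithmic scale, so that only $O(\log\log n)$ steps can occur before case (a) triggers. If $\rho_0\ge\epsilon^2$ then $k=0$ and there is nothing to prove, so assume $\rho_0<\epsilon^2$; similarly if $k\le 1$ we are done.

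\textbf{Base density bound.} First I would lower bound $\rho_0=\rho_V$. Let $a,b\in V$ realize the diameter, so $w_{a,b}=D_V$. By the triangle inequality $w_{a,c}+w_{b,c}\ge w_{a,b}=D_V$ for every $c\notin\{a,b\}$, so summing the weights of all edges incident to $a$ or $b$ gives $W_V\ge D_V+(n-2)D_V=(n-1)D_V$, and hence $\rho_0=\frac{W_V}{n^2D_V}\ge\frac{n-1}{n^2}\ge\frac{1}{2n}$ for $n\ge2$. (This holds for every $G_i$, but only $G_0$ is needed.)

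\textbf{The density recurrence.} For each recursive step $i=0,\dots,k-1$ the algorithm is in case (c), so $\rho_i<\epsilon^2$, and — exactly as established inside the proof of Lemma \ref{hc.lemma.7} — $\rho_{i+1}\ge 4\epsilon\sqrt{\rho_i}$. Writing $r_i=\rho_i/\epsilon^2\in(0,1)$ this reads $r_{i+1}\ge 4\sqrt{r_i}$, and setting $s_i=\log_2(1/r_i)>0$ it becomes $s_{i+1}\le \tfrac12 s_i-2<\tfrac12 s_i$. Iterating gives $s_i< s_0/2^{\,i}$ for $i\le k-1$. Moreover, for $i\le k-2$ we have $s_{i+1}>0$, which together with $s_{i+1}\le\tfrac12 s_i-2$ forces $s_i>4$; combining with $s_{k-2}<s_0/2^{\,k-2}$ yields $2^{\,k-2}<s_0/4$, i.e. $k\le \log_2 s_0+O(1)$. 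Finally $s_0=\log_2(\epsilon^2/\rho_0)\le\log_2(2\epsilon^2 n)\le\log_2(2n)$, so $k\le\log_2\log_2(2n)+O(1)=O(\log\log n)$, treating $\epsilon$ as a constant.

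\textbf{Main subtlety.} The only delicate point — and the reason the bound is $\log\log n$ rather than $\log n$ — is that one must invoke the \emph{strengthened} inequality $\rho_{i+1}\ge 4\epsilon\sqrt{\rho_i}$ from the proof of Lemma \ref{hc.lemma.7}, not merely the clean statement $\rho_{i+1}\ge 4\rho_i$: it is the square root that makes the log-scale gap $s_i$ contract by a constant factor each step, producing the extra logarithm. Everything else is the one-line metric argument for the base density and a routine geometric-decay estimate, so I do not anticipate further obstacles. (The corresponding statement for the LA algorithm, Remark \ref{la.remark.loglog}, is handled identically with $\rho_{i+1}\ge 4\sqrt{\rho_i}$ in place of $\rho_{i+1}\ge 4\epsilon\sqrt{\rho_i}$, which only improves the constants.)
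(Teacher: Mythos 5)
Your proof is correct and follows essentially the same route as the paper's: the lower bound $\rho_0 \geq \frac{1}{2n}$ via the triangle inequality on a diameter-realizing pair, the strengthened recurrence $\rho_{i+1} \geq 4\epsilon\sqrt{\rho_i}$ extracted from the proof of Lemma \ref{hc.lemma.7}, and the termination condition $\rho_i < \epsilon^2$ during recursion. Your logarithmic change of variables is just a cleaner bookkeeping of the paper's direct iteration $\rho_0 \leq (\tfrac{1}{16\epsilon^2})^{2^{k}-1}\rho_{k-1}^{2^{k}}$, so no substantive difference.
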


\begin{proof}[Proof of Lemma \ref{hc.lemma.loglog}]
Let $\rho_0$ denote the density of the original graph $G_0 = (V_0,W_0)$ and let $n_0 = |V_0|$. By the triangle inequality we have $W_0 \geq D_{V_0}(n_0 - 1)$. We can see this by considering $u,v$ with $w_{u,v} = D_{V_0}$ and any $w \neq u,v$. By the triangle inequality $w_{u,w} + w_{v,w} \geq D_{V_0}$. Summing over all $w$ and adding this to $w_{u,v}$ results in $W_0 \geq D_{V_0}(n_0 - 1)$. Therefore $\rho_0 = \frac{W_0}{n_0^2 D_{V_0}} \geq \frac{1}{n_0} - \frac{1}{n_0^2} \geq \frac{1}{2n_0}$. 

As a bi-product of the proof of Lemma \ref{hc.lemma.7} we are guaranteed that $\rho_{i+1} \geq 4 \epsilon \sqrt{\rho_i}$. Therefore $\rho_0 \leq (\frac{1}{16 \epsilon^2})^{2^{k}-1} \cdot \rho_{k-1}^{2^k}$. On the other hand, by the definition of our algorithm, if we performed a recursion step then $\rho_i \leq \epsilon^2$. Thus, if we consider $k-1$ as the last recursion step, we have that $\rho_{k-1} \leq \epsilon^2$. 

Combining all of the above yields 
\[
\frac{1}{2n_0} \leq 
\rho_0 \leq 
(\frac{1}{16 \epsilon^2})^{2^{k}-1} \cdot \rho_{k-1}^{2^k} \leq 
(\frac{1}{16 \epsilon^2})^{2^{k}-1} \cdot (\epsilon^2)^{2^k}.
\] 
Extracting $k$ yields $k = O(\log \log n)$. 
\end{proof}

\begin{proof}[Proof of Theorem \ref{hc.thm.running_time}]
We first observe that case (b)'s running time is engulfed by that of case (a) and thus we may assume that the algorithm terminates in case (a).

Next we consider each recursion step and observe that its running time is defined by the time it takes to find $A_i$. In order to bound this running time consider the proof of Lemma \ref{general.lemma.1} and observe that it is algorithmic; one may iterate over all points and check for each point the amount of nodes of distance $\leq 2D_V \sqrt{\rho_V}$ - all in time $O(n^2)$ (which is linear in the input size). By Lemma \ref{hc.lemma.loglog} we are guaranteed that the number of recursion steps is $O( \log \log n)$ - summing to $O(n^2 \log \log n)$.

Therefore, together with Theorem \ref{hc.theorem.1} (that bounds the running time of case (a)) we get that $ALG$ runs in time $O(n^2  \log \log n)$ plus the running time of $ALG_{d-w}$ (i.e., $f(\frac{1}{\epsilon^5}) \cdot O(n^2)$) which together yields an EPRAS.
\end{proof}

\end{document}